\theoremstyle{plain}
\newtheorem{theorem}{Theorem}[section]
\newtheorem{proposition}[theorem]{Proposition}
\newtheorem{lemma}[theorem]{Lemma}
\newtheorem{corollary}[theorem]{Corollary}
\theoremstyle{definition}
\newtheorem{definition}[theorem]{Definition}
\theoremstyle{remark}
\newenvironment{CompactItemize}{
\begin{list}{$\bullet$}{%
\setlength{\leftmargin}{10pt}
\setlength{\itemindent}{2pt}
\setlength{\topsep}{-1pt}
\setlength{\itemsep}{-1pt}
}}
{\end{list}}
\newaliascnt{fact}{theorem}
\crefname{fact}{Fact}{Facts}
\newaliascnt{conjecture}{theorem}
\crefname{conjecture}{Conjecture}{Conjectures}
\newaliascnt{claim}{theorem}
\crefname{claim}{Claim}{Claims}
\newaliascnt{question}{theorem}
\crefname{question}{Question}{Questions}
\newaliascnt{exercise}{theorem}
\crefname{exercise}{Exercise}{Exercises}
\newaliascnt{example}{theorem}
\crefname{example}{Example}{Examples}
\newaliascnt{notation}{theorem}
\crefname{notation}{Notation}{Notations}
\newaliascnt{problem}{theorem}
\crefname{problem}{Problem}{Problems}
\newcommand{\norm}[1]{\lVert#1\rVert}
\def\E{\mathbb E}
\newcommand{\Var}{{\bf Var}}
\newcommand{\R}{\mathbb R}
\icmltitlerunning{Fast Private Kernel Density Estimation via Locality Sensitive Quantization}
\begin{document}

\twocolumn[
\icmltitle{Fast Private Kernel Density Estimation via Locality Sensitive Quantization}

% It is OKAY to include author information, even for blind
% submissions: the style file will automatically remove it for you
% unless you've provided the [accepted] option to the icml2023
% package.

% List of affiliations: The first argument should be a (short)
% identifier you will use later to specify author affiliations
% Academic affiliations should list Department, University, City, Region, Country
% Industry affiliations should list Company, City, Region, Country

% You can specify symbols, otherwise they are numbered in order.
% Ideally, you should not use this facility. Affiliations will be numbered
% in order of appearance and this is the preferred way.
%\icmlsetsymbol{equal}{*}

\begin{icmlauthorlist}
\icmlauthor{Tal Wagner}{amazon}
\icmlauthor{Yonatan Naamad}{amazon}
\icmlauthor{Nina Mishra}{amazon}
\end{icmlauthorlist}

\icmlaffiliation{amazon}{Amazon}

\icmlcorrespondingauthor{Tal Wagner}{tal.wagner@gmail.com}

% You may provide any keywords that you
% find helpful for describing your paper; these are used to populate
% the "keywords" metadata in the PDF but will not be shown in the document
\icmlkeywords{Machine Learning, ICML}

\vskip 0.3in
]

% this must go after the closing bracket ] following \twocolumn[ ...

% This command actually creates the footnote in the first column
% listing the affiliations and the copyright notice.
% The command takes one argument, which is text to display at the start of the footnote.
% The \icmlEqualContribution command is standard text for equal contribution.
% Remove it (just {}) if you do not need this facility.

\printAffiliationsAndNotice{}  % leave blank if no need to mention equal contribution
%\printAffiliationsAndNotice{\icmlEqualContribution} % otherwise use the standard text.

%\input{abstract.tex}

\begin{abstract}
We study efficient mechanisms for differentially private kernel density estimation (DP-KDE). 
Prior work for the Gaussian kernel described algorithms that run in time exponential in the number of dimensions $d$.  This paper breaks the exponential barrier, and shows how the KDE can privately be approximated in time linear in $d$, making it feasible for high-dimensional data. 
We also present improved bounds for low-dimensional data.
%, when it is contained in a bounded region.

Our results are obtained through a general framework, which we term Locality Sensitive Quantization (LSQ), for constructing private KDE mechanisms where existing KDE approximation techniques can be applied. 
It lets us leverage several efficient non-private KDE methods---like Random Fourier Features, the Fast Gauss Transform, and Locality Sensitive Hashing---and ``privatize'' them in a black-box manner. 
Our experiments demonstrate that our resulting DP-KDE mechanisms are fast and accurate on large datasets in both high and low dimensions.
%on large and high-dimensional datasets. 
\end{abstract}

\section{Introduction}

Private analysis of massive-scale data is a prominent current challenge in computing and machine learning. 
On the one hand, it is widely acknowledged that big datasets drive advances and progress in many important problem spaces. 
On the other hand, when the data contains sensitive information such as personal or medical details, it is often necessary to preserve the privacy of individual dataset records. Scalable methods for private computations are therefore crucial for progress in medical, financial and many other domains.

Differential privacy (DP) \cite{dwork2006calibrating} is a rigorous and powerful notion of privacy-preserving computation, widely accepted in machine learning. 
Unfortunately, it often comes at a high computational cost, and many DP computations are dramatically less efficient than their non-private counterparts. This makes them infeasible for use on data of the size and dimensionality that matches present-day scale.

\textbf{DP-KDE.} 
In this paper we focus on private density estimation, a fundamental problem with numerous applications in data analysis and machine learning.
A popular way to convert a collection of data points into a smoothed probability distribution is the kernel density method, wherein a certain probability measure---say, a Gaussian---is centered at each data point, and the mixture of these measures is formed over the space. 
The kernel density estimate (KDE) at every point $y$ is the mean of all such Gaussians evaluated at $y$. 
This method has a long history in statistics and machine learning (e.g., \cite{shawe2004kernel,hofmann2008kernel}). Under private computation, it has recently been used for private crowdsourcing and  location sharing \cite{huai2019privacy,cunningham2021privacy}.

The associated algorithmic task is: given a dataset $X\subset\R^d$,
return a map $\hat e_X:\R^d\rightarrow\R$ that approximates the KDE map $y\mapsto\tfrac1{|X|}\sum_{x\in X}e^{-\norm{x-y}_2^2/\sigma^2}$. In DP-KDE, $\hat e_X$ must also be private w.r.t.~$X$, no matter how many times it is queried.

Absent privacy limitations, the Gaussian KDE at a point $y$ can be evaluated in time $O(nd)$, where $n$ is the number of data points and $d$ is their dimension. Many efficient approximation methods have been developed to speed this up even further for large-scale data (e.g., \cite{greengard1991fast,rahimi2007random,charikar2017hashing,phillips2020near}.
In sharp contrast, in the DP setting, existing methods for privately estimating the Gaussian KDE have running time exponential in $d$ \cite{hall2013differential,hall2013new,wang2016differentially,alda2017bernstein}. 
This makes them infeasible in many important cases where KDE is utilized in high-dimensional feature spaces.

In this paper, we close this gap by systematically studying efficient mechanisms for DP-KDE, and obtain improved results in both the high and low dimensional regimes.

\subsection{Our Results}

\begin{table*}
\caption{\small $\epsilon$-DP KDE function release mechanisms for the Gaussian kernel, that satisfy $(\alpha,\eta$)-approximation (\Cref{def:addapx}). The dataset contains $n$ points in $\R^d$. 
Where $n$ appears in the curator time, note that it must be at least as large as the sample complexity.
(*) SmallDB, PMW and LSQ-FGT assume that all points lie in a ball of radius $\Phi$. (**) EvenTrig and Bernstein assume that all points lie in $[-1,1]^d$, and their performance depends on the bandwidth $\sigma$ under this scaling. ($\dagger$) For EvenTrig, $\eta=\exp(-\Omega(n^{d/(2d+O(\sigma^2))} ))$.}
{\renewcommand{\arraystretch}{2}
\begin{center}
\begin{tabular}{llcccl}
\toprule
 & \sc Mechanism & \sc Curator time & \sc Sample complexity & \\
\midrule
\multirow{4}{*}{Prior}  &  SmallDB  & $\exp\left(\frac{\min\{\sqrt{d\log(1/\alpha)},1/\alpha\} \cdot d^2\log^2(\Phi/\alpha)}{\epsilon \cdot \alpha^2}\right)$ & $O\left(\frac{\min\{\sqrt{d\log(1/\alpha)},1/\alpha\} \cdot d\log(\Phi/\alpha)}{\epsilon \cdot \alpha^2}\right)$ & \scriptsize{(*)} \\
&  PMW &  $\tilde O\left(d\cdot\left(\Phi/\alpha\right)^d\right)$ & $\tilde O\left( \frac{d^2\log^2(\Phi/\alpha)}{\epsilon \cdot \alpha^3}\right)$ & \scriptsize{(*)} \\
& EvenTrig & $O\left(2^d+dn^{1+d/(2d+\Theta(\sigma^2))}\right)$ & $O\left(\frac{1}{(\epsilon \cdot \alpha)^{1+\Theta(d/\sigma^2)}}\right)$ & \scriptsize{(**), ($\dagger$)} \\
& Bernstein & $O\left(dn\cdot\left(2^d+(\tfrac{\epsilon \cdot n}{\log(1/\eta)})^{d/(d+\Theta(\sigma^2))}\right)\right)$ & $O\left(\frac{\log(1/\eta)}{\epsilon \cdot \alpha^{1+\Theta(d/\sigma^2)}}\right)$ & \scriptsize{(**)} \\
\midrule
\multirow{2}{*}{Ours} &  LSQ-RFF &  $O(dn\cdot\frac{\log(1/\eta)}{\alpha^2})$ & $O(\frac{\log(1/\eta)}{\epsilon \cdot \alpha^2})$  & \\
&  LSQ-FGT & $(dn+(\frac{\Phi}{\sqrt{d}})^d) \cdot O(\log(1/\alpha))^{d}\cdot \log(1/\eta)$ & $\frac{\log(1/\eta)}{\epsilon \cdot \alpha} \cdot (\log(1/\alpha))^{O(d)}$ & \scriptsize{(*)} \\
\bottomrule
\end{tabular}
\end{center}}
\label{tbl:main}
\end{table*}

We focus on the Gaussian kernel, although we will discuss other kernels as well. 
Our first result is an $\epsilon$-DP function release mechanism for Gaussian KDE (see \Cref{sec:kde,sec:dp} for formal definitions), whose running time is only linear in $d$, making it suitable for high-dimensional data. 

\begin{theorem}[Gaussian DP-KDE in high dimensions]\label{thm:gauhighdim}
%Let $\epsilon>0$ and $\alpha,\eta\in(0,1)$. 
There is an $\epsilon$-DP function release mechanism for $(\alpha,\eta)$-approximation of Gaussian KDE on datasets in $\R^d$ of size $n\geq O(\log(1/\eta)/(\epsilon\alpha^2))$, and:
\begin{CompactItemize} 
  \item The curator runs in time $O(nd\log(1/\eta)/\alpha^2)$.
  \item The output size is $O(d\log(1/\eta)/\alpha^2)$.
  \item The client runs in time $O(d\log(1/\eta)/\alpha^2)$. 
\end{CompactItemize}
\end{theorem}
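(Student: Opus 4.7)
The plan is to apply the LSQ framework with Random Fourier Features (RFF) for the Gaussian kernel. By Bochner's theorem, $k(x,y) = e^{-\norm{x-y}_2^2/\sigma^2} = \Exp_{w,b}[\phi_{w,b}(x)\phi_{w,b}(y)]$, where $\phi_{w,b}(x) = \sqrt{2}\cos(w^\top x + b)$, $w$ is drawn from an appropriately scaled Gaussian, $b$ is uniform on $[0,2\pi]$, and each feature lies in $[-\sqrt{2},\sqrt{2}]$. The mechanism draws $m = \Theta(\log(1/\eta)/\alpha^2)$ i.i.d.\ feature parameters $(w_j, b_j)$ (data-independent public randomness), and the curator computes and releases the noised averages $\tilde\phi_j = \bar\phi_j + \xi_j$, where $\bar\phi_j = \tfrac1n\sum_{x\in X}\phi_{w_j,b_j}(x)$ and $\xi_j \sim \mathrm{Lap}(2\sqrt{2}m/(n\epsilon))$ independent. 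The client evaluates $\hat e_X(y) = \tfrac1m\sum_j \tilde\phi_j\,\phi_{w_j,b_j}(y)$.

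Privacy is immediate from the Laplace mechanism: replacing one data point changes each $\bar\phi_j$ by at most $2\sqrt{2}/n$, so the $\ell_1$-sensitivity of the $m$-vector $(\bar\phi_j)_j$ is $2\sqrt{2}m/n$, and adding independent $\mathrm{Lap}(2\sqrt{2}m/(n\epsilon))$ noise yields $\epsilon$-DP. For accuracy, decompose the error at any fixed $y$ as $\hat e_X(y) - k_X(y) = A + B$, where $A = \tfrac1m\sum_j \bar\phi_j \phi_{w_j,b_j}(y) - k_X(y)$ is the RFF approximation error and $B = \tfrac1m\sum_j \xi_j \phi_{w_j,b_j}(y)$ is the privacy noise. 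Term $A$ is the average of $m$ i.i.d.\ bounded variables with mean $k_X(y)$, so Hoeffding gives $|A| \leq \alpha/2$ with probability $\geq 1-\eta/2$ for $m \geq C\log(1/\eta)/\alpha^2$. Term $B$, conditioned on the features, is a weighted sum of i.i.d.\ Laplace variables with weights of magnitude at most $\sqrt{2}/m$; a sub-exponential Bernstein tail bound yields $|B|\leq\alpha/2$ with probability $\geq 1-\eta/2$ provided $n \geq C\log(1/\eta)/(\epsilon\alpha^2)$, which is precisely the stated sample-complexity requirement. The complexity claims follow by inspection: the curator performs $m$ inner products over $n$ points in $\R^d$ for $O(ndm)$ time, the output is $m$ noisy scalars plus $m$ feature parameters for $O(md)$ size, and each client query evaluates $m$ features at $y$ for $O(md)$ time.

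The main obstacle is obtaining a sufficiently sharp tail bound on $B$ so that the requirement on $n$ scales as $\log(1/\eta)/(\epsilon\alpha^2)$ and not worse. A loose union-bound approach through $\norm{\xi}_\infty \lesssim \lambda\log(m/\eta)$ would cost additional factors; one must exploit the independence of the $\xi_j$ and the boundedness of the features via a sub-exponential Bernstein inequality to keep both noise variance and sample complexity tight. A minor secondary subtlety is that the accuracy guarantee must be interpreted per-query in the function release model — since the randomness in $\tilde\phi$ is drawn once and reused across queries, the $(\alpha,\eta)$-approximation guarantee is pointwise over $y$, matching the standard DP function release formulation.
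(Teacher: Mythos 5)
Your proposal is correct, and the mechanism itself is the same as the paper's LSQ-RFF instantiation (release Laplace-noised averages of $\Theta(\log(1/\eta)/\alpha^2)$ random Fourier features, with per-feature sensitivity $O(1/n)$ scaled by the number of features), but your analysis takes a genuinely different route. The paper runs the generic LSQ mechanism (\cref{alg:main}), whose client computes a \emph{median-of-means}: $J=\Theta(\log(1/\eta))$ groups of $I'=\Theta(1/\alpha^2)$ features each, with the noise term controlled inside each group only by Chebyshev (constant failure probability) and the $\log(1/\eta)$ confidence then obtained by a Chernoff bound on the median (\cref{lmm:utility_largeq}); the high-dimensional theorem follows by plugging $S=1$, $R=\sqrt2$ into that general lemma. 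You instead use a single plain average over all $m$ features and recover the same parameters by sharper tail bounds: Hoeffding for the RFF approximation term and a sub-exponential Bernstein inequality for the weighted sum of Laplace noises, correctly noting that a naive $\norm{\xi}_\infty$ union bound would lose logarithmic factors. Your route buys a simpler estimator (no median step) and a direct per-query tail bound, but it leans on the specific sub-exponential structure of Laplace noise; the paper's MoM route needs only second-moment bounds, which is what lets the same lemma cover arbitrary $(Q,R,S)$-LSQ families (FGT, LSH) where one only knows $\Var(N^Tg(y))\leq SR^2\cdot 2(IRS/(\epsilon|X|))^2$, at the cost of a slightly more elaborate client algorithm. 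One cosmetic point: the paper defines neighboring datasets by omission rather than replacement, so its sensitivity per feature is $\sqrt2/n$ rather than your $2\sqrt2/n$; this only affects constants and neither argument is harmed.
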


%The running time is also independent of $n$, since the dataset can be subsampled down to the sample complexity size. 

%Previous Gaussian DP-KDE mechanisms had running time that depended at least exponentially on the dimension.
Our second result is a Gaussian DP-KDE mechanism for low-dimensional data,  if the points reside in a bounded region. It improves the dependence on $\alpha$ to nearly linear.

\begin{theorem}[Gaussian DP-KDE in low dimensions]\label{thm:gaulowdim}
%Let $\epsilon,\Phi>0$ and $\alpha,\eta\in(0,1)$. 
There is an $\epsilon$-DP function release mechanism for $(\alpha,\eta)$-approximation of Gaussian KDE on datasets in $\R^d$ of size $n\geq\log(1/\eta) \cdot (\log(1/\alpha))^{O(d)}/(\epsilon\alpha)$ and that are contained in a ball of radius $\Phi$, and:
\begin{CompactItemize} 
  \item The curator runs in time $(nd+(\frac{\Phi}{\sqrt{d}})^d) \cdot O(\log(1/\alpha))^{O(d)}\cdot \log(1/\eta)$.
  \item The output size is $O((1+\frac{\Phi}{\sqrt{d}})(\log(1/\alpha)))^d\log(1/\eta)$.
  \item The client runs in time $(\log(1/\alpha))^{O(d)}\log(1/\eta)$. 
\end{CompactItemize}
\end{theorem}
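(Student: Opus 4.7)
The plan is to instantiate the LSQ framework with the Fast Gauss Transform (FGT), yielding the mechanism LSQ-FGT. The FGT expresses the Gaussian kernel via truncated multivariate Hermite expansions around grid cell centers. Concretely, partition $\R^d$ into axis-aligned cells $B$ of side length $\Theta(\sigma)$, let $c_B$ be the center of $B$, and use the Hermite expansion
\[
e^{-\norm{x-y}_2^2/\sigma^2} = \sum_{\beta\in\mathbb{N}^d} \tfrac{1}{\beta!}\Bigl(\tfrac{x-c_B}{\sigma}\Bigr)^\beta H_\beta\Bigl(\tfrac{y-c_B}{\sigma}\Bigr),
\]
truncated to multi-indices with $|\beta|_\infty \le p$. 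Standard FGT analysis shows that $p=\Theta(\log(1/\alpha))$ gives uniform truncation error at most $\alpha/2$. Since the dataset lies in a ball of radius $\Phi$, only $(\Phi/\sqrt d)^d$ cells (by the $d$-ball volume) intersect the data, and only $O(1)^d$ cells within the FGT interaction radius of any query $y$ need be summed at evaluation.

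This expansion defines an LSQ scheme $(f,g)$ indexed by (cell, multi-index) pairs: $f(x)$ places $\tfrac{1}{\beta!}((x-c_B)/\sigma)^\beta$ in the coordinate for the unique cell $B$ containing $x$ and zeros elsewhere; $g(y)$ places $H_\beta((y-c_B)/\sigma)$ in each coordinate whose cell is within the interaction radius of $y$. The support of $f(x)$ has size $K := (p+1)^d = (\log(1/\alpha))^{O(d)}$, the total coordinate count is $(\Phi/\sqrt d)^d \cdot K = [(1+\Phi/\sqrt d)\log(1/\alpha)]^{O(d)}$, and $\norm{f(x)}_1 = O(1)$ by elementary bounds on the normalized Hermite coefficients.

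Privacy and accuracy then follow from the generic LSQ analysis. The curator releases $\bar f = \tfrac1n\sum_{x\in X}f(x) + Z$, where $Z$ is a vector of i.i.d.~Laplace variables whose scale is tuned so that the overall mechanism is $\epsilon$-DP; the $\ell_1$-sensitivity bound $\norm{f(x)}_1 = O(1)$ suffices. For accuracy, the noisy estimate at $y$ equals the FGT-truncated KDE plus $\langle Z, g(y)\rangle$, where only $O(K)$ coordinates of $g(y)$ are nonzero. A single Laplace coordinate satisfies $|Z_i| \le O(\log(1/\eta)/(\epsilon n))$ with constant probability, and repeating the whole construction $T = \Theta(\log(1/\eta))$ times with median-of-means boosting at the query reduces failure probability to $\eta$; this is the source of the $\log(1/\eta)$ factor in the output size and runtimes. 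Requiring $|\langle Z,g(y)\rangle|\le\alpha/2$ then yields the stated sample complexity $n\ge\log(1/\eta)\cdot(\log(1/\alpha))^{O(d)}/(\epsilon\alpha)$.

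The curator's running time is dominated by $n$ coefficient computations of cost $O(d+K)$ each, plus $O((\Phi/\sqrt d)^d K)$ for initializing and noising the empty cells, totaling $(nd+(\Phi/\sqrt d)^d)\cdot(\log(1/\alpha))^{O(d)}\log(1/\eta)$ after multiplying by the $T$ median-of-means copies. The client scans only the $O(K)$ cells near $y$ in each copy, costing $(\log(1/\alpha))^{O(d)}\log(1/\eta)$. The main technical obstacle I expect is the $d$-dimensional FGT error analysis: pinning down the constants so that cell side $\Theta(\sigma)$ and Hermite degree $\Theta(\log(1/\alpha))$ jointly yield uniform $\alpha$-accuracy, and that the ball-to-cell count truly admits the $(\Phi/\sqrt d)^d$ form rather than the naive bounding-box $\Phi^d$. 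Once this quantization is in place with the right sensitivity and support size, privacy is immediate from the LSQ framework and the accuracy step is a routine Laplace-tail computation.
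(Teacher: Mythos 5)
Your overall route is the same as the paper's: view the truncated Hermite expansion of the FGT as a single-pair LSQ family indexed by (cell, multi-index), release the Laplace-noised aggregate of $f$, bound the noise error at a query through the sparsity and magnitude of $g(y)$, and boost to failure probability $\eta$ with $\Theta(\log(1/\eta))$ repetitions and a median. However, there is a genuine gap in how you allocate the factorial normalization, and it is exactly the step you wave off as ``a routine Laplace-tail computation.'' You put the entire $1/\beta!$ on the monomial side, setting $f_\beta(x)=\tfrac{1}{\beta!}\bigl(\tfrac{x-c_B}{\sigma}\bigr)^\beta$ and $g_\beta(y)=H_\beta\bigl(\tfrac{y-c_B}{\sigma}\bigr)$. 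This makes the sensitivity of $f$ tiny, but the error you must control is $\langle Z,g(y)\rangle$, whose scale is (noise scale)$\times\norm{g(y)}_2$, and the \emph{unnormalized} Hermite function values are huge: by Cramer's inequality (and, e.g., at $t=0$), $|h_r(t)|$ can be of order $2^{r/2}\sqrt{r!}$, so with $r$ up to $\rho=\Theta(\log(1/\alpha))$ the entries of your $g(y)$ reach $(\log(1/\alpha))^{\Theta(d\log(1/\alpha))}$. The resulting requirement on $n$ is then superpolynomially worse than the claimed $(\log(1/\alpha))^{O(d)}/(\epsilon\alpha)$, so your construction as written does not prove the theorem. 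You never state any bound on the magnitude of the $g$-side entries, which is precisely the range parameter $R$ of the LSQ family that the sample-complexity calculation hinges on.

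The fix is the paper's allocation: keep the bare monomials in $f$ (they are bounded by $1$ since $|x_j-z_j^H|\le 1/2$, and their $\ell_1$ mass is at most $2^d$) and put all of $1/r_j!$ with the Hermite side, $g_{H,r}(y)=\prod_j\tfrac{1}{r_j!}h_{r_j}(y_j-z_j^H)$; Cramer's inequality then gives $\bigl|\tfrac{1}{r!}h_r\bigr|\le 1.09\cdot 2^{r/2}/\sqrt{r!}\le 1.6$, so $R=O(1)^d$ and the noise term is $(\log(1/\alpha))^{O(d)}\log(1/\eta)/(\epsilon n)$ as needed. (A balanced split of $1/\sqrt{\beta!}$ on each side also fails, since it leaves a $2^{r/2}=\mathrm{poly}(1/\alpha)$ factor on the $g$ side.) Two smaller inaccuracies are absorbed by the $(\log(1/\alpha))^{O(d)}$ slack but should be stated correctly: $\norm{f(x)}_1$ is $O(1)^d$, not $O(1)$; and the interaction radius must be $\Theta(\sigma\sqrt{\log(1/\alpha)})$ (giving $O(1+\sqrt{\log(1/\alpha)/d})^d$ cells per query), since summing only $O(1)^d$ cells of side $\Theta(\sigma)$ leaves a constant, not $\alpha$, truncation error from the dropped far cells.
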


%We remark that here, one cannot always subsample $n$ down to the sample complexity size, since it may be smaller than the uniform sampling bound $\Omega(1/\alpha^2)$. One can still subsample $n$ down to $O(\log(1/\eta)/\alpha^2)$ if $n$ is initially larger.

\textbf{Our approach.}
We obtain our results by introducing a framework we call \emph{locality sensitive quantization} (LSQ). 
It captures a certain type of KDE approximation algorithms, which are based on point quantization. 
On the one hand, we show that the LSQ properties are by themselves sufficient to imply an efficient DP-KDE mechanism. 
On the other hand, we show that many popular approximation methods for KDE already possess these properties---including random Fourier features (RFF) \cite{rahimi2007random}, the Fast Gauss Transform (FGT) \cite{greengard1991fast}, and locality sensitive hashing (LSH) \cite{charikar2017hashing}. Thus, by plugging each of these methods into the LSQ framework, we immediately get efficient DP-KDE mechanisms for the kernels they approximate. 

%In particular, by plugging RFF into the LSQ mechanism, we get \cref{thm:gauhighdim}; by plugging FGT, we get \cref{thm:gaulowdim}; and by plugging LSH, we recover the results already obtained recently by \cite{coleman2020one} for kernels that can be described by LSH families. 

The key properties highlighted in the LSQ framework are \emph{quantization} (i.e., the dataset is quantized into a small number of values), \emph{range} (these values are bounded), and \emph{sparsity} (each single point affects only a small number of values). As mentioned, several non-private KDE algorithms operate in this manner, as it can lead to efficient and accurate approximation. The reason it is also useful for efficient DP mechanisms is roughly that quantization lets us add noise to a compact representation of the data, saving time; bounded range means the noise can have small magnitude; and sparsity ensures the noise does not add up too much. 

On a broader conceptual level, there is a fundamental connection between DP and non-private approximation algorithms based on sketching (or quantization). Indeed, many recent works have uncovered such connections \cite{blocki2012johnson,feldman2021lossless,aumuller2021differentially,coleman2020one,pagh2022improved,nikolov2022private}. 
Our work adds to this growing line of research.

\subsection{Preliminaries: Kernel Density Estimation (KDE)}\label{sec:kde}

%A \emph{kernel} is a function $k:\R^d\times\R^d\rightarrow[0,1]$ that measures similarity between points in $\R^d$. Examples of popular kernels include:
A \emph{kernel} is a function $k:\R^d\times\R^d\rightarrow[0,1]$ that measures similarity between points in $\R^d$. Popular kernels include:
\begin{CompactItemize}
  \item Gaussian kernel: $k(x,y)=\exp(-\norm{x-y}_2^2/\sigma^2)$
  \item Laplacian kernel: $k(x,y)=\exp(-\norm{x-y}_1/\sigma)$
  %\item Cauchy kernel: $k(x,y)=1/(1+\norm{x-y}_2^2/\sigma^2)$
  %\item Cauchy kernel: $k(x,y)=\prod_{j=1}^d(1+(x_j-y_j)^2/\sigma^2)^{-1}$
  \item Cauchy kernel: $k(x,y)=\prod_{j=1}^d2/(1+(x_j-y_j)^2/\sigma^2)$
\end{CompactItemize}
Here, $\sigma>0$ is the \emph{bandwidth} parameter. For simplicity, we set $\sigma=1$ throughout; this does not limit generality, as we can scale the point coordinates accordingly.

Let $X\subset\R^d$ be a finite dataset. % of size $n$.
The \emph{kernel density estimation (KDE)} map $KDE_X:\R^d\rightarrow[0,1]$ is defined as
\[  KDE_X(y)=\frac1{|X|}\sum_{x\in X}k(x,y) . \]
Our goal will be to approximate the KDE map in the following formal sense.
\begin{definition}\label{def:addapx}
Let $\hat e:\R^d\rightarrow[0,1]$ be a randomized mapping, and let $\alpha,\eta \in (0,1)$. 
We say that $\hat e$ is an \emph{$(\alpha,\eta)$-approximation} for $KDE_X$ if for every $y\in\R^d$,
\[
  \Pr[|\hat e(y) - KDE_X(y)| \leq \alpha] \geq 1-\eta .
\]
\end{definition}

% \cite{charikar2017hashing,backurs2018efficient,siminelakis2019rehashing,backurs2019space,charikar2020kernel,backurs2021faster}.

\subsection{Preliminaries: Differential Privacy (DP)}\label{sec:dp}
Differential privacy \cite{dwork2006calibrating} is a setting that involves communication between two parties: the \emph{curator}, who holds a dataset $X$, and the \emph{client}, who wishes to obtain the result of some computation on the dataset. We say that two datasets $X,X'$ are \emph{neighboring} if omitting a single data point from one of them yields the other. 

\begin{definition}\label{def:dp}
Let $M$ be a randomized algorithm (called a \emph{mechanism}) that maps an input dataset to a range of outputs $\mathcal O$. For $\epsilon,\delta>0$, we say that $M$ is \emph{$(\epsilon,\delta)$-DP if} for every neighboring datasets $X,X'$ and every $O\subset\mathcal O$, 
\[ \Pr[M(X)\in O] \leq \exp(\epsilon)\cdot\Pr[M(X')\in O]+\delta . \]
The case $\delta=0$ is called \emph{pure} differential privacy, and in that case we say that $M$ is \emph{$\epsilon$-DP}.
\end{definition}
In this paper we focus on pure differential privacy---given a desired privacy level $\epsilon>0$, the curator is only allowed to release the results of $\epsilon$-DP computations on $X$. See \Cref{sec:apxdp} for a discussion of non-pure DP-KDE. 

\textbf{Function release.}
We focus on the differentially private function release communication model. In this model, the curator releases an $\epsilon$-DP description of a function $\hat e(\cdot)$ that satisfies \cref{def:addapx} for the dataset $X$, without seeing any queries in advance. The client can then use this description to compute $\hat e(y)$ for any query $y$. Note that since $\hat e(\cdot)$ itself is $\epsilon$-DP, the client can use it for infinitely many queries without compromising the privacy of the dataset. However, as more queries are computed, the overall number of inaccurate estimates is expected to grow (as only an expected $(1-\eta)$-fraction of them is guaranteed to have error within $\pm\alpha$).

\textbf{Sample complexity.} There is an inherent trade-off between privacy and approximation (or utility). It can be expressed as the minimal dataset size for which both are simultaneously possible---a quantity known as the \emph{sample complexity}.
Intuitively, the larger the dataset is, the easier it is to maintain the privacy of each point while releasing accurate global information.
%Once the dataset is large enough, accurate global computations may become possible. 
Formally, given $\epsilon,\alpha,\eta>0$, the sample complexity $\mathrm{sc}(M)$ of a mechanism $M$ is the smallest $s$ such that $M$ is both $\epsilon$-DP and satisfies $(\alpha,\eta)$-approximation for all datasets of size at least $s$. 

The sample complexity affects the running time: 
On the one hand, the dataset size $n$ must be at least $\mathrm{sc}(M)$.
On the other hand, since the KDE at any query point is the mean of values in $[0,1]$, the curator can initially subsample the dataset down to size $O(\log(1/\eta)/\alpha^2)$ while maintaining $(\alpha,\eta)$-approximation, by Hoeffding's inequality. The upshot is that w.l.o.g., $n$ can always be assumed to satisfy $\mathrm{sc}(M) \leq n \leq O(\max\{\mathrm{sc}(M), \log(1/\eta)/\alpha^2\})$.

%For KDE, regardless of privacy, one can always reduce the dataset size to $O(\log(1/\eta)/\alpha^2)$ while preserving $(\alpha,\eta)$-approximation, by a uniformly random sample. 
%Therefore, if $\mathrm{sc}(M)$ is larger than that, the running time of $M$ can be made independent of $n$ by first subsampling the dataset down to size $\mathrm{sc}(M)$ (hence the terminology ``sample complexity''). If $\mathrm{sc}(M) \ll \log(1/\eta)/\alpha^2$, then $M$ cannot subsample the data down to $\mathrm{sc}(M)$, but can still subsample it down to $O(\log(1/\eta)/\alpha^2)$ if $n$ is initially larger.

\textbf{Computational efficiency.}
In addition to privacy and utility, we also want the curator and client algorithms to be time-efficient, and the curator output size to be space-efficient.

\subsection{Prior Work}\label{sec:prior}

\textbf{Generic linear queries.} 
KDE queries belong to a broader class of linear queries, which are extensively studied in the DP literature. Two classical mechanisms for them are SmallDB \cite{blum2013learning} and Private Multiplicative Weights (PMW) \cite{hardt2010multiplicative,gupta2012iterative}. These mechanisms are designed for the DP \emph{query release} model, where the curator only releases responses to queries provided by the client. 
Nonetheless they can be adapted to the more general function release model, if the KDE problem is restricted to points contained in a ball of radius $\Phi$. 
We provide more details on this transformation in \cref{sec:linearappendix}. In either the query or function release model,
these mechanisms run in time at least exponential in $d$.

%Every function release mechanism can also be used for query release. 
%The converse is possible too, at a cost. Suppose all points are contained in a ball of radius $\Phi\geq1$. By discretization, it suffices to consider only $O(\Phi^2/\alpha)^d$ different points. We can then use any query release mechanism for function release, by invoking the curator with all possible queries. 
%%
%While this is computationally wasteful, it has rather little effect on SmallDB and PMW, as they are rather inefficient to begin with. In particular, both run in time at least exponential in $d$. 
%The resulting bounds appear in \Cref{tbl:main}. The sample complexity of SmallDB for Gaussian KDE is somewhat better than for generic linear queries, due to the existence of coresets \cite{lopez2015towards,phillips2020near}.
%The curator output size and client time of both mechanisms can be greatly reduced, by either uniform sampling or coresets. We provide more details in \cref{sec:linearappendix}.

%\textbf{Function approximation.}

\textbf{DP-KDE in low dimensions.}
%Several authors approached DP-KDE via function approximation techniques. 
Several authors explored mechanisms specifically for DP-KDE. 
%\cite{duchi2013local} studied the problem in a different model of \emph{local} DP. 
%\cite{hall2013differential,hall2013new} designed a non-pure DP-KDE, and obtained optimal error convergence rates under a relaxed notion of approximation. 
%\cite{hall2013differential} presented a non-pure DP mechanism, with running time linear in $d$ in the query release model. 
\cite{hall2013differential} presented a non-pure DP mechanism, based on noise correlation, in the query release model.
However, when used for function release, its running time is exponential in $d$ (see \Cref{sec:apxdp} for details).  
\cite{wang2016differentially} introduced an $\epsilon$-DP function release mechanism for $(\alpha,\eta)$-approximation of smooth functions, assuming all points lie in $[-1,1]^d$, using a basis of even trigonometric polynomials. Its performance for DP-KDE depends the bandwidth $\sigma$ (under scaling the data into $[-1,1]^d$), and has a fixed value for $\eta$. It also entails computations that do not admit a closed form and require numerical methods. 
\cite{alda2017bernstein} introduced the Bernstein mechanism, based on Bernstein basis polynomials, and obtained similar guarantees with any $\eta\in(0,1)$ and in a closed-form computation. 
The running time of both of these mechanisms is exponential in $d$. 

%Prior to our work, this was the state of the art for Gaussian DP-KDE.

%\textbf{DP-KDE in low dimensions.}
%Several authors approached DP-KDE via function approximation techniques. 
%\cite{duchi2013local} studied the problem in a different model of \emph{local} DP. 
%\cite{hall2013differential,hall2013new} designed a non-pure DP-KDE mechanism with running time exponential in $d$, and obtained optimal error convergence rates under a relaxed notion of approximation. 
%\cite{wang2016differentially} introduced an $\epsilon$-DP mechanism for $(\alpha,\eta)$-approximation of smooth functions, assuming all points lie in $[-1,1]^d$, using a basis of even trigonometric polynomials. Its running time for DP-KDE depends on $d$ like $d/\alpha^{\Theta(d/\sigma^2)}$, which is linear provided the bandwidth $\sigma$ is sufficiently large (under scaling the data into $[-1,1]^d$), though not in general. Furthermore, it entails computations that do not admit a closed form and require numerical methods. 
%\cite{alda2017bernstein} introduced the Bernstein mechanism, based on Bernstein basis polynomials, and obtained similar guarantees with a closed-form computation, albeit running time always exponential in $d$.

\textbf{Locality sensitive hashing (LSH).} 
Recently, \cite{coleman2020one} broke the $\exp(d)$ barrier for DP-KDE by using LSH \cite{indyk1998approximate}. The usefulness of LSH for non-private KDE has been observed in \cite{andoni2009nearest}, and recently regained much attention \cite{charikar2017hashing,siminelakis2019rehashing,coleman2020sub,backurs2019space,backurs2021faster}. Then, \cite{coleman2020one} showed it is also useful for DP-KDE. 
They obtained an $\epsilon$-DP mechanism with $(\alpha,\eta)$-approximation and running time only linear in $d$. 

However, their result does not apply to the Gaussian kernel. It is restricted to kernels that satisfy a property known as \emph{LSHability}, which roughly means they can be accurately described by LSH (see \Cref{sec:lsh} for the formal definition). 
While some popular kernels possess this property---perhaps most notably, the Laplacian kernel \cite{rahimi2007random,andoni2009nearest,backurs2019space}---other important kernels, like Gaussian and Cauchy, are not known nor believed to be LSHable (see, e.g., \cite{backurs2018efficient}).
See \Cref{sec:lshable} for specific LSHable kernels.

%The best LSHability result for the Gaussian kernel is due to \cite{andoni2009nearest}, relying on a ``near-Gaussian'' kernel which is LSHable. When used in conjuction with \cite{coleman2020one}, the resulting DP-KDE mechanism can only guarantee additive error of at least $0.15+\alpha$ instead of $\alpha$, which is often too large. Furthermore, while its running time is linear in $d$, it grows \emph{exponentially} with $\tilde O(\alpha^{-2})$, rendering it infeasible for most applications. Both limitations stem from the ``near-Gaussian'' LSH.

\textbf{Comparison to our results.}
The comparison is summarized in \Cref{tbl:main}. 
Our LSQ-RFF mechanism runs in time linear in $d$ and polynomial in $1/\alpha$. Its sample complexity and computational efficiency match those of \cite{coleman2020one}, but it works for a wider class of kernels. For the Gaussian kernel, it is the first to avoid an exponential dependence on $d$ in the running time. Furthermore, it does not require the data to be contained in a bounded region.
In the low-dimensional setting $d=O(1)$, our LSQ-FGT mechanism is the first to attain a nearly linear dependence of $O(\alpha^{-1}\mathrm{log}^{O(1)}(\alpha^{-1}))$ on the error $\alpha$.\footnote{Note that this is the dependence on $\alpha$ in both the sample complexity and the curator running time, since $n$ is lower bounded by the sample complexity.}

%has the best sample complexity. Its dependence on $\alpha$ is $O(\alpha^{-1}\mathrm{log}^{O(1)}(\alpha^{-1}))$, compared to $O(\alpha^{-(1+\gamma)})$ for some $\gamma>0$ in the EvenTrig \cite{wang2016differentially} and Bernstein \cite{alda2017bernstein} mechanisms. The running time dependence on $\alpha$ is similarly improved. 

\textbf{Adaptive queries.}
The transformation of SmallDB and PMW from query release to function release, mentioned above, in fact endows them with a stronger property than \cref{def:addapx}: not only they succeed on every query with probability $1-\eta$, but they succeed on all queries \emph{simultaneously} with a fixed probability (say $0.9$). This enables the client to adaptively choose queries based on the results of previous queries, which is useful for data exploration, among other benefits (see \cite{cherapanamjeri2020adaptive}). The same transformation can be applied to our mechanisms as well; see \Cref{app:adaptive}.

%, by setting $\eta=\Theta(\alpha/\Phi^2)^{d}$ and taking a union bound over all queries. Note that their sample complexity and computational efficiency would degrade only polynomially in $d$ and polylogarithmically in $\Phi$ and $\alpha$.

\section{Locality Sensitive Quantization}\label{sec:lsq}

The following is the main definition for this paper.
\begin{definition}\label{def:lsq}
Let $Q,S\geq0$ be integers and $\alpha,R>0$. Let $\mathcal Q$ be a distribution over pairs $(f,g)$ such that:
\begin{CompactItemize}
  \item $f$ and $g$ are maps $f,g:\R^d\rightarrow[-R,R]^Q$.
  \item For every $x,y\in\R^d$, the $Q$-dimensional vectors $f(x)$ and $g(y)$ have each at most $S$ non-zero entries.
\end{CompactItemize}
We say that $\mathcal Q$ is an \emph{$\alpha$-approximate $(Q,R,S)$-locality sensitive quantization (abbrev.~$(Q,R,S)$-LSQ)} family for a kernel $k:\R^d\times\R^d\rightarrow[0,1]$, if for every $x,y\in\R^d$,
\[ \left|k(x,y) - \E_{(f,g)\sim\mathcal Q}[f(x)^Tg(y)] \right| \leq \alpha . \]
We call $k$ an \emph{$\alpha$-approximate $(Q,R,S)$-LSQable} kernel. 
If $\alpha=0$, we say that $\mathcal Q$ is an \emph{exact} $(Q,R,S)$-LSQ family for $k$, and that $k$ is \emph{$(Q,R,S)$-LSQable}.
\end{definition}

Intuitively, an LSQ family expresses the kernel as the expected inner product between vectors with a small number of entries $(Q)$, bounded range $(R)$, and bounded sparsity $(S)$. 
The definition is reminiscent of random features, Fast Multipole Methods  \cite{greengard1987fast}, and LSHability (\cref{def:lsh})---indeed, as we will see, it captures all of these. Its goal is to form an abstraction of the key properties that on the one hand ``automatically'' suffice for an efficient DP-KDE mechanism, and on the other hand are already shared by many prominent KDE methods.
%The definition is reminiscent of random features \cite{rahimi2007random}, of the Fast Multipole Method and its related Fast Gauss Transform \cite{greengard1987fast,greengard1991fast}, and of LSHability (\cref{def:lsh})---indeed, as we will see, it captures all of these. Its goal is to form an abstraction of the key properties that on the one hand ``automatically'' suffice for an efficient DP-KDE mechanism, and on the other hand are already shared by many prominent KDE methods.

\subsection{LSQ Mechanism for DP-KDE}\label{sec:lsq}

%\begin{algorithm}[tb]
%   \caption{Bubble Sort}
%   \label{alg:example}
%\begin{algorithmic}
%   \STATE {\bfseries Input:} data $x_i$, size $m$
%   \REPEAT
%   \STATE Initialize $noChange = true$.
%   \FOR{$i=1$ {\bfseries to} $m-1$}
%   \IF{$x_i > x_{i+1}$}
%   \STATE Swap $x_i$ and $x_{i+1}$
%   \STATE $noChange = false$
%   \ENDIF
%   \ENDFOR
%   \UNTIL{$noChange$ is $true$}
%\end{algorithmic}
%\end{algorithm}

\newcommand{\INDENT}{\hspace{1em}}
\begin{algorithm}[tb]
\caption{: LSQ Mechanism for DP-KDE}
\label{alg:main}
\textbf{Curator}
\smallskip{\hrule height.2pt}
\begin{algorithmic}
   %\STATE {\bfseries Input:} Dataset $X\subset\R^d$; $(Q,R,S)$-LSQ family $\mathcal Q$; integer $I>0$; privacy parameter $\epsilon>0$. 
   \STATE {\bfseries Input:} Dataset $X\subset\R^d$; $(Q,R,S)$-LSQ family $\mathcal Q$; privacy parameter $\epsilon>0$; integers $I\geq J>0$. 
   \STATE {\bfseries for} $i=1,\ldots,I$ {\bfseries do}
   \STATE\INDENT Sample $(f_i,g_i) \sim \mathcal Q$
   \STATE\INDENT $F_i\leftarrow \frac{1}{|X|}\sum_{x\in X}f_i(x)$ \hfill {\itshape // note: $F_i\in[-R,R]^Q$}
   \STATE\INDENT $\widetilde F_i \leftarrow F_i$ with an i.i.d. sample from $\mathrm{Laplace}(IRS/(\epsilon|X|))$ added to each coordinate
   \STATE {\bfseries release} $f_i,g_i,\widetilde F_i$ for all $i=1,\ldots,I$.
\end{algorithmic}
\smallskip{\hrule height.8pt}\smallskip
\textbf{Client}
\smallskip{\hrule height.2pt}
\begin{algorithmic}
   %\STATE {\bfseries Input:} Query point $y\in\R^d$; integer $J>0$; the released $\{f_i,g_i,\widetilde F_i\}_{i=1}^I$.
   \STATE {\bfseries Input:} Query point $y\in\R^d$; the released $\{f_i,g_i,\widetilde F_i\}_{i=1}^I$.
   %\STATE $J \leftarrow \lceil10\log(1/\eta)\rceil$
   \STATE $I' \leftarrow \lfloor I/J\rfloor$
   \STATE {\bfseries for} $j=1,\ldots,J$ {\bfseries do}
   \STATE\INDENT $m_j \leftarrow \frac1{I'}\sum_{i=I'(j-1)+1}^{I'j}\widetilde F_i^Tg_i(y)$
   \STATE {\bfseries return} $\hat e(y) := \mathrm{median}(m_1,\ldots,m_J)$.
\end{algorithmic}
\end{algorithm}

%Let $k$ be a kernel with an $\alpha$-approximate $(Q,R,S)$-LSQ family $\mathcal Q$.
%The LSQ mechanism for DP-KDE is specified in \Cref{alg:main}. 
%Its properties are stated next, with proofs deferred to \cref{sec:lsqappendix}.

Let $k$ be a kernel with an $\alpha$-approximate $(Q,R,S)$-LSQ family $\mathcal Q$.
The LSQ mechanism for DP-KDE is specified in \Cref{alg:main}. 
It is parameterized by the privacy level $\epsilon$, and by integers $I\geq J>0$ that govern the efficiency/utility trade-off. We discuss their role and how to set them in more detail in \cref{sec:mechanism_implementation}.
The formal properties of the mechanism are stated next, with proofs deferred to \cref{sec:lsqappendix}.

%Using $\mathcal Q$, we define the following DP-KDE mechanism for $k$:

%\paragraph{Curator algorithm:}
%\begin{enumerate}
%  \item Input: Dataset $X\subset\R^d$; integer parameter $I>0$; privacy parameter $\epsilon>0$. 
%  \item For $i=1,\ldots,I$:
%  \begin{enumerate}
%    \item Sample $(f_i,g_i) \sim \mathcal Q$.
%    \item $F_i\leftarrow \frac{1}{|X|}\sum_{x\in X}f_i(x)$ (note that $F_i\in[-R,R]^Q$).
%    \item Let $\widetilde F_i$ be $F_i$ with independent Laplace noise $\mathrm{Laplace}(IRS/(\epsilon|X|))$ added to each coordinate.
%  \end{enumerate}
%  \item Release $f_i,g_i,\widetilde F_i$ for all $i=1,\ldots,I$.
%\end{enumerate}
%
%\paragraph{Client algorithm:}
%\begin{enumerate}
%  \item Input: Query point $y$, failure parameter $\eta>0$, and the released $f_i,g_i,\widetilde F_i$ for all $i=1,\ldots,I$.
%  \item Let $J=\lceil10\log(1/\eta)\rceil$ and $I'=\lceil I/J\rceil$.
%  \item For every $j=1,\ldots,J$, let  $m_j \leftarrow \frac1{I'}\sum_{i=I'(j-1)+1}^{I'j}\widetilde F_i^Tg_i(y)$.
%  \item Return the median of $m_1,\ldots,m_J$.
%\end{enumerate}

\begin{lemma}[privacy]\label{lmm:privacy}
    The mechanism is $\epsilon$-DP.
\end{lemma}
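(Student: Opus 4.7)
The plan is to prove privacy by isolating the only data-dependent quantities released by the curator (the vectors $\widetilde F_i$), bounding the sensitivity of each $F_i$, and invoking Laplace noise plus basic composition.

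First I would observe that the $(f_i,g_i)$ pairs are drawn from $\mathcal Q$ independently of $X$, so conditioning on any realization of these pairs, it suffices to show that the joint release $(\widetilde F_1,\ldots,\widetilde F_I)$ is $\epsilon$-DP. The aggregation done on the client side ($g_i(y)$, inner products, means and median) is post-processing and cannot degrade privacy, so the client step does not need to be analyzed separately.

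Next I would bound the $\ell_1$-sensitivity of each $F_i$. By the LSQ definition, for every $x$ the vector $f_i(x)\in[-R,R]^Q$ has at most $S$ nonzero entries, so $\|f_i(x)\|_1\le RS$. For neighboring datasets $X,X'$ differing by one point (so $|X|$ and $|X'|$ differ by one), a direct calculation of
\[
F_i(X)-F_i(X')=\tfrac{1}{|X|}\sum_{x\in X}f_i(x)-\tfrac{1}{|X'|}\sum_{x\in X'}f_i(x)
\]
followed by the triangle inequality yields $\|F_i(X)-F_i(X')\|_1 = O(RS/|X|)$ (the two contributions are the $\pm f_i(x^*)/|X|$ term for the added/removed point, and the normalization correction which is of the same order). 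Thus adding independent $\mathrm{Laplace}(IRS/(\epsilon|X|))$ noise to each coordinate of $F_i$ makes the single release $\widetilde F_i$ satisfy $(\epsilon/I)$-DP, by the standard Laplace-mechanism guarantee for $\ell_1$-sensitivity.

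Finally, since the $I$ noisy releases are produced using independent Laplace noise, basic composition of $I$ mechanisms each of which is $(\epsilon/I)$-DP yields an $\epsilon$-DP guarantee for the whole tuple $(\widetilde F_1,\ldots,\widetilde F_I)$. Combined with the post-processing observation above, this would give $\epsilon$-DP for the entire mechanism. The only delicate point I anticipate is the sensitivity calculation when $|X|$ itself changes between neighbors, since the noise scale uses $|X|$; this is the main place where one must be careful, but it only changes the bound by a constant factor (and can be absorbed, or alternatively handled by defining the noise scale relative to the smaller of $|X|,|X'|$), so the rest of the argument goes through unchanged.
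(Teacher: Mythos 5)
Your proposal is correct and follows essentially the same route as the paper: the pairs $(f_i,g_i)$ are data-independent, each $F_i$ has $\ell_1$-sensitivity $RS/|X|$ by the range and sparsity bounds of the LSQ family, and the noisy releases are private via the Laplace mechanism (the paper bounds the concatenated vector $(F_1,\ldots,F_I)$ by $\ell_1$-sensitivity $IRS/|X|$ and applies the Laplace mechanism once, which is equivalent to your per-release $(\epsilon/I)$-DP guarantee plus basic composition). The normalization/noise-scale subtlety you flag for neighboring datasets of different sizes is in fact glossed over in the paper, which simply asserts sensitivity $RS/|X|$ under its add/remove neighboring definition, so your remark is if anything slightly more careful than the published argument.
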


\begin{lemma}[efficiency]\label{lmm:efficiency}
Denote by $T_{\mathcal Q}$ the time to sample $(f,g)\sim\mathcal Q$, by $T_f,T_g$ the time to compute $f(x),g(y)$ given $x,y\in\R^d$ respectively, and by $L_{\mathcal Q}$ the description size in machine words of a pair $(f,g)$ sampled from $\mathcal Q$. Then,
\begin{CompactItemize}
    \item The curator runs in time $O(I(T_{\mathcal Q}+|X|T_f+Q))$.
    \item The curator output size is $O(I(L_{\mathcal Q}+Q))$.
    \item The client runs in time $O(I(T_g+S))$.
\end{CompactItemize}
\end{lemma}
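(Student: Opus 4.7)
The proof is essentially a direct accounting of the work done in each line of Algorithm 1, leveraging the sparsity and range assumptions of Definition 2.1. I would break the argument into three parts matching the three bulleted bounds, and in each part carefully identify where the $T_{\mathcal Q}, T_f, T_g, Q, S$ terms arise.

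For the curator bound, I would analyze the cost of a single iteration $i$ and then multiply by $I$. Drawing the pair $(f_i,g_i)$ takes $T_{\mathcal Q}$ by definition. Computing $F_i = \frac{1}{|X|}\sum_{x\in X} f_i(x)$ can be done by maintaining $F_i$ as a dense array in $\R^Q$ (initialized in time $O(Q)$) and, for each $x\in X$, computing $f_i(x)$ in time $T_f$ and adding it to $F_i$ by updating only its (at most $S$) non-zero coordinates; since writing down $f_i(x)$ already requires $T_f \geq S$ operations, the per-point update cost is absorbed into $T_f$. This gives $O(|X|\,T_f + Q)$ for the accumulation. Adding Laplace noise to each of the $Q$ coordinates costs $O(Q)$. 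Summing yields $O(T_{\mathcal Q}+|X|T_f+Q)$ per iteration, and $O(I(T_{\mathcal Q}+|X|T_f+Q))$ overall.

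For the output size, the curator releases $I$ triples $(f_i, g_i, \widetilde F_i)$. By definition the pair $(f_i,g_i)$ has description length $L_{\mathcal Q}$, and $\widetilde F_i\in\R^Q$ occupies $O(Q)$ words, giving the claimed $O(I(L_{\mathcal Q}+Q))$.

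For the client bound, the work is dominated by forming the $I$ inner products $\widetilde F_i^T g_i(y)$. Computing $g_i(y)$ takes time $T_g$; and because $g_i(y)$ has at most $S$ non-zero coordinates, the inner product with the dense vector $\widetilde F_i$ can be evaluated in $O(S)$ time by iterating only over the non-zero support of $g_i(y)$. Hence each of the $I' J = I$ terms contributes $O(T_g+S)$, and aggregating into the $J$ partial means and taking their median adds only $O(I)$ more work, which is absorbed. The main (minor) subtlety to flag is this sparse-dense inner product observation, which is what prevents a $Q$ factor from appearing in the client time. Everything else is straightforward bookkeeping, so I do not anticipate any genuine obstacle in carrying the proof out.
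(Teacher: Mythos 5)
Your proposal is correct and follows essentially the same direct bookkeeping argument as the paper: per-iteration accounting of sampling, accumulation, and noise for the curator, word counts for the output, and the sparse-dense inner product plus absorbed median-of-means overhead for the client. The extra remarks you add (initializing the dense accumulator in $O(Q)$, per-point updates absorbed into $T_f$) are consistent with, and only slightly more explicit than, the paper's proof.
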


For utility, we start with bounding the simpler case where $\mathcal Q$ contains just a single pair of functions.

\begin{lemma}[single pair utility]\label{lmm:utility_smallq}
    Suppose $\mathcal Q$ is supported on a single pair $(f,g)$, and the mechanism is run with $I=J=\Theta(\log(1/\eta))$.
    For every $y\in\R^d$, with probability $1-\eta$, the client output $\hat e(y)$ that satisfies
    \[ |\hat e(y)-KDE_X(y)|\leq\alpha + O\left(\frac{S^{1.5}R^2\log(\tfrac1{\eta})}{\epsilon|X|}\right) . \]
\end{lemma}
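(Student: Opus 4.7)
The plan is as follows. Since $I=J=\Theta(\log(1/\eta))$, we have $I'=\lfloor I/J\rfloor=1$, and because $\mathcal Q$ is supported on the single pair $(f,g)$ every $(f_i,g_i)$ equals $(f,g)$. Thus each client aggregate reduces to $m_j = \widetilde F_j^T g(y)$, and writing $F := \tfrac{1}{|X|}\sum_{x\in X} f(x)$, we have $\widetilde F_j = F + Z_j$, where the noise vectors $Z_j$ are i.i.d., each coordinate being an independent $\mathrm{Laplace}(b)$ variable with $b = IRS/(\epsilon|X|) = \Theta(\log(1/\eta))\cdot RS/(\epsilon|X|)$. Consequently the $m_j$ are i.i.d., which sets up a clean median-of-trials argument.

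First I would bound the error of a single trial by decomposing
\[
m_j - KDE_X(y) \;=\; \bigl(F^T g(y) - KDE_X(y)\bigr) \;+\; Z_j^T g(y).
\]
The deterministic term is at most $\alpha$ in absolute value: by the LSQ approximation guarantee with $\mathcal Q$ a single pair, $|f(x)^T g(y) - k(x,y)|\leq\alpha$ for every $x\in X$, and averaging over $X$ and applying the triangle inequality gives $|F^T g(y)-KDE_X(y)|\leq\alpha$.

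For the noise term, the key observation is that $Z_j^T g(y) = \sum_{q : g(y)_q \neq 0} Z_j[q]\, g(y)_q$ is a sum of at most $S$ independent zero-mean Laplace variables, each with scale at most $bR$ since $|g(y)_q|\leq R$. Its variance is therefore at most $2S(bR)^2$, and Chebyshev's inequality gives $|Z_j^T g(y)|\leq O(\sqrt{S}\cdot bR)$ with probability at least $3/4$. Plugging in $b$ yields the claimed $O(S^{1.5}R^2\log(1/\eta)/(\epsilon|X|))$ bound. Combining, each trial lies within $\alpha + O(S^{1.5}R^2\log(1/\eta)/(\epsilon|X|))$ of $KDE_X(y)$ with probability at least $3/4$.

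Finally, since the $m_j$ are i.i.d., I would invoke standard median amplification: the median falls outside the target interval only if at least $J/2$ of the $J$ trials do, so a Chernoff bound on a sum of independent Bernoullis with failure probability $\leq 1/4$ gives overall failure probability at most $\eta$ once $J=\Theta(\log(1/\eta))$ with a sufficiently large constant. I don't anticipate a real obstacle; the only mildly delicate step is resisting a naive triangle-inequality bound on $Z_j^T g(y)$, which would worsen the dependence on $S$, and instead using the second-moment bound (exploiting independence) to obtain the correct $S^{1.5}$ scaling.
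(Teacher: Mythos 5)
Your proposal is correct and follows essentially the same route as the paper's proof: decompose $m_j$ into the deterministic term $F^Tg(y)$ (within $\alpha$ of $KDE_X(y)$ by the single-pair LSQ guarantee) plus the Laplace noise inner product, bound the latter's variance by $O(SR^2)\cdot(IRS/(\epsilon|X|))^2$ using sparsity and range, apply Chebyshev per trial, and amplify via a Chernoff bound on the median with $J=\Theta(\log(1/\eta))$. The only cosmetic difference is that the paper takes the median of the noise terms directly (the deterministic term being common to all trials), whereas you run the median-of-trials argument on the full $m_j$; both yield the same bound.
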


The next utility bound is for large or infinite $\mathcal Q$.

\begin{lemma}[utility]\label{lmm:utility_largeq}
    Suppose the mechanism is run with $J=\Theta(\log(1/\eta))$ and $I=\Theta(J/\alpha^2)$. 
    For every $y\in\R^d$, with probability $1-\eta$, the client output $\hat e(y)$ that satisfies
    \[ \left|\hat e(y) - KDE_X(y)\right| < \alpha + O\left(\alpha SR^2 + \frac{S^{1.5}R^2\log(\tfrac1{\eta})}{\alpha\epsilon|X|}\right) . \]
\end{lemma}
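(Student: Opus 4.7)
The plan is a median-of-means analysis parallel to \Cref{lmm:utility_smallq}, where each inner average $m_j$ over $I'=\lfloor I/J\rfloor$ iterations now simultaneously concentrates the LSQ sampling randomness and the Laplace noise. The starting decomposition is $\widetilde F_i^T g_i(y) = F_i^T g_i(y) + \eta_i^T g_i(y)$, where $\eta_i$ is the Laplace noise vector added by the curator.

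For the mean: since $\eta_i$ is zero-mean and independent of $(f_i,g_i)$, we have $\E[\widetilde F_i^T g_i(y)] = \tfrac1{|X|}\sum_{x\in X} \E_{(f,g)\sim\mathcal Q}[f(x)^T g(y)]$, which is within $\alpha$ of $KDE_X(y)$ by applying the $\alpha$-approximate LSQ guarantee to each term $\E_{(f,g)}[f(x)^T g(y)]$ and averaging over $x\in X$. So each $m_j$ has mean within $\alpha$ of $KDE_X(y)$, contributing the leading $\alpha$ in the stated bound.

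For the variance, I would bound $\mathrm{Var}(\widetilde F_i^T g_i(y))$ as the sum of two contributions, noting that $\E[\eta_i^T g_i(y) \mid f_i, g_i] = 0$ forces the cross-covariance to vanish so variances add. The LSQ sampling contribution is at most $S^2R^4$, since $\|f_i(x)\|_0,\|g_i(y)\|_0\leq S$ together with entries bounded by $R$ give $|F_i^T g_i(y)|\leq SR^2$. The noise contribution is handled by conditioning on $g_i$: it becomes a weighted sum of independent $\mathrm{Laplace}(b)$ variables (with $b = IRS/(\epsilon|X|)$) with squared weights summing to $\|g_i(y)\|_2^2 \leq SR^2$, yielding conditional variance $2b^2 SR^2$, and the tower rule preserves the bound. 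Hence $\mathrm{Var}(m_j) \leq (S^2R^4 + 2b^2SR^2)/I'$.

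Chebyshev's inequality then gives $|m_j - \E[m_j]| \leq t := O(SR^2/\sqrt{I'} + bR\sqrt{S/I'})$ with probability $\geq 2/3$. Substituting $I' = \Theta(1/\alpha^2)$ (so that $I = I'J = \Theta(\log(1/\eta)/\alpha^2)$) and the resulting $b = \Theta(\log(1/\eta)\,RS/(\alpha^2\epsilon|X|))$ yields $t = O(\alpha SR^2 + S^{1.5}R^2\log(1/\eta)/(\alpha\epsilon|X|))$, exactly the error term in the statement. A standard median-of-$J$ Chernoff amplification over the $J = \Theta(\log(1/\eta))$ independent groups --- applied to the indicators that each $m_j$ lies in $[\E[m_j]-t, \E[m_j]+t]$, and using that the median of a list lies in any interval containing a majority of its elements --- boosts the failure probability to $\eta$, completing the proof after the triangle inequality with the mean bias $\alpha$. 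The only delicate step is the variance bookkeeping: the noise term benefits from $\|g_i(y)\|_2^2 \leq SR^2$ rather than the naive $QR^2$, and it is precisely this use of the LSQ sparsity property that produces $S^{1.5}$ (instead of some $\sqrt{Q}$) in the final bound.
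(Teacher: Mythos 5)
Your proposal is correct and follows essentially the same route as the paper's proof: the same decomposition $\widetilde F_i^Tg_i(y)=F_i^Tg_i(y)+N_i^Tg_i(y)$, the same key bounds $|F_i^Tg_i(y)|\leq SR^2$ and (via the sparsity of $g_i(y)$) a noise variance of at most $2SR^2(IRS/(\epsilon|X|))^2$, the same $\alpha$ bias control from the approximate LSQ property, and the same median-of-means Chernoff amplification over the $J$ groups. The only cosmetic difference is that you absorb both fluctuations into a single Chebyshev bound on $\Var(m_j)$, whereas the paper applies Hoeffding to the bounded LSQ term and Chebyshev to the noise term separately and takes a union bound; the resulting error terms are identical.
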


%We will also use a utility claim when $\mathcal Q$ is supported on just one pair.

\section{DP-KDE for LSQable Kernels}

\subsection{DP-KDE via Random Fourier Features (RFF)}
We recall the construction of RFF for the Gaussian kernel. %, due to \cite{rahimi2007random}. 
To sample a random feature, one draws $\omega\sim N(0,I_d)$ and $\beta$ uniformly at random over $[0,2\pi)$, and defines the Fourier feature $z_{\omega,\beta}:\R^d\rightarrow\R$ as $z_{\omega,\beta}(x)=\sqrt2\cos(\sqrt2\omega^T x+\beta)$. 
%\cite{rahimi2007random} show that for every $x,y,\in\R^d$,
For every $x,y,\in\R^d$ it holds that
\[ 
  e^{-\norm{x-y}_2^2} = \E_{\omega,\beta}[z_{\omega,\beta}(x)\cdot z_{\omega,\beta}(y)] .
\]
This clearly implies an LSQ family $\mathcal Q$, given by sampling $\omega$ and $\beta$ as above and returning the pair $(z_{\omega,\beta},z_{\omega,\beta})$. Since $z_{\omega,\beta}$ takes values in $[-\sqrt2,\sqrt2]$, we obtain,
\begin{proposition}
The Gaussian kernel admits an exact $(1,\sqrt2,1)$-LSQ family.
\end{proposition}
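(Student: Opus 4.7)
The plan is a short verification of each requirement in Definition~\ref{def:lsq} for the family $\mathcal Q$ already described in the paragraph immediately preceding the proposition: sample $\omega\sim N(0,I_d)$ and $\beta\sim\mathrm{Unif}[0,2\pi)$, and output the pair $(z_{\omega,\beta},z_{\omega,\beta})$ where $z_{\omega,\beta}(x)=\sqrt2\cos(\sqrt2\omega^Tx+\beta)$. First I would check the three structural parameters. Since $z_{\omega,\beta}$ is scalar-valued, we have $Q=1$; since $\cos$ takes values in $[-1,1]$, $|z_{\omega,\beta}(x)|\le\sqrt 2$, so we may take $R=\sqrt 2$; and with $Q=1$ the sparsity bound $S=1$ is vacuous, because a $1$-dimensional vector has at most one non-zero entry.

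The only substantive item is exactness, namely
\[
\E_{\omega,\beta}\!\left[z_{\omega,\beta}(x)\,z_{\omega,\beta}(y)\right] \;=\; e^{-\norm{x-y}_2^2},
\]
which is precisely the RFF identity stated in the preceding paragraph; so formally the proof just quotes it and concludes by plugging $\alpha=0$ into Definition~\ref{def:lsq}. If one wished to verify the identity in-line, I would expand
\[
2\cos(\sqrt2\omega^Tx+\beta)\cos(\sqrt2\omega^Ty+\beta) = \cos(\sqrt2\omega^T(x-y)) + \cos(\sqrt2\omega^T(x+y)+2\beta),
\]
note that averaging $\beta$ over $[0,2\pi)$ annihilates the second term, and then apply the Gaussian characteristic function $\E_{\omega\sim N(0,I_d)}[e^{i\sqrt2\omega^Tu}]=e^{-\norm{u}_2^2}$ (taking real parts) with $u=x-y$.

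I do not expect any real obstacle: this proposition is a bookkeeping step that restates the RFF identity in LSQ language. The one thing to be careful about is matching constants---the factor $\sqrt 2$ in the argument of the cosine is exactly what converts the characteristic-function output $e^{-\norm{x-y}_2^2/2}$ into the paper's $\sigma=1$ Gaussian kernel $e^{-\norm{x-y}_2^2}$, and the prefactor $\sqrt 2$ outside the cosine is what makes the product $z_{\omega,\beta}(x)z_{\omega,\beta}(y)$ carry the correct factor of $2$ so that the product-to-sum identity above yields the kernel without any leftover normalization.
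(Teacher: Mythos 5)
Your proposal is correct and follows essentially the same route as the paper: the paper also defines the family by sampling $(\omega,\beta)$ and returning $(z_{\omega,\beta},z_{\omega,\beta})$, cites the RFF identity $e^{-\norm{x-y}_2^2}=\E_{\omega,\beta}[z_{\omega,\beta}(x)z_{\omega,\beta}(y)]$ for exactness, and reads off $Q=1$, $R=\sqrt2$, $S=1$ from the range of $z_{\omega,\beta}$. Your optional in-line verification via the product-to-sum identity and the Gaussian characteristic function is accurate (including the $\sqrt2$ scaling that yields $e^{-\norm{x-y}_2^2}$ rather than $e^{-\norm{x-y}_2^2/2}$), and merely spells out what the paper quotes from Rahimi--Recht.
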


This leads to our first Gaussian DP-KDE mechanism.

\textbf{Proof of \cref{thm:gauhighdim}.}
%We add to the curator a preprocessing step that subsamples the dataset $X$ down to $O(\log(1/\eta)/(\epsilon\alpha^2))$ uniformly random points. For any given query $y$, this distorts $KDE(X,y)$ by at most an additive $\alpha$ with probability at least $1-\eta$. 
%
Privacy is guaranteed by \cref{lmm:privacy}. For accuracy we use \cref{lmm:utility_largeq}, plugging $S=1$, $R=\sqrt2$ and $|X|\geq O(\log(1/\eta)/(\epsilon\cdot\alpha^2))$ which holds by the theorem's premise. We get that for every $y\in\R^d$, the client outputs $\hat e(y)$ that with probability $1-\eta$ is off by an additive error of $O(\alpha)$ from the subsampled KDE, and we can scale $\alpha$ by the appropriate constant.
For computational efficiency, note that sampling $(f,g)\sim\mathcal Q$ means sampling $\omega\sim N(0,I_d)$ and $\beta\sim[0,2\pi)$, and takes time $O(d)$; the pair $(f,g)$ can be specified by the $d+1$ machine words $\omega,\beta$; and evaluating $f$ or $g$ on a point in $\R^d$ takes $O(d)$ time. 
Plugging these with $I=O(\log(1/\eta)/\alpha^2)$ (from \cref{lmm:utility_largeq}) into \cref{lmm:efficiency}, we obtain  \cref{thm:gauhighdim}. \qed
%. It yields that curator runs in time $O(d\log^2(1/\eta)/(\epsilon\alpha^4))$ and has output size of $O(d\log(1/\eta)/(\epsilon\alpha^2))$ machine words, and the client runs in time $O(d\log(1/\eta)/(\epsilon\alpha^2))$.

\paragraph{Other kernels.} \cite{rahimi2007random} showed that random Fourier features exist all shift-invariant positive definite kernels.
For those kernels, the LSQ framework yields DP-KDE mechanisms with the same error and sample complexity guarantees as the Gaussian kernel in \cref{thm:gauhighdim}. 
However, their computational efficiency may be different, depending on their specific RFF distribution. See \Cref{sec:otherkernels}. %for details. 

%\emph{Other kernels.} The same result as \cref{thm:gauhighdim} holds for any kernel that admits random Fourier features (i.e., all positive definite shift invariant kernels, as shown by \cite{rahimi2007random}) with efficient sampling and evaluation properties. This includes, for example, the Cauchy kernel.

\subsection{DP-KDE via the Fast Gauss Transform (FGT)}\label{sec:fgt}
We review the Fast Gauss Transform. % due to \cite{greengard1991fast}.
%Let $\Phi>0$ be such that all data and query points are contained in a ball $\mathcal B_\Phi$ of radius $\Phi$. 
Let all data and query points be contained in a ball $\mathcal B_\Phi$ of radius $\Phi>0$.
Let $\mathcal G$ be the grid with side-length $1$ in $\R^d$ whose nodes are $\mathbb Z^d$. 
Let $\mathcal G_\Phi$ denote the set of $\mathcal G$-grid cells that intersect $\mathcal B_\Phi$.
For every cell $H\in\mathcal G_\Phi$, let $z^H\in\R^d$ denote its center point.

The FGT is based on the Hermite expansion of the Gaussian kernel.
%We review it briefly and refer to \cite{greengard1991fast} for full details. 
Let $\xi:\R\rightarrow\R$ be defined as $\xi(x)=e^{-x^2}$, and let $\xi^{(r)}$ denote the $r$th derivative of $\xi$ for every $r\geq0$. 
The Hermite function $h_r:\R\rightarrow\R$
is defined as $h_r(x) = (-1)^r\xi^{(r)}(x)$. 
By substituting Taylor series, it can be shown (see \cref{sec:fgtappendix}) that 
for any given $z\in\R^d$, the Gaussian kernel over points in $\R^d$ admits the Hermite expansion,
\[
  e^{-\norm{x-y}_2^2} = \sum_{r_1=1}^\infty \ldots \sum_{r_d=1}^\infty \prod_{j=1}^d\left(x_j-z_j\right)^{r_j}\cdot\frac{1}{r_j!} h_{r_j}\left(y_j-z_j\right) .
\]
%\cite{greengard1991fast} show that t
%By Taylor's remainder theorem, t
Truncating each of the $d$ sums after $\rho=O(\log(1/\alpha))$ terms leads to an additive error of at most $\alpha$. 
%\begin{equation}\label{eq:hermite}
%    \left| e^{-\frac12\norm{x-y}_2^2} - \sum_{r_1=1}^\rho \ldots \sum_{r_d=1}^\rho \prod_{j=1}^d\left(\frac{x_j-z_j}{\sqrt2}\right)^{r_j}\cdot\frac{1}{r_j!}\cdot h_{r_j}\left(\frac{y_j-z_j}{\sqrt2}\right) \right| \leq \alpha .
%\end{equation}
%
We can then define the following pair of functions $f,g$ on $\R^d$.
Each of their coordinates is indexed by a pair $H\in\mathcal G_\Phi$ and $r\in\R^d$, where $r$ has coordinates in $\{0,\ldots,\rho\}$, and is set as follows:
%We define $f,g$ as follows: 
\[
  f_{H,r}(x) = \begin{cases}
    \prod_{j=1}^d\left(x_j-z_j^H\right)^{r_j} & \text{if $x\in H$} \\
    0 & \text{otherwise,}
  \end{cases}
\]
\[
  g_{H,r}(y) = \begin{cases}
    \prod_{j=1}^d\frac{1}{r_j!} h_{r_j}\left(y_j-z_j^H\right) & \text{if $\norm{y-z^H}_2^2\leq\rho$} \\
    0 & \text{otherwise.}
  \end{cases}
\]

%The FGT proceeds by computing $F(X)=\tfrac1{|X|}\sum_{x\in X}f(x)$ on the dataset $X$, and returning $F(X)^Tg(y)$ given a query point $y$. 
For usual FGT, one may compute $F(X)=\tfrac1{|X|}\sum_{x\in X}f(x)$ on the dataset $X$, and return $F(X)^Tg(y)$ given a query point $y$.
To our end, we view $(f,g)$ as an LSQ ``family'' with just one pair. In \cref{sec:fgtappendix} we show the following.
%This also gives rise to an LSQ ``family'' with a single pair of functions $(f,g)$.
%In the appendix we prove this pair has the next properties.

\begin{proposition}\label{prp:fgt}
Let $\alpha>0$ be smaller than a sufficiently small constant, and suppose $d=O(\log(1/\alpha))$. The Gaussian kernel over points contained in a Euclidean ball of radius $\Phi$ in $\R^d$ admits an $\alpha$-approximate $(O((1+\frac{\Phi}{\sqrt{d}})(\log(1/\alpha)))^d, O(1)^d, (\log(1/\alpha))^{O(d)})$-LSQ family, supported on a single pair of functions $(f,g)$. 
Furthermore, the evaluation times of $f$ on $x\in\R^d$ and of $g$ on $y\in\R^d$ are both $(\log(1/\alpha))^{O(d)}$. 
\end{proposition}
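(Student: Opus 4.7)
The plan is to verify the four claims: (a) the pair $(f,g)$ gives $\alpha$-approximate LSQ, (b) the dimension $Q$ has the stated size, (c) the entries lie in $[-R,R]$ with $R = O(1)^d$, and (d) the sparsity $S$ and the evaluation times are $(\log(1/\alpha))^{O(d)}$.

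\textbf{Step 1: Approximation error.} Fix $x,y \in \mathcal B_\Phi$ and let $H$ be the unique cell of $\mathcal G$ containing $x$, so that $|x_j - z_j^H| \le 1/2$ for every $j$. Because the Gaussian factorizes across coordinates, the multivariate Hermite expansion around $z = z^H$ splits as $e^{-\|x-y\|_2^2} = \prod_{j=1}^d E_j$ with $E_j = \sum_{r=0}^\infty \frac{(x_j-z_j)^r}{r!} h_r(y_j-z_j)$. Truncating each $E_j$ to $r < \rho$ (with $\rho = \Theta(\log(1/\alpha))$) and applying Cramér's bound $|h_r(u)| \le C\sqrt{r!}\,2^{r/2}e^{-u^2/2}$ together with $|x_j - z_j|\le 1/2$ gives a per-coordinate tail $|E_j - E_{j,\rho}| \le O(1)\cdot 2^{-\rho/2}$. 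Since $E_j = e^{-(x_j-y_j)^2}\in(0,1]$ and hence $|E_{j,\rho}| \le O(1)$, a standard telescoping of products yields $\bigl|\prod_j E_j - \prod_j E_{j,\rho}\bigr| \le d\cdot O(1)^{d-1}\cdot 2^{-\rho/2}$. Under $d = O(\log(1/\alpha))$ this is at most $\alpha/2$ for a sufficiently large constant in $\rho$. The second source of error is the zeroing of $g_{H,r}(y)$ when $\|y - z^H\|_2^2 > \rho$. In that case $\|x - z^H\|_2 \le \sqrt{d}/2$ forces $\|x-y\|_2 \ge \sqrt{\rho} - \sqrt{d}/2 \ge \tfrac12\sqrt{\rho}$ (choosing $\rho \ge 4d$, which is consistent with $\rho = \Theta(\log(1/\alpha))$), so $k(x,y) \le e^{-\rho/4} \le \alpha/2$; thus the zeroed output $f(x)^Tg(y) = 0$ is itself within $\alpha/2$ of $k(x,y)$. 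Adding the two contributions gives $|k(x,y) - f(x)^T g(y)| \le \alpha$.

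\textbf{Step 2: Parameter bounds.} The number of coordinates is $|\mathcal G_\Phi|\cdot (\rho+1)^d$. By Stirling, the volume of a Euclidean $d$-ball of radius $\Phi$ is $\Theta\bigl(\Phi\sqrt{2\pi e/d}\bigr)^d$, and any unit cell intersecting $\mathcal B_\Phi$ is contained in $\mathcal B_{\Phi + \sqrt{d}}$, which yields $|\mathcal G_\Phi| \le O(1+\Phi/\sqrt{d})^d$ and hence the stated $Q$. For $R$: each $f$-entry is a product of $d$ factors bounded by $1/2$, so $|f_{H,r}(x)| \le 1$; each $g$-entry is bounded by Cramér as $\prod_j C\cdot 2^{r_j/2}/\sqrt{r_j!} \le O(1)^d$, since $2^{r/2}/\sqrt{r!}$ is uniformly bounded in $r$. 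For $S$: $f(x)$ is nonzero only on the single cell containing $x$, contributing $(\rho+1)^d$ entries; $g(y)$ is nonzero only on cells whose centers lie within $\ell_2$-distance $\sqrt{\rho}$ of $y$, of which there are at most $O\bigl(\sqrt{\rho/d}\bigr)^d$ by the same volume estimate. Multiplying by $(\rho+1)^d$ Hermite indices per cell gives $S \le (\log(1/\alpha))^{O(d)}$.

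\textbf{Step 3: Evaluation time.} To evaluate $f(x)$, locate $H$ in $O(d)$ time, precompute the $d\rho$ coordinate powers $\{(x_j - z_j^H)^k : k \le \rho\}$ in $O(d\rho)$ time, and then fill the $(\rho+1)^d$ non-zero entries with one multiplication each via a product table; the total is $(\log(1/\alpha))^{O(d)}$. For $g(y)$, enumerate the $O(\sqrt{\rho/d})^d$ nearby cells, and for each cell precompute the $d\rho$ univariate Hermite values via the three-term recurrence $h_{r+1}(u) = 2u\,h_r(u) - 2r\,h_{r-1}(u)$ in $O(d\rho)$ time, then fill in the entries; the total is again $(\log(1/\alpha))^{O(d)}$.

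The main obstacle is Step 1: bounding both error sources (Hermite truncation and far-cell zeroing) simultaneously to within $\alpha$ uniformly in $x,y$. The critical analytic input is Cramér's inequality on Hermite functions, which supplies the $2^{-\rho/2}$ geometric decay that is compatible with the $(1/2)^r$ factor coming from $|x_j - z_j^H|\le 1/2$; the dimensional hypothesis $d = O(\log(1/\alpha))$ is exactly what lets $\rho = O(\log(1/\alpha))$ absorb the $O(1)^d$ loss from the telescoping and the $O(d)$ loss from the cell-distance argument.
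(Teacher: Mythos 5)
Your proof is correct and follows essentially the same route as the paper's: the same single pair $(f,g)$, the same two-case error analysis (truncated Hermite expansion when $\norm{y-z^{H_x}}_2\leq\sqrt\rho$, and the triangle-inequality bound $k(x,y)\leq e^{-\rho/4}$ when the supports are disjoint), the same volume arguments for $Q$ and $S$, Cram\'er's inequality for $R$, and the same evaluation-time accounting. The only difference is that you derive the truncation error bound from scratch (Cram\'er plus telescoping over the $d$ coordinates, absorbing the $O(1)^d$ loss using $d=O(\log(1/\alpha))$), whereas the paper simply cites Greengard--Strain for that estimate; your derivation is a valid, self-contained substitute.
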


This yields our second Gaussian DP-KDE mechanism.

\textbf{Proof of \cref{thm:gaulowdim}.}
We may assume w.l.o.g.~that $d=O(\log(1/\alpha))$, since otherwise  \cref{thm:gauhighdim} subsumes \cref{thm:gaulowdim}.
Privacy follows from \cref{lmm:privacy}. For utility we use \cref{lmm:utility_smallq}. By plugging $R,S$ from \cref{prp:fgt}, 
the additive error is, with probability $1-\eta$, at most $\alpha+(\epsilon|X|)^{-1}\log(1/\eta)\cdot(\log(1/\alpha))^{O(d)}$. 
By the lower bound on $|X|$ in the theorem statement, this error is at most $O(\alpha)$, and we can scale $\alpha$ by a constant. For efficiency, note that  having only one pair in $\mathcal Q$ means that $T_{\mathcal Q}=O(1)$ and $L_{\mathcal Q}=0$. Plugging these and $Q,R,S,T_f,T_g$ from \Cref{prp:fgt} into \cref{lmm:efficiency} yields the theorem. \qed

\subsection{DP-KDE via Locality Sensitive Hashing (LSH)}\label{sec:lsh}
In this section we observe that if a kernel is LSHable then it is also LSQable, thereby recovering the results of \cite{coleman2020one} for LSHable kernels (which do not include the Gaussian kernel) within the LSQ framework. We recall the relevant definition of kernel LSHability: %, from \cite{andoni2009nearest}:

\begin{definition}\label{def:lsh}
A kernel $k:\R^d\times\R^d\rightarrow[0,1]$ is $\alpha$-approximate LSHable if there is a distribution $\mathcal H$ over hash functions $h:\R^d\rightarrow\{0,1\}^*$, such that for every $x,y\in\R^d$,
\[ \left| k(x,y) - \Pr_{h\sim\mathcal H}[h(x)=h(y)] \right| \leq \alpha . \]
\end{definition}
Suppose the hash functions in $\mathcal H$ map points in $\R^d$ to one of $B$ hash buckets. 
For every $h\in\mathcal H$, let $f_h:\R^d\rightarrow\{0,1\}^B$ map $x$ to the indicator vector of $h(x)$. 
To get an LSQ family $\mathcal Q$ from $\mathcal H$, we may sample $h\sim\mathcal H$ and return the pair $(f_h,f_h)$. For all $x,y\in\R^d$ we thus get $f_h(x)^Tf_h(y)=1$ if $h(x)=h(y)$ and $f_h(x)^Tf_h(y)=0$ if $h(x)\neq h(y)$, hence $\E_{(f,g)\sim\mathcal Q}[f(x)^Tg(y)]=\Pr_{h\sim\mathcal H}[h(x)=h(y)]$. Therefore,
\begin{proposition}\label{prp:lsh2lsq_b}
If $k$ is $\alpha$-approximate LSHable with $B$ hash buckets, then $k$ is $\alpha$-approximate $(B,1,1)$-LSQable.
\end{proposition}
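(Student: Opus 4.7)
The plan is to verify the three structural properties required of $\mathcal{Q}$ in \Cref{def:lsq} for the explicit construction sketched in the paragraph preceding the proposition, and then to evaluate the expected inner product.

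First, I would fix notation. Since each $h \in \mathcal{H}$ has range contained in a set of $B$ hash buckets, I would identify those buckets with $\{1,\dots,B\}$ and, for each $h$, define $f_h:\R^d\to\{0,1\}^B$ by letting $f_h(x)$ be the standard basis vector $e_{h(x)}$. The LSQ family $\mathcal{Q}$ is then obtained by sampling $h\sim\mathcal{H}$ and returning the pair $(f_h,f_h)$.

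Next, I would check the three syntactic conditions of \Cref{def:lsq}. (i) The number of coordinates is $Q=B$. (ii) Each $f_h(x)$ has exactly one nonzero entry, equal to $1$, so in particular its entries lie in $[-1,1]$, giving $R=1$, and each of $f_h(x),f_h(y)$ has at most $S=1$ nonzero coordinate. (iii) For any $x,y\in\R^d$, the inner product $f_h(x)^Tf_h(y) = e_{h(x)}^T e_{h(y)}$ equals $1$ if $h(x)=h(y)$ and $0$ otherwise; that is, $f_h(x)^Tf_h(y)=\1[h(x)=h(y)]$.

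Finally, I would take expectations: $\E_{(f,g)\sim\mathcal{Q}}[f(x)^Tg(y)] = \E_{h\sim\mathcal{H}}[\1[h(x)=h(y)]] = \Pr_{h\sim\mathcal{H}}[h(x)=h(y)]$, and invoke the LSHability hypothesis to conclude $|k(x,y) - \E_{(f,g)\sim\mathcal{Q}}[f(x)^Tg(y)]|\le\alpha$. There is no real obstacle here; the only mild subtlety is that \Cref{def:lsh} formally allows hash codomains in $\{0,1\}^*$, so one must first argue that the effective image has size at most $B$ in order to encode $f_h$ into a $B$-dimensional vector, which is immediate from the hypothesis.
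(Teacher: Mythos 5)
Your proposal is correct and matches the paper's own construction exactly: the paper likewise defines $f_h$ as the indicator vector of $h(x)$ in $\{0,1\}^B$, samples $(f_h,f_h)$ with $h\sim\mathcal H$, observes $f_h(x)^Tf_h(y)=1$ iff $h(x)=h(y)$, and concludes via $\E_{(f,g)\sim\mathcal Q}[f(x)^Tg(y)]=\Pr_{h\sim\mathcal H}[h(x)=h(y)]$ together with the LSHability hypothesis. Your additional remark about identifying the $\{0,1\}^*$ codomain with $\{1,\dots,B\}$ is a harmless clarification of the same argument.
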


This does not immediately lead to efficient DP-KDE, since $B$ can be very large. For example, all known LSHability results for the Laplacian kernel use $B=\exp(d)$ \cite{rahimi2007random,andoni2009nearest,backurs2019space}. This issue does not typically interfere with non-private applications of LSH, due to sparsity (only one bucket is non-empty per point), but in the DP case, this would disclose information about which buckets are empty. Our LSQ mechanism adds noise to each bucket, which would take time proportional to $B$.
Nonetheless, this can be remedied by standard universal hashing; see \cref{sec:lshappendix}.

\begin{proposition}\label{prp:lsh2lsq}
If $k$ is $\alpha$-approximate LSHable, then $k$ is $2\alpha$-approximate $(\lceil1/\alpha\rceil,1,1)$-LSQable.
\end{proposition}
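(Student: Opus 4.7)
The plan is to compose the given LSH family with a universal hash function in order to reduce the number of buckets from $B$ (potentially as large as $\exp(d)$) down to $\lceil 1/\alpha\rceil$, paying only an additional $\alpha$ in approximation error. Once the bucket count is brought down to $\lceil 1/\alpha\rceil$, the claim follows by a direct appeal to \Cref{prp:lsh2lsq_b}.

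\textbf{Step 1: Pick a universal hash family.} Let $\mathcal H$ be the given $\alpha$-approximate LSH family, so its hash functions map $\R^d$ into some bucket set $\mathcal U$ of size $B$. Choose any 2-universal family $\mathcal U_{\mathrm{univ}}$ of functions $u:\mathcal U\rightarrow\{1,\ldots,\lceil 1/\alpha\rceil\}$, so that for every distinct $a,b\in\mathcal U$, $\Pr_{u\sim\mathcal U_{\mathrm{univ}}}[u(a)=u(b)]\leq 1/\lceil 1/\alpha\rceil\leq\alpha$. Define a new hash distribution $\mathcal H'$ by sampling $h\sim\mathcal H$ and $u\sim\mathcal U_{\mathrm{univ}}$ independently and returning the composition $u\circ h$, which is a function $\R^d\rightarrow\{1,\ldots,\lceil 1/\alpha\rceil\}$.

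\textbf{Step 2: Bound the new collision probability.} For fixed $x,y\in\R^d$, condition on whether $h(x)=h(y)$. If they collide under $h$ then they also collide under $u\circ h$, and if they do not, they collide under $u\circ h$ with probability at most $\alpha$ by 2-universality. Hence
\[
  0 \;\leq\; \Pr_{h',u}[u(h'(x))=u(h'(y))] - \Pr_{h}[h(x)=h(y)] \;\leq\; \alpha.
\]
Combining with the LSHability bound $|k(x,y)-\Pr_h[h(x)=h(y)]|\leq\alpha$ via the triangle inequality shows that $\mathcal H'$ is $2\alpha$-approximate LSHable with $\lceil 1/\alpha\rceil$ buckets.

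\textbf{Step 3: Apply \Cref{prp:lsh2lsq_b}.} Plugging $B=\lceil 1/\alpha\rceil$ and error $2\alpha$ into \Cref{prp:lsh2lsq_b} immediately gives that $k$ is $2\alpha$-approximate $(\lceil 1/\alpha\rceil,1,1)$-LSQable, as required.

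The only potentially subtle step is Step 2, specifically verifying that the definition of LSQability tolerates the compositional structure: we are sampling a \emph{pair} $(h,u)$ rather than a single $h$, but the LSQ family defined in \Cref{prp:lsh2lsq_b} treats the resulting composed hash as a single sample from a distribution, so the interpretation is immediate. I do not anticipate a real obstacle — the construction is a clean black-box reduction, and the factor of $2$ in the error is the expected price of the two approximations composed.
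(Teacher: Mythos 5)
Your proof is correct and follows essentially the same route as the paper's: compose the LSH family with a universal hash family mapping into $\lceil 1/\alpha\rceil$ buckets, observe the collision probability increases by at most $1/\lceil 1/\alpha\rceil\leq\alpha$, and then pass to the indicator-vector LSQ construction. The only difference is cosmetic---you invoke \Cref{prp:lsh2lsq_b} as a black box on the composed family, while the paper writes out the indicator-vector LSQ family for the composition directly---and both yield the same $2\alpha$ error bound.
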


Together with \cref{lmm:privacy,lmm:efficiency,lmm:utility_smallq,lmm:utility_largeq}, this recovers the DP-KDE results for LSHable kernels within the LSQ framework. As a concrete example, we re-derive a result of \cite{coleman2020one} for the Laplacian kernel. 
\begin{theorem}
\label{thm:laplaciankde}
%Let $\epsilon>0$ and $\alpha,\eta\in(0,1)$. 
There is an $\epsilon$-DP function release mechanism for $(\alpha,\eta)$-approximation of Laplacian KDE on datasets in $\R^d$ of size $n\geq O(\log(1/\eta)/(\epsilon\alpha^2))$, and:
\begin{CompactItemize} 
  \item The curator runs in time $O(nd\log(1/\eta)/\alpha^2)$.
  \item The output size is $O(d\log(1/\eta)/\alpha^2)$.
  \item The client runs in time $O(d\log(1/\eta)/\alpha^2)$. 
\end{CompactItemize}
\end{theorem}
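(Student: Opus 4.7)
\textbf{Proof proposal for \Cref{thm:laplaciankde}.}
The plan follows the same blueprint as that of \Cref{thm:gauhighdim}, but substituting a Laplacian-kernel LSH family for the Gaussian RFF family. Concretely: (i) invoke a known LSH family for the Laplacian kernel, (ii) convert it to an LSQ family via \Cref{prp:lsh2lsq}, and (iii) plug the resulting LSQ family into \Cref{alg:main} and appeal to \Cref{lmm:privacy,lmm:efficiency,lmm:utility_largeq}.

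For step (i), I would cite that the Laplacian kernel is exactly LSHable: there is a distribution $\mathcal H$ of hash functions $h\colon\R^d\to\{0,1\}^\ast$ with $\Pr_{h\sim\mathcal H}[h(x)=h(y)] = \exp(-\norm{x-y}_1)$, such that sampling $h\sim\mathcal H$, evaluating $h$ on a point, and storing the description of $h$ each cost $O(d)$ time or space. For step (ii), \Cref{prp:lsh2lsq} immediately yields an $O(\alpha)$-approximate $(\lceil 1/\alpha\rceil,1,1)$-LSQ family $\mathcal Q$, inheriting the same $O(d)$ sampling, evaluation, and description costs. Thus the relevant LSQ parameters are $Q = \lceil 1/\alpha\rceil$, $R=1$, $S=1$, $T_\mathcal Q = T_f = T_g = O(d)$, and $L_\mathcal Q = O(d)$.

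For step (iii), run \Cref{alg:main} with $J = \Theta(\log(1/\eta))$ and $I = \Theta(\log(1/\eta)/\alpha^2)$. Privacy is immediate from \Cref{lmm:privacy}. For accuracy, \Cref{lmm:utility_largeq} with $R=S=1$ bounds the error by $\alpha + O(\alpha + \log(1/\eta)/(\alpha\epsilon|X|))$, and the hypothesis $|X|\geq \Omega(\log(1/\eta)/(\epsilon\alpha^2))$ reduces this to $O(\alpha)$, which rescales to $\alpha$. Plugging the LSQ parameters into \Cref{lmm:efficiency}, and using $n \geq \log(1/\eta)/\alpha^2$ to absorb the $IQ$ term, gives the claimed curator time $O(nd\log(1/\eta)/\alpha^2)$ and client time $O(d\log(1/\eta)/\alpha^2)$.

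The only bookkeeping subtlety, and the step I would expect to require the most care, concerns the output size. \Cref{lmm:efficiency} directly yields $O(I(L_\mathcal Q+Q)) = O(\log(1/\eta)/\alpha^2)(d + 1/\alpha)$, which matches the claimed $O(d\log(1/\eta)/\alpha^2)$ exactly when $d = \Omega(1/\alpha)$. For the complementary regime one can store the noisy bucket vector implicitly: record only the at-most-$|X|$ nonzero entries of $F_i$ together with a PRG seed for the Laplace noise, so that any needed coordinate of $\widetilde F_i$ is reconstructed in $O(1)$ time at query evaluation. This shaves the $Q$ term from the stored size and yields the claimed $O(d\log(1/\eta)/\alpha^2)$ bound, at the cost of a slight (standard) departure from a purely black-box invocation of \Cref{lmm:efficiency}.
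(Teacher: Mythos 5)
Your overall route is exactly the paper's: the Laplacian kernel is LSHable with LSH families having $O(d)$ sampling, evaluation and description cost; \Cref{prp:lsh2lsq} converts this into a $2\alpha$-approximate $(\lceil 1/\alpha\rceil,1,1)$-LSQ family; and \Cref{lmm:privacy}, \Cref{lmm:utility_largeq} (with $J=\Theta(\log(1/\eta))$, $I=\Theta(\log(1/\eta)/\alpha^2)$, and the lower bound on $n$) and \Cref{lmm:efficiency} give privacy, $(\alpha,\eta)$-approximation after rescaling, and the curator/client time bounds. Up to that point your argument coincides with the paper's proof of \Cref{thm:laplaciankde}, which simply plugs $T_{\mathcal Q},T_f,T_g,L_{\mathcal Q}=O(d)$ into \Cref{lmm:efficiency}.

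The one place you depart---the PRG-seed compression of the output---is not sound. Releasing the exact nonzero entries of $F_i$ together with a seed from which the Laplace noise is regenerated is equivalent to releasing $F_i$ itself: anyone holding the released description can recompute the noise from the seed and subtract it, recovering the un-noised bucket averages, so $\epsilon$-DP fails (the privacy of the Laplace mechanism in \Cref{alg:main} requires the noise realization to stay hidden; only the sum $\widetilde F_i=F_i+N_i$ may be disclosed). The honest black-box bound from \Cref{lmm:efficiency} is $O(I(L_{\mathcal Q}+Q)) = O\bigl((d+1/\alpha)\log(1/\eta)/\alpha^2\bigr)$, which is also what the paper's finer accounting for LSH-based LSQ in \Cref{sec:otherkernels} records (the $\min\{B,\log B+1/\alpha\}$ term); this matches the stated $O(d\log(1/\eta)/\alpha^2)$ when $1/\alpha=O(d)$, and the paper does not use (or need) the workaround you propose. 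So drop the seed-based trick and either keep the additional $1/\alpha$ term in the output size or note the regime in which it is absorbed by $d$; everything else in your proposal is correct and follows the paper's own proof.
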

\begin{proof}
The Laplacian kernel is LSHable, hence by \Cref{prp:lsh2lsq}, it is $2\alpha$-approximate $(\lceil1/\alpha\rceil,1,1)$-LSQable. 
By \Cref{lmm:privacy,lmm:utility_largeq}, this implies an $\epsilon$-DP mechanism with $(\alpha,\eta)$-approximation for Laplacian KDE. 
Furthermore, the Laplacian kernel LSH families from \cite{rahimi2007random,andoni2009nearest,backurs2019space} 
have $O(d)$ evaluation time, hashing time and description size.
Viewed as LSQ families, they satisfy $T_{\mathcal Q},T_f,T_g,L_{\mathcal Q}=O(d)$ in the notation of \Cref{lmm:efficiency}, which yields the theorem.
\end{proof}

%For computational efficiency, we recall that 
%\cite{rahimi2007random,andoni2009nearest,backurs2019space} are all based on randomly thresholding the coordinates of the input points.

The Laplacian kernel DP-KDE bounds in \Cref{thm:laplaciankde} are the same as those of the Gaussian kernel in \Cref{thm:gauhighdim}. We also remark that the Laplacian kernel admits an efficient RFF distribution, different than its LSH families. Thus, we can also instantiate the LSQ-RFF mechanism for it. This would lead to an alternative proof of \Cref{thm:laplaciankde}, yielding the same asymptotic bounds via a different DP-KDE mechanism; see \Cref{sec:otherkernels_rff}.
%  (similar to that of the Gaussian kernel, except with the $\omega$'s drawn from a Cauchy distribution instead of a normal distribution)

See \Cref{sec:lshable} for an overview of other LSHable kernels. %, and \Cref{sec:otherkernels} for a discussion of the DP-KDE results we can obtain for them.

%\Cref{sec:lshable} provides on overview of other LSHable kernels. For those kernels we get the same privacy, utility and sample complexity bounds as \Cref{thm:laplaciankde}, but possibly with different computational complexity, depending on their specific LSH families. See \Cref{sec:otherkernels} for a discussion. 

\section{Experiments}\label{sec:experiments}
We evaluate our mechanisms on public benchmark datasets in both the high- and low-dimensional regimes. For compatibility, we select datasets often used in prior work on density estimation and clustering:
\begin{CompactItemize}
  \item Covertype: forest cover types ($n=581{,}012$, $d=55$) \cite{blackard1999comparative}
  \item GloVe: word embeddings ($n=1{,}000{,}000$, $d=100$) \cite{pennington2014glove}
  \item Diabetes: age and days in hospital ($n=101{,}766$, $d=2$) \cite{strack2014impact}
  \item NYC Taxi: longitude and latitude ($n=100{,}000$, $d=2$) \cite{new-york-city-taxi-fare-prediction}
\end{CompactItemize}

\begin{figure*}[ht]
%\stepcounter{figure}
\centering
\begin{minipage}[t]{\columnwidth}
\begin{minipage}[b]{\columnwidth}
\includegraphics[width=0.88\linewidth]{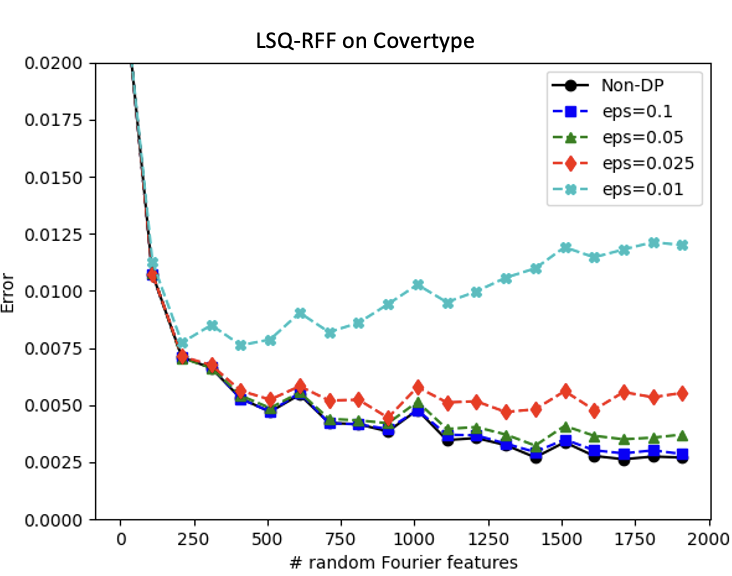}\par
\end{minipage}
%\Scaption{this is the caption for the left subfigure}
%\label{sfig:testa}
\end{minipage}\hfill
\begin{minipage}[t]{\columnwidth}
\begin{minipage}[b]{\columnwidth}
\includegraphics[width=0.88\linewidth]{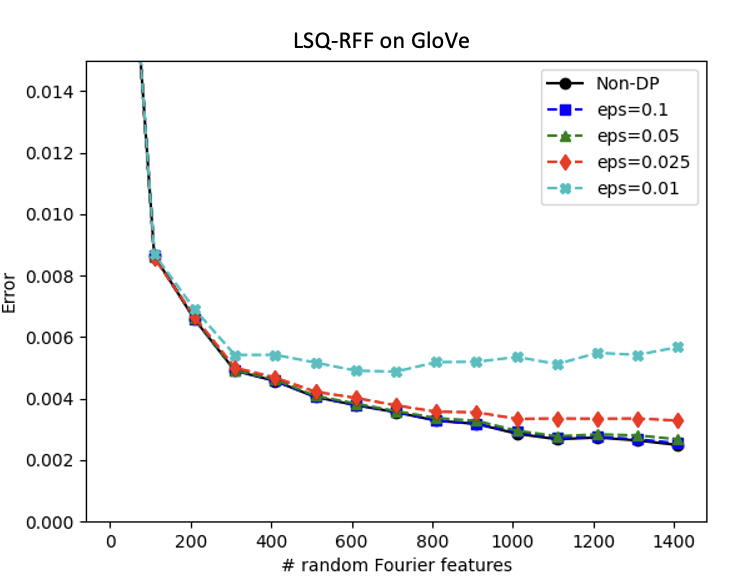}
\end{minipage}
%\Scaption{this is the caption for the right subfigure, spanning two lines}
%\label{sfig:testb}
\end{minipage}
\begin{minipage}[t]{\columnwidth}
\begin{minipage}[b]{\columnwidth}
\includegraphics[width=0.88\linewidth]{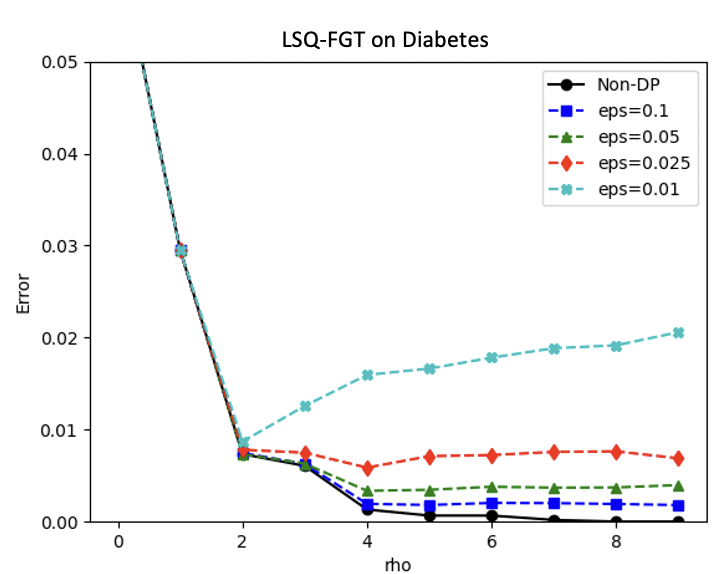}\par
\end{minipage}
%\Scaption{this is the caption for the left subfigure}
%\label{sfig:testa}
\end{minipage}\hfill
\begin{minipage}[t]{\columnwidth}
\begin{minipage}[b]{\columnwidth}
\includegraphics[width=0.88\linewidth]{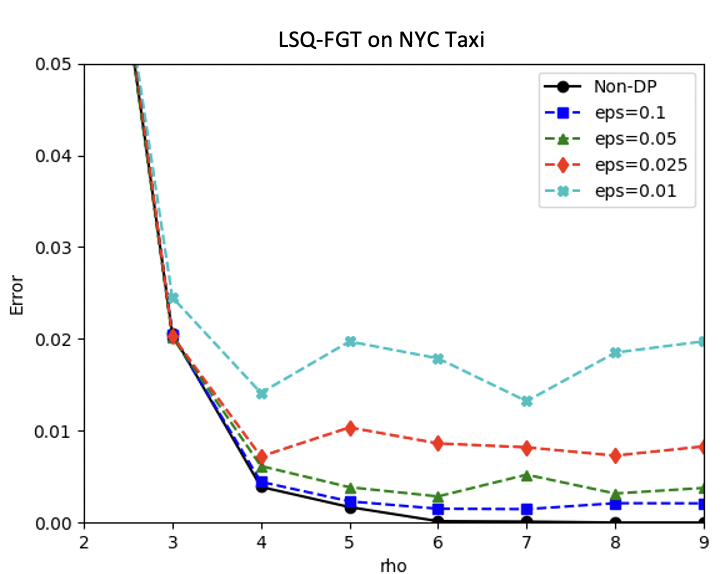}
\end{minipage}
%\Scaption{this is the caption for the right subfigure, spanning two lines}
\end{minipage}%
\vspace{-10pt}
\caption{Error vs.~computational budget}
\label{fig:errdiv}
\end{figure*}

%\begin{figure*}
%\includegraphics[width=0.475\linewidth]{plots/errdiv_covertype.png}\hfill%
%\includegraphics[width=0.475\linewidth]{plots/errdiv_glove.png}
%\includegraphics[width=0.475\linewidth]{plots/errdiv_diabetes.png}\hfill%
%\includegraphics[width=0.475\linewidth]{plots/errdiv_taxi.png}
%\caption{Error vs.~computational budget}
%\end{figure*}

%\paragraph{Error divergence.}

Query points are chosen at random from each dataset and are held out from it. 
The reported experimental results are averaged over $100$ queries and $10$ trials with independent random seeds. 
%Our implementations are included in the supplementary material. 
Our code is available online.\footnote{\url{https://github.com/talwagner/lsq}} 
\Cref{sec:experiments_appendix} includes additional details on the implementation of our mechanisms, additional experiments, and more details on our experimental framework and bandwidth selections. 
%Additional experiments, as well as details on the experimental framework and bandwidth selection, are included in \Cref{sec:experiments_appendix}.

\subsection{Parameter Selection}\label{sec:error_progression}

In the first experiment, we measure the KDE approximation error of our mechanisms as we increase the parameter that governs their computational budget---the number of Fourier features in RFF, and $\rho$ in FGT. 
\Cref{fig:errdiv} displays the results for several values of $\epsilon$, as well as for a non-private variant of each mechanism, that elides the Laplace noise addition step in \Cref{alg:main} (i.e., it sets $\widetilde F_i=F_i$).

The results highlight a key difference between the DP and non-DP variants: while the error of the non-DP variants converges to zero as we increase their computational budget, the error of the DP mechanisms begins to diverge at a certain point, which corresponds to a smaller parameter setting for smaller $\epsilon$.\footnote{Convergence to zero error is impossible for DP mechanisms due to the sample complexity limitation: for a given dataset size $n$ and $\epsilon>0$, the error $\alpha$ must be large enough to render $n\geq\mathrm{sc}(M)$.} 
This behavior stems from the interplay between non-private approximation and privacy-preserving noise: as we increase the computational budget, the non-private approximation component of the mechanism becomes more accurate, thus disclosing more information about the dataset, that needs to be offset with a larger magnitude of privacy-preserving noise. 
The optimal parameter setting corresponds to the point of balance between the non-private approximation error and the privacy noise error.

%The results highlight a key difference between the DP and non-DP variants: while the error of the non-DP variants converges to zero as we increase their computational budget, the error of the DP mechanisms begins to diverge at a certain point, which is earlier for smaller values of $\epsilon$.\footnote{Convergence to zero error is impossible for DP mechanisms due to the sample complexity limitation: for a given dataset size $n$ and $\epsilon>0$, the error $\alpha$ must be large enough to render $n\geq\mathrm{sc}(M)$.} 
%This behavior stems from the interplay between non-private approximation and privacy-preserving noise: as we increase the computational budget, the non-private approximation component of the mechanism becomes more accurate, thus exposing more information about the dataset, that needs to be offset with a larger magnitude of privacy-preserving noise. 
%The optimal parameter setting is the point of balance between the approximation error and the privacy noise error.

For LSQ-RFF, as we increase the number of Fourier features $m$, the error of approximating the Gaussian kernel with (non-private) RFF decays like $1/\sqrt{m}$ by Hoeffding's inequality, while the Laplace noise magnitude grows like $\sqrt{m}/(\epsilon n)$. Hence, the optimal number of Fourier features is $m=\Theta(\epsilon n)$.
Using more features would increase the overall error while having higher computational cost.  

For LSQ-FGT, as we increase $\rho$, the non-private truncation error of the Hermite expansion decays like $\exp(-\rho)$, while the Laplace noise magnitude grows like $\rho^{O(d)}/(\epsilon n)$, hence the optimal setting is $\rho=\Theta(\log(\epsilon n)) - O(d\log\log(\epsilon n))$.

\begin{figure*}[ht]
%\stepcounter{figure}
\centering
\begin{minipage}[t]{\columnwidth}
\begin{minipage}[b]{\columnwidth}
\includegraphics[width=0.88\linewidth]{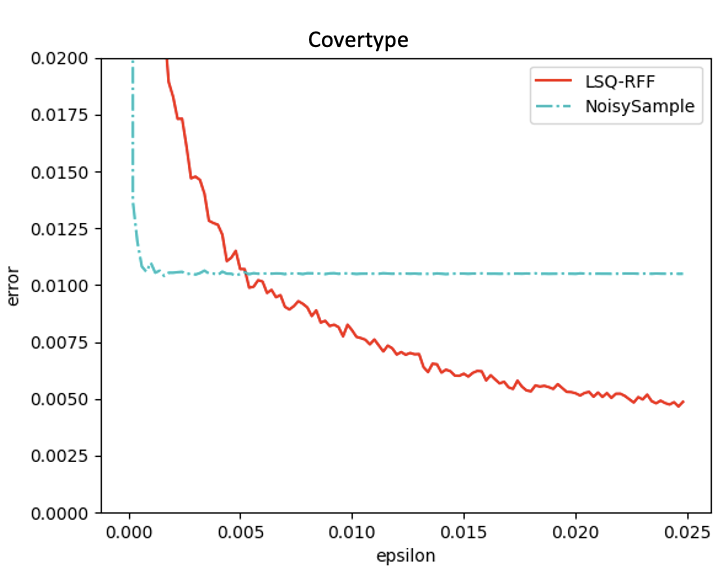}\par
\end{minipage}
%\Scaption{this is the caption for the left subfigure}
%\label{sfig:testa}
\end{minipage}\hfill
\begin{minipage}[t]{\columnwidth}
\begin{minipage}[b]{\columnwidth}
\includegraphics[width=0.88\linewidth]{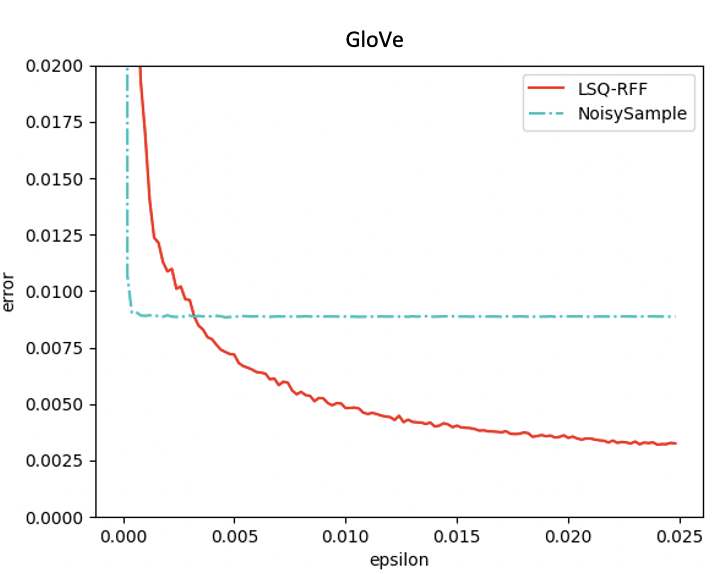}
\end{minipage}
%\Scaption{this is the caption for the right subfigure, spanning two lines}
%\label{sfig:testb}
\end{minipage}
\begin{minipage}[t]{\columnwidth}
\begin{minipage}[b]{\columnwidth}
\includegraphics[width=0.88\linewidth]{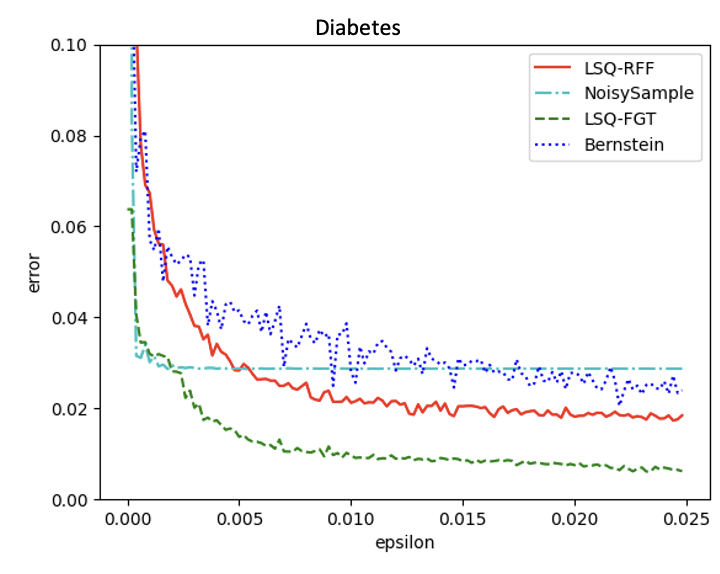}\par
\end{minipage}
%\Scaption{this is the caption for the left subfigure}
%\label{sfig:testa}
\end{minipage}\hfill
\begin{minipage}[t]{\columnwidth}
\begin{minipage}[b]{\columnwidth}
\includegraphics[width=0.88\linewidth]{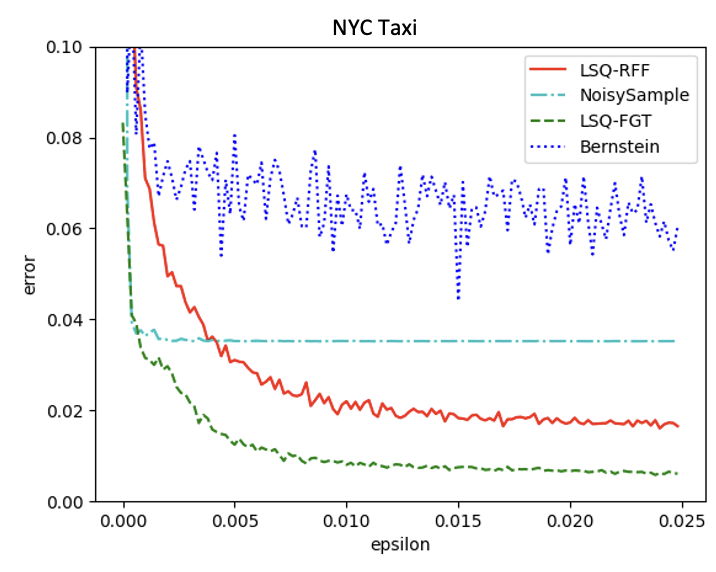}
\end{minipage}
%\Scaption{this is the caption for the right subfigure, spanning two lines}
\end{minipage}%
%\vspace{-10pt}
\caption{Error vs.~privacy}
\label{fig:errpriv}
\end{figure*}

%\begin{figure*}[h!]
%%\stepcounter{figure}
%\centering
%\begin{minipage}[t]{\columnwidth}
%\begin{minipage}[b]{\columnwidth}
%\includegraphics[width=0.88\linewidth]{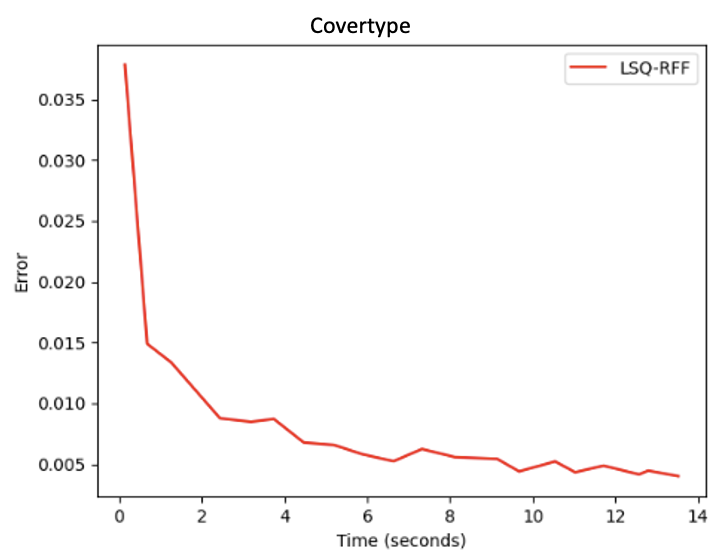}\par
%\end{minipage}
%%\Scaption{this is the caption for the left subfigure}
%%\label{sfig:testa}
%\end{minipage}\hfill
%\begin{minipage}[t]{\columnwidth}
%\begin{minipage}[b]{\columnwidth}
%\includegraphics[width=0.88\linewidth]{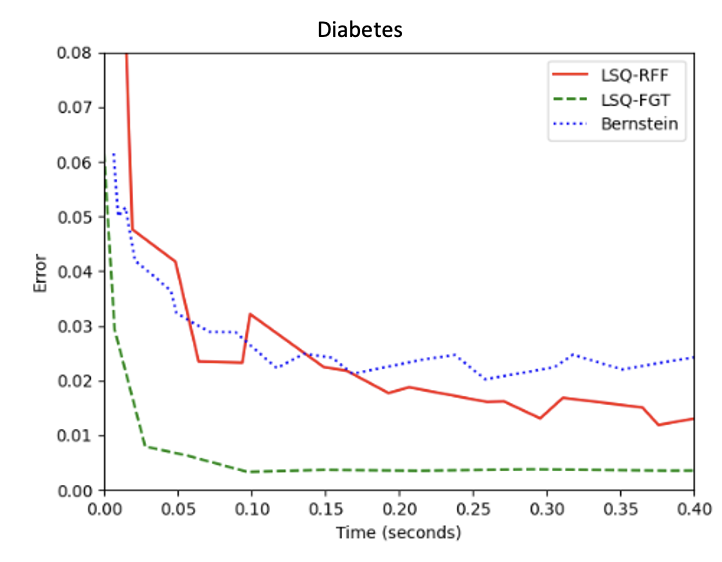}
%\end{minipage}
%\end{minipage}%
%\vspace{-10pt}
%\caption{Error vs.~runtime with $\epsilon=0.05$}
%\label{fig:runtime}
%\end{figure*}

\begin{figure*}[ht]
%\stepcounter{figure}
\centering
\begin{minipage}[t]{\columnwidth}
\begin{minipage}[b]{\columnwidth}
\includegraphics[width=0.88\linewidth]{plots/runtime_covertype_005.png}\par
\end{minipage}
%\Scaption{this is the caption for the left subfigure}
%\label{sfig:testa}
\end{minipage}\hfill
\begin{minipage}[t]{\columnwidth}
\begin{minipage}[b]{\columnwidth}
\includegraphics[width=0.88\linewidth]{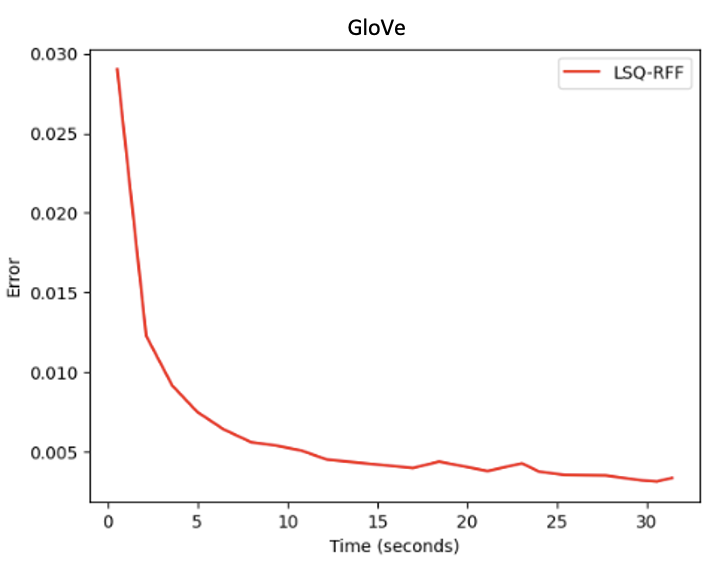}
\end{minipage}
%\Scaption{this is the caption for the right subfigure, spanning two lines}
%\label{sfig:testb}
\end{minipage}
\begin{minipage}[t]{\columnwidth}
\begin{minipage}[b]{\columnwidth}
\includegraphics[width=0.88\linewidth]{plots/runtime_diabetes_005.png}\par
\end{minipage}
%\Scaption{this is the caption for the left subfigure}
%\label{sfig:testa}
\end{minipage}\hfill
\begin{minipage}[t]{\columnwidth}
\begin{minipage}[b]{\columnwidth}
\includegraphics[width=0.88\linewidth]{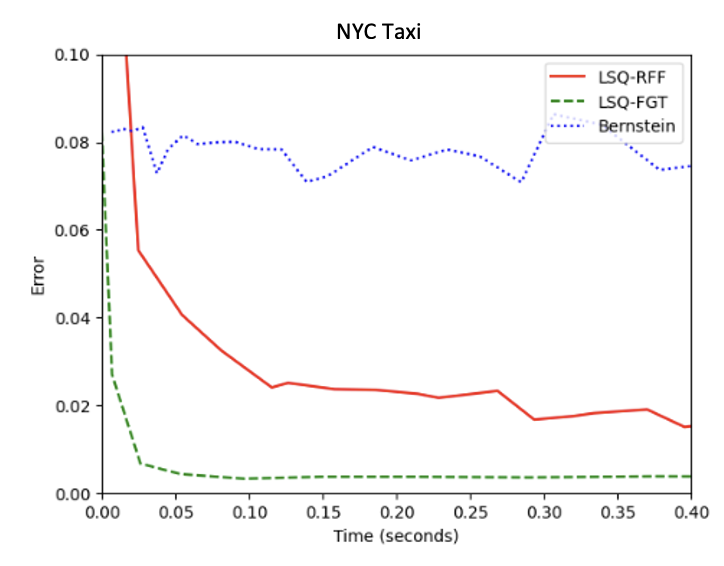}
\end{minipage}
%\Scaption{this is the caption for the right subfigure, spanning two lines}
\end{minipage}%
%\vspace{-10pt}
\caption{Error vs. curator running times with $\epsilon=0.05$}
\label{fig:runtime}
\end{figure*}

The upshot is that the parameters should be chosen not only by the available computational budget, but also the desired privacy $\epsilon$ and available dataset size $n$.\footnote{Note that setting the parameters of the mechanism according to the dataset size $n$---e.g., choosing $m\sim\epsilon n$ or $\rho\sim\log(\epsilon n)$---leaks information about $n$ and could affect privacy. This can be easily avoided, for example by using $\tilde n=n+\mathrm{Laplace}(1/\epsilon)$ instead of $n$. It can be easily checked that $\tilde n$ is $\epsilon$-DP and that using it instead of $n$ would only change $m$ or $\rho$ by an additive constant.}

\subsection{Performance}\label{sec:error_privacy}

\textbf{Error vs.~privacy.}
We measure the privacy to error trade-off, with each algorithm evaluated at its optimal setting of parameters for the given value of $\epsilon$. We compare our mechanisms to the following baselines:
\begin{CompactItemize}
  \item NoisySample: A vanilla mechanism that samples $100$ points from the dataset, computes the average of their true KDEs plus a sample from $\mathrm{Laplace}(1/(\epsilon|X|))$, and returns this value as the estimate for any query KDE. The mechanism is $\epsilon$-DP w.r.t.~the non-sampled points. It helps verify that the KDE function is not degenerate and essentially constant (and thus trivial to approximate).
  %  The mechanism is $\epsilon$-DP w.r.t.~the non-sampled points.
  %\item The Bernstein mechanism of \cite{alda2017bernstein}.% (see \Cref{sec:prior}).
  \item The Bernstein mechanism \cite{alda2017bernstein}, prior state of the art for Gaussian DP-KDE (with pure DP). It has the same error divergence behavior discussed in \Cref{sec:error_progression}, and we evaluate it too at its optimal parameter setting (see \Cref{{sec:expdetails}} for details on Bernstein).
\end{CompactItemize}
The results are in \Cref{fig:errpriv}.
LSQ-FGT and Bernstein are evaluated only on the on the low-dimensional datasets, as they are infeasible for the high-dimensional datasets.

The results show that our LSQ-based mechanisms outperform the baselines by large margins, and procude accurate KDE estimates in desirable privacy regimes. On the low-dimensional datasets, the results corroborate the privacy/error trade-offs predicted by the sample complexity of the mechanisms, as listed in \Cref{tbl:main}.
Note that LSQ-FGT is expected to outperform LSQ-RFF in this regime, due to the near-linear dependence of its sample complexity on $\alpha^{-1}$, compared to the quadratic dependence of LSQ-RFF. 
On the high-dimensional datasets, only LSQ-RFF is feasible.

%Regarding the Bernstein mechanism, its performance 
The performance of Bernstein 
depends on the smoothness of the KDE function, which is determined by the bandwidth $\sigma$ under scaling the data into a unit hypercube (cf.~\Cref{sec:prior}). In particular, its sample complexity depends on $d/\sigma^2$. For NYC Taxi, this quantity is much larger than for Diabetes (cf.~\Cref{sec:experiments_appendix}), accounting for the degraded performance of Bernstein on NYC Taxi compared to Diabetes.

\textbf{Running times.}
We plot the error attained by the mechanisms versus their curator running time. 
\Cref{fig:runtime} reports the results with $\epsilon=0.05$, and \Cref{fig:runtime002} (in the appendix) repeats the experiment with $\epsilon=0.02$. 
Here too, in the high-dimensional regime LSQ-RFF is the only feasible mechanism, while in the low-dimensional regime LSQ-FGT has the best performance.

%\Cref{fig:runtime} below displays the results with $\epsilon=0.05$, and \Cref{fig:runtime002} (in the appendix) repeats the experiment with $\epsilon=0.02$. 

%Representative results for Covertype and Diabetes are displayed in \Cref{fig:runtime}, and the full results are given in \Cref{fig:runtime005,fig:runtime002} in \Cref{sec:appruntimes}

%\Cref{fig:runtime} below displays the results with $\epsilon=0.05$, and \Cref{fig:runtime002} (in the appendix) repeats the experiment with $\epsilon=0.02$. 

%\begin{figure}[h!]
%\centering
%\includegraphics[width=0.9\linewidth]{}
%\caption{Error vs.~running time $\epsilon=0.05$}
%\label{fig:runtime}
%\end{figure}

\section*{Acknowledgements}
We thank Justin Chen, James Cook, Christos Faloutsos, Hakan Ferhatosmanoglu, Supriya Nagesh, Matthew Reimherr, Aaron Roth, Milind Shyani, Doug Terry, Elizabeth Yang, and the anonymous reviewers for helpful feedback, comments and suggestions on this work. 

\bibliography{dpkde}

\begin{thebibliography}{45}
\providecommand{\natexlab}[1]{#1}
\providecommand{\url}[1]{\texttt{#1}}
\expandafter\ifx\csname urlstyle\endcsname\relax
  \providecommand{\doi}[1]{doi: #1}\else
  \providecommand{\doi}{doi: \begingroup \urlstyle{rm}\Url}\fi

\bibitem[Alda \& Rubinstein(2017)Alda and Rubinstein]{alda2017bernstein}
Alda, F. and Rubinstein, B.~I.
\newblock The bernstein mechanism: Function release under differential privacy.
\newblock In \emph{Thirty-First AAAI Conference on Artificial Intelligence},
  2017.

\bibitem[Alman et~al.(2020)Alman, Chu, Schild, and Song]{alman2020algorithms}
Alman, J., Chu, T., Schild, A., and Song, Z.
\newblock Algorithms and hardness for linear algebra on geometric graphs.
\newblock In \emph{2020 IEEE 61st Annual Symposium on Foundations of Computer
  Science (FOCS)}, pp.\  541--552. IEEE, 2020.

\bibitem[Andoni \& Indyk(2009)Andoni and Indyk]{andoni2009nearest}
Andoni, A. and Indyk, P.
\newblock Dimension reduction in kernel spaces from locality-sensitive hashing.
\newblock \emph{Maniscript, also available in Andoni A.,``Nearest neighbor
  search: the old, the new, and the impossible'', PhD thesis, Massachusetts
  Institute of Technology}, 2009.

\bibitem[Aum{\"u}ller et~al.(2021)Aum{\"u}ller, Lebeda, and
  Pagh]{aumuller2021differentially}
Aum{\"u}ller, M., Lebeda, C.~J., and Pagh, R.
\newblock Differentially private sparse vectors with low error, optimal space,
  and fast access.
\newblock In \emph{Proceedings of the 2021 ACM SIGSAC Conference on Computer
  and Communications Security}, pp.\  1223--1236, 2021.

\bibitem[Backurs et~al.(2018)Backurs, Charikar, Indyk, and
  Siminelakis]{backurs2018efficient}
Backurs, A., Charikar, M., Indyk, P., and Siminelakis, P.
\newblock Efficient density evaluation for smooth kernels.
\newblock In \emph{2018 IEEE 59th Annual Symposium on Foundations of Computer
  Science (FOCS)}, pp.\  615--626. IEEE, 2018.

\bibitem[Backurs et~al.(2019)Backurs, Indyk, and Wagner]{backurs2019space}
Backurs, A., Indyk, P., and Wagner, T.
\newblock Space and time efficient kernel density estimation in high
  dimensions.
\newblock \emph{Advances in Neural Information Processing Systems (NeurIPS)},
  2019.

\bibitem[Backurs et~al.(2021)Backurs, Indyk, Musco, and
  Wagner]{backurs2021faster}
Backurs, A., Indyk, P., Musco, C., and Wagner, T.
\newblock Faster kernel matrix algebra via density estimation.
\newblock In \emph{International Conference on Machine Learning (ICML)}, 2021.

\bibitem[Blackard \& Dean(1999)Blackard and Dean]{blackard1999comparative}
Blackard, J.~A. and Dean, D.~J.
\newblock Comparative accuracies of artificial neural networks and discriminant
  analysis in predicting forest cover types from cartographic variables.
\newblock \emph{Computers and electronics in agriculture}, 24\penalty0
  (3):\penalty0 131--151, 1999.
\newblock URL \url{https://archive.ics.uci.edu/ml/datasets/covertype}.

\bibitem[Blocki et~al.(2012)Blocki, Blum, Datta, and
  Sheffet]{blocki2012johnson}
Blocki, J., Blum, A., Datta, A., and Sheffet, O.
\newblock The johnson-lindenstrauss transform itself preserves differential
  privacy.
\newblock In \emph{2012 IEEE 53rd Annual Symposium on Foundations of Computer
  Science (FOCS)}, pp.\  410--419. IEEE, 2012.

\bibitem[Blum et~al.(2013)Blum, Ligett, and Roth]{blum2013learning}
Blum, A., Ligett, K., and Roth, A.
\newblock A learning theory approach to noninteractive database privacy.
\newblock \emph{Journal of the ACM (JACM)}, 60\penalty0 (2):\penalty0 1--25,
  2013.

\bibitem[Carter \& Wegman(1977)Carter and Wegman]{carter1977universal}
Carter, J.~L. and Wegman, M.~N.
\newblock Universal classes of hash functions.
\newblock In \emph{Proceedings of the ninth annual ACM symposium on Theory of
  computing}, pp.\  106--112, 1977.

\bibitem[Charikar \& Siminelakis(2017)Charikar and
  Siminelakis]{charikar2017hashing}
Charikar, M. and Siminelakis, P.
\newblock Hashing-based-estimators for kernel density in high dimensions.
\newblock In \emph{2017 IEEE 58th Annual Symposium on Foundations of Computer
  Science (FOCS)}, pp.\  1032--1043. IEEE, 2017.

\bibitem[Chavez et~al.(2018)Chavez, Sterling, Elliott, V, Sagar, and
  Cukierski]{new-york-city-taxi-fare-prediction}
Chavez, A., Sterling, D., Elliott, J., V, L., Sagar, and Cukierski, W.
\newblock New york city taxi fare prediction, 2018.
\newblock URL
  \url{https://kaggle.com/competitions/new-york-city-taxi-fare-prediction}.

\bibitem[Cherapanamjeri \& Nelson(2020)Cherapanamjeri and
  Nelson]{cherapanamjeri2020adaptive}
Cherapanamjeri, Y. and Nelson, J.
\newblock On adaptive distance estimation.
\newblock \emph{Advances in Neural Information Processing Systems},
  33:\penalty0 11178--11190, 2020.

\bibitem[Chierichetti \& Kumar(2015)Chierichetti and
  Kumar]{chierichetti2015lsh}
Chierichetti, F. and Kumar, R.
\newblock Lsh-preserving functions and their applications.
\newblock \emph{Journal of the ACM (JACM)}, 62\penalty0 (5):\penalty0 1--25,
  2015.

\bibitem[Coleman \& Shrivastava(2020)Coleman and Shrivastava]{coleman2020sub}
Coleman, B. and Shrivastava, A.
\newblock Sub-linear race sketches for approximate kernel density estimation on
  streaming data.
\newblock In \emph{Proceedings of The Web Conference 2020}, pp.\  1739--1749,
  2020.

\bibitem[Coleman \& Shrivastava(2021)Coleman and Shrivastava]{coleman2020one}
Coleman, B. and Shrivastava, A.
\newblock A one-pass distributed and private sketch for kernel sums with
  applications to machine learning at scale.
\newblock In \emph{Proceedings of the ACM SIGSAC Conference on Computer and
  Communications Security (CCS)}, pp.\  3252–3265, 2021.

\bibitem[Cunningham et~al.(2021)Cunningham, Cormode, and
  Ferhatosmanoglu]{cunningham2021privacy}
Cunningham, T., Cormode, G., and Ferhatosmanoglu, H.
\newblock Privacy-preserving synthetic location data in the real world.
\newblock In \emph{17th International Symposium on Spatial and Temporal
  Databases}, pp.\  23--33, 2021.

\bibitem[Dwork et~al.(2006)Dwork, McSherry, Nissim, and
  Smith]{dwork2006calibrating}
Dwork, C., McSherry, F., Nissim, K., and Smith, A.
\newblock Calibrating noise to sensitivity in private data analysis.
\newblock In \emph{Theory of cryptography conference (TCC)}, pp.\  265--284.
  Springer, 2006.

\bibitem[Dwork et~al.(2014)Dwork, Roth, et~al.]{dwork2014algorithmic}
Dwork, C., Roth, A., et~al.
\newblock The algorithmic foundations of differential privacy.
\newblock \emph{Foundations and Trends{\textregistered} in Theoretical Computer
  Science}, 9\penalty0 (3--4):\penalty0 211--407, 2014.

\bibitem[Feldman \& Talwar(2021)Feldman and Talwar]{feldman2021lossless}
Feldman, V. and Talwar, K.
\newblock Lossless compression of efficient private local randomizers.
\newblock In \emph{International Conference on Machine Learning (ICML)}, 2021.

\bibitem[Greengard \& Rokhlin(1987)Greengard and Rokhlin]{greengard1987fast}
Greengard, L. and Rokhlin, V.
\newblock A fast algorithm for particle simulations.
\newblock \emph{Journal of computational physics}, 73\penalty0 (2):\penalty0
  325--348, 1987.

\bibitem[Greengard \& Strain(1991)Greengard and Strain]{greengard1991fast}
Greengard, L. and Strain, J.
\newblock The fast gauss transform.
\newblock \emph{SIAM Journal on Scientific and Statistical Computing},
  12\penalty0 (1):\penalty0 79--94, 1991.

\bibitem[Gupta et~al.(2012)Gupta, Roth, and Ullman]{gupta2012iterative}
Gupta, A., Roth, A., and Ullman, J.
\newblock Iterative constructions and private data release.
\newblock In \emph{Theory of cryptography conference}, pp.\  339--356.
  Springer, 2012.

\bibitem[Hall(2013)]{hall2013new}
Hall, R.
\newblock \emph{New Statistical Applications for Differential Privacy}.
\newblock PhD thesis, Carnegie Mellon University, 2013.

\bibitem[Hall et~al.(2013)Hall, Rinaldo, and Wasserman]{hall2013differential}
Hall, R., Rinaldo, A., and Wasserman, L.
\newblock Differential privacy for functions and functional data.
\newblock \emph{Journal of Machine Learning Research}, 14\penalty0
  (Feb):\penalty0 703--727, 2013.

\bibitem[Hardt \& Rothblum(2010)Hardt and Rothblum]{hardt2010multiplicative}
Hardt, M. and Rothblum, G.~N.
\newblock A multiplicative weights mechanism for privacy-preserving data
  analysis.
\newblock In \emph{2010 IEEE 51st annual symposium on foundations of computer
  science}, pp.\  61--70. IEEE, 2010.

\bibitem[Hofmann et~al.(2008)Hofmann, Sch{\"o}lkopf, and
  Smola]{hofmann2008kernel}
Hofmann, T., Sch{\"o}lkopf, B., and Smola, A.~J.
\newblock Kernel methods in machine learning.
\newblock \emph{The annals of statistics}, 36\penalty0 (3):\penalty0
  1171--1220, 2008.

\bibitem[Huai et~al.(2019)Huai, 0015, Miao, Xu, and Zhang]{huai2019privacy}
Huai, M., 0015, D.~W., Miao, C., Xu, J., and Zhang, A.
\newblock Privacy-aware synthesizing for crowdsourced data.
\newblock In \emph{IJCAI}, pp.\  2542--2548, 2019.

\bibitem[Indyk(2006)]{indyk2006stable}
Indyk, P.
\newblock Stable distributions, pseudorandom generators, embeddings, and data
  stream computation.
\newblock \emph{Journal of the ACM (JACM)}, 53\penalty0 (3):\penalty0 307--323,
  2006.

\bibitem[Indyk \& Motwani(1998)Indyk and Motwani]{indyk1998approximate}
Indyk, P. and Motwani, R.
\newblock Approximate nearest neighbors: towards removing the curse of
  dimensionality.
\newblock In \emph{Proceedings of the thirtieth annual ACM symposium on Theory
  of computing (STOC)}, 1998.

\bibitem[Jaakkola et~al.(1999)Jaakkola, Diekhans, Haussler,
  et~al.]{jaakkola1999using}
Jaakkola, T.~S., Diekhans, M., Haussler, D., et~al.
\newblock Using the fisher kernel method to detect remote protein homologies.
\newblock In \emph{ISMB}, volume~99, pp.\  149--158, 1999.

\bibitem[Johnson \& Schechtman(1982)Johnson and
  Schechtman]{johnson1982embedding}
Johnson, W.~B. and Schechtman, G.
\newblock Embedding l \_p\^{}m into l \_1\^{}n.
\newblock 1982.

\bibitem[Karnin \& Liberty(2019)Karnin and Liberty]{karnin2019discrepancy}
Karnin, Z. and Liberty, E.
\newblock Discrepancy, coresets, and sketches in machine learning.
\newblock In \emph{Conference on Learning Theory}, pp.\  1975--1993. PMLR,
  2019.

\bibitem[Lacoste-Julien et~al.(2015)Lacoste-Julien, Lindsten, and
  Bach]{lacoste2015sequential}
Lacoste-Julien, S., Lindsten, F., and Bach, F.
\newblock Sequential kernel herding: Frank-wolfe optimization for particle
  filtering.
\newblock In \emph{Artificial Intelligence and Statistics}, pp.\  544--552.
  PMLR, 2015.

\bibitem[Lopez-Paz et~al.(2015)Lopez-Paz, Muandet, Sch{\"o}lkopf, and
  Tolstikhin]{lopez2015towards}
Lopez-Paz, D., Muandet, K., Sch{\"o}lkopf, B., and Tolstikhin, I.
\newblock Towards a learning theory of cause-effect inference.
\newblock In \emph{International Conference on Machine Learning}, pp.\
  1452--1461. PMLR, 2015.

\bibitem[Nikolov(2023)]{nikolov2022private}
Nikolov, A.
\newblock Private query release via the johnson-lindenstrauss transform.
\newblock \emph{ACM-SIAM Symposium on Discrete Algorithms (SODA)}, 2023.

\bibitem[Pagh \& Thorup(2022)Pagh and Thorup]{pagh2022improved}
Pagh, R. and Thorup, M.
\newblock Improved utility analysis of private countsketch.
\newblock In \emph{Advances in Neural Information Processing Systems
  (NeurIPS)}, 2022.

\bibitem[Pennington et~al.(2014)Pennington, Socher, and
  Manning]{pennington2014glove}
Pennington, J., Socher, R., and Manning, C.~D.
\newblock Glove: Global vectors for word representation.
\newblock In \emph{Proceedings of the 2014 conference on empirical methods in
  natural language processing (EMNLP)}, pp.\  1532--1543, 2014.
\newblock URL \url{https://nlp.stanford.edu/projects/glove/}.

\bibitem[Phillips \& Tai(2020)Phillips and Tai]{phillips2020near}
Phillips, J.~M. and Tai, W.~M.
\newblock Near-optimal coresets of kernel density estimates.
\newblock \emph{Discrete \& Computational Geometry}, 63\penalty0 (4):\penalty0
  867--887, 2020.

\bibitem[Rahimi \& Recht(2007)Rahimi and Recht]{rahimi2007random}
Rahimi, A. and Recht, B.
\newblock Random features for large-scale kernel machines.
\newblock \emph{Advances in neural information processing systems}, 20, 2007.

\bibitem[Shawe-Taylor et~al.(2004)Shawe-Taylor, Cristianini,
  et~al.]{shawe2004kernel}
Shawe-Taylor, J., Cristianini, N., et~al.
\newblock \emph{Kernel methods for pattern analysis}.
\newblock Cambridge university press, 2004.

\bibitem[Siminelakis et~al.(2019)Siminelakis, Rong, Bailis, Charikar, and
  Levis]{siminelakis2019rehashing}
Siminelakis, P., Rong, K., Bailis, P., Charikar, M., and Levis, P.
\newblock Rehashing kernel evaluation in high dimensions.
\newblock In \emph{International Conference on Machine Learning}, pp.\
  5789--5798. PMLR, 2019.

\bibitem[Strack et~al.(2014)Strack, DeShazo, Gennings, Olmo, Ventura, Cios, and
  Clore]{strack2014impact}
Strack, B., DeShazo, J.~P., Gennings, C., Olmo, J.~L., Ventura, S., Cios,
  K.~J., and Clore, J.~N.
\newblock Impact of hba1c measurement on hospital readmission rates: analysis
  of 70,000 clinical database patient records.
\newblock \emph{BioMed research international}, 2014, 2014.
\newblock URL
  \url{https://archive.ics.uci.edu/ml/datasets/diabetes+130-us+hospitals+for+years+1999-2008}.

\bibitem[Wang et~al.(2016)Wang, Jin, Fan, Zhang, Huang, Zhong, and
  Wang]{wang2016differentially}
Wang, Z., Jin, C., Fan, K., Zhang, J., Huang, J., Zhong, Y., and Wang, L.
\newblock Differentially private data releasing for smooth queries.
\newblock \emph{The Journal of Machine Learning Research}, 17\penalty0
  (1):\penalty0 1779--1820, 2016.

\end{thebibliography}
\bibliographystyle{icml2023}

%%%%%%%%%%%%%%%%%%%%%%%%%%%%%%%%%%%%%%%%%%%%%%%%%%%%%%%%%%%%%%%%%%%%%%%%%%%%%%%
%%%%%%%%%%%%%%%%%%%%%%%%%%%%%%%%%%%%%%%%%%%%%%%%%%%%%%%%%%%%%%%%%%%%%%%%%%%%%%%
% APPENDIX
%%%%%%%%%%%%%%%%%%%%%%%%%%%%%%%%%%%%%%%%%%%%%%%%%%%%%%%%%%%%%%%%%%%%%%%%%%%%%%%
%%%%%%%%%%%%%%%%%%%%%%%%%%%%%%%%%%%%%%%%%%%%%%%%%%%%%%%%%%%%%%%%%%%%%%%%%%%%%%%
\newpage
\appendix
\onecolumn
\section{Analysis of the LSQ Mechanism}\label{sec:lsqappendix}

\paragraph{Proof of \cref{lmm:privacy}.}
The $(Q,R,S)$-LSQ property is easily seen to imply that the sensitivity of each $F_i$ in \Cref{alg:main} is $SR/|X|$, and we have $I$ of them, thus the lemma follows from the classical Laplace DP mechanism.

We expand on the details for completeness. 
To this end we recall some DP fundamentals. 
Let $F$ be a function that maps a dataset to $\R^m$.
The \emph{$\ell_1$-sensitivity} of $F$ is defined as $\Delta F=\max_{X,X'}\norm{F(X)-F(X')}_1$, where the maximum is taken over all pairs of neighboring datasets $X,X'$ (the definition of neigboring datasets is given in \Cref{sec:dp}). 
Given a function $F$, a dataset $X$, and $\epsilon>0$, the Laplace mechanism \cite{dwork2006calibrating} releases $F(X)+N$, where $N\in\R^m$ has entries sampled i.i.d.~from $\mathrm{Laplace}(\Delta F/\epsilon)$. This mechanism is $\epsilon$-DP \cite{dwork2006calibrating}.

In \Cref{alg:main} we have $F_i=\frac1{|X|}\sum_{x\in X}f_i(x)$, where $f_i$ is sampled from a $(Q,R,S)$-LSQ family, and thus every $f_i(x)$ has at most $S$ non-zero entries, each of contained in $[-R,R]$. Therefore, $F_i$ has $\ell_1$-sensitivity $RS/|X|$, and the sequence $(F_1,\ldots,F_I)$ has $\ell_1$-sensitivity $IRS/|X|$. The curator in \Cref{alg:main} releases $(\widetilde F_1, \ldots, \widetilde F_I)$, which we observe is but the output of the Laplace mechanism on this function, and is thus $\epsilon$-DP. The curator also releases $(f_i,g_i)_{i=1}^I$, which are sampled obliviously to the dataset and have no effect on differential privacy.
\qed

\paragraph{Proof of \cref{lmm:efficiency}.}
The lemma follows by tracking the steps of the curator and client algorithms. 

Curator running time: In each of $i=1,\ldots,I$ iterations, it samples $(f_i,g_i)\sim\mathcal Q$ in time $T_{\mathcal Q}$, evaluates $f_i$ on every $x\in X$ in total time $|X|T_f$, and adds Laplace noise to each of $Q$ coordinates in total time $O(Q)$. 

Curator output size: For every $i=1,\ldots,I$, it outputs the pair $(f_i,g_i)$ which is described using $L_{\mathcal Q}$ machine words, and the $Q$-dimensional vector $\widetilde F_i$ which occupies $Q$ machine words.

Client running time: For every $i=1,\ldots,I$, it evaluates $g_i(y)$ in time $T_g$, and computes the inner product $\widetilde F_i^Tg_i(y)$, which can be done in time $O(S)$ since $g_i(y)$ has at most $S$ non-zero entries. This takes total time $O(I(T_g+S))$. It then returns the median of $J$ values, each one of whom is the mean of $I'$ values, which takes additional time $O(I'J)=O(I)$.
\qed

%For the utility properties, we start with the utility claim for a single pair (\cref{lmm:utility_smallq}) and then proceed to many pairs (\cref{lmm:utility_largeq}).

\paragraph{Proof of \cref{lmm:utility_smallq}.}
Let $j\in[J]$. By plugging $I'=1$ (since in this lemma we have $I=J$) and $(f_i,g_i)=(f,g)$ (since we have a single pair $(f,g)$) into the definition of $m_j$ in the client algorithm, we have
\begin{equation}\label{eq:mj_smallq}
    m_j = \widetilde F_j^Tg(y) = \frac{1}{|X|}\sum_{x\in X}f(x)^Tg(y) + N_j^Tg(y) , 
\end{equation}
where $N_j=(N_j^{(1)},\ldots,N_j^{(Q)})$ is a random vector whose entries are drawn i.i.d.~from $\mathrm{Laplace}(IRS/(\epsilon|X|))$. 
By properties of the Laplace distribution, each entry $N_j^{(q)}$ has variance $\Var(N_j^{(q)})=2(IRS/(\epsilon|X|))^2$. Since $\Var(N_j^Tg_j(y))=\sum_{q=1}^Qg_j(y)^2\Var(N_j^{(q)})$, and $g_j(y)$ has at most $S$ non-zero entries contained in $[-R,R]$, we have $\Var(N_j^Tg_j(y))\leq SR^2\cdot 2(IRS/(\epsilon|X|))^2$. Thus by Chebyshev's inequality  (recalling that $I=\Theta(\log(1/\eta))$),
\[
  \Pr\left[\left|N_j^Tg_j(y)\right| > \frac{O(1)\cdot S^{1.5}R^2\log(1/\eta)}{\epsilon|X|}\right] < \frac16 . 
\]
Now by the Chernoff inequality, the median of $J=\Theta(\log(1/\eta))$ independent copies $N_1^Tg(y),\ldots,N_J^Tg(y)$ satisfies
\begin{equation}\label{eq:utility_smallq_median}
  \Pr\left[\left|\mathrm{median}(N_1^Tg(y),\ldots,N_J^Tg(y))\right|\geq\frac{O(1)\cdot S^{1.5}R^2\log(1/\eta)}{\epsilon|X|}\right] < \eta .\end{equation}

The client output in \Cref{alg:main} equals $\hat e(y) = \mathrm{median}(m_1,\ldots,m_J)$. 
By noting in \cref{eq:mj_smallq} that the term $\frac{1}{|X|}\sum_{x\in X}f(x)^Tg(y)$ does not depend on $j$, we have 
\[ \hat e(y) = \mathrm{median}(m_1,\ldots,m_J) =  \frac{1}{|X|}\sum_{x\in X}f(x)^Tg(y) + \mathrm{median}(N_1^Tg(y),\ldots,N_J^Tg(y)) . \]
The $\alpha$-approximate LSQ property, in the case of $\mathcal Q$ supported on a single pair, guarantees that $|k(x,y)-f(x)^Tg(y)|\leq\alpha$. Thus,
\begin{align*}
        |\hat e(y)-KDE_X(y)| &= \left| \frac{1}{|X|}\sum_{x\in X}f(x)^Tg(y) + \mathrm{median}(N_1^Tg(y),\ldots,N_J^Tg(y))-KDE_X(y)\right| \\
        &\leq \left| \frac{1}{|X|}\sum_{x\in X}f(x)^Tg(y)-KDE_X(y)\right | + |\mathrm{median}(N_1^Tg(y),\ldots,N_J^Tg(y))| \\
        &\leq \frac{1}{|X|}\sum_{x\in X}\left|f(x)^Tg(y)-k(x,y)\right| + |\mathrm{median}(N_1^Tg(y),\ldots,N_J^Tg(y))| \\
        &\leq \alpha + |\mathrm{median}(N_1^Tg(y),\ldots,N_J^Tg(y))| \\
        &\leq \alpha + \frac{O(1)\cdot S^{1.5}R^2\log(1/\eta)}{\epsilon|X|},
\end{align*}
where the final inequality holds with probability $1-\eta$ by \cref{eq:utility_smallq_median}, as was to be shown. \qed

\paragraph{Proof of \cref{lmm:utility_largeq}.}
For every $i=1,\ldots,I$ we have
\[
    \widetilde F_i^Tg_i(y) 
    = F_i^Tg_i(y) + N_i^Tg_i(y) ,
\]
%    = \frac{1}{|X|}\sum_{x\in X}f_i(x)^Tg_i(y) + N_i^Tg_i(y),
where $N_i$ is a random vector whose entries are drawn i.i.d.~from $\mathrm{Laplace}(IRS/(\epsilon|X|))$. Therefore, for every $j=1,\ldots,J$,
\begin{equation}\label{eq:util_mj}
    m_j = \frac1{I'}\sum_{i=I'(j-1)+1}^{I'j}\widetilde F_i^Tg_i(y) 
    = \frac1{I'}\sum_{i=I'(j-1)+1}^{I'j}F_i(x)^Tg_i(y) + \frac1{I'}\sum_{i=I'(j-1)+1}^{I'j}N_i^Tg_i(y),
\end{equation}
where we recall that $I'=\lfloor I/J \rfloor = \Theta(1/\alpha^2)$. 
We handle the two sums in turn. 

For the first sum, consider a random choice of $(f_i,g_i)\sim\mathcal Q$, and recall that $F_i(x)=\frac1{|X|}\sum_{x\in X}f_i(x)$. 
By the $(Q,R,S)$-LSQ property of $\mathcal Q$, for every $x,y$ it holds that both $f_i(x)$ and $g_i(y)$ have at most $S$ non-zero entries of magnitude at most $R$, hence $|f_i
(x)^Tg_j(y)|\leq SR^2$. Therefore,
\[ \left|F_i(x)^Tg_i(y)\right| \leq \frac1{|X|}\sum_{x\in X}\left|f_i(x)^Tg_i(y)\right| \leq SR^2 . \]
This holds for every supported pair $(f_i,g_i)$. Consequently, Hoeffding's concentration inequality ensures that averaging $I'=\Theta(1/\alpha^2)$ independent copies of $F_i(x)^Tg_i(y)$ yields
%\begin{equation}\label{eq:util_term1}
\[
\Pr\left[\left|\frac1{I'}\sum_{i=I'(j-1)+1}^{I'j}F_i(x)^Tg_i(y) - \E\left[F_i(x)^Tg_i(y)\right]\right| > O(1)\cdot\alpha SR^2 \right] < \frac16 .
\]
Moreover, the expectation $\E\left[F_i(x)^Tg_i(y)\right]$ satisfies
\begin{align*}
  \left|\E\left[F_i(x)^Tg_i(y)\right] - KDE_X(y) \right| &= \left|\E\left[ \frac{1}{|X|}\sum_{x\in X}f_i(x)^Tg_i(y)\right] - \frac{1}{|X|}\sum_{x\in X}k(x,y) \right|  \\
  &\leq \frac{1}{|X|}\sum_{x\in X}\left|\E\left[f_i(x)^Tg_i(y)\right]-k(x,y)\right|  \\
  &\leq \alpha, 
\end{align*}
where the final inequality is an application of the $\alpha$-approximate LSQ property of $\mathcal Q$, i.e., $\left|\E\left[f_i(x)^Tg_i(y)\right]-k(x,y)\right|\leq\alpha$.
Combining these, we get
\begin{equation}\label{eq:util_term1}
\Pr\left[\left|\frac1{I'}\sum_{i=I'(j-1)+1}^{I'j}F_i(x)^Tg_i(y) - KDE_X(y)\right| > \alpha + O(1)\cdot\alpha SR^2 \right] < \frac16 .
\end{equation}

For the second sum in \cref{eq:util_mj}, recall that in the above proof of \cref{lmm:utility_smallq} it was shown that $\Var(N_i^Tg_i(y))\leq SR^2\cdot 2(IRS/(\epsilon|X|))^2$ for every $i$. Averaging over $I'=\Theta(1/\alpha^2)$ independent copies scales the variance down by $1/|I'|$, ensuring it is at most $O(1)\cdot\alpha^2 SR^2\cdot (IRS/(\epsilon|X|))^2$. Plugging $I=\Theta(\log(1/\eta)/\alpha^2)$, we have by Chebyshev's inequality,
\begin{equation}\label{eq:util_term2}
\Pr\left[\left|\frac1{I'}\sum_{i=I'(j-1)+1}^{I'j}N_i^Tg_i(y)\right| > \frac{O(1)\cdot S^{1.5}R^2\log(1/\eta)}{\alpha\epsilon|X|}\right] < \frac16 . 
\end{equation}
Taking a union bound over \cref{eq:util_term1,eq:util_term2} and plugging both into \cref{eq:util_mj}, we get
\[ \Pr\left[\left|m_j - KDE_X(y)\right| > \alpha + O(1)\cdot\left(\alpha SR^2 + \frac{S^{1.5}R^2\log(1/\eta)}{\alpha\epsilon|X|}\right)\right] < \frac13 . \]
The client output is $\hat e(y) = \mathrm{median}(m_1,\ldots,m_J)$. Since $J=\Theta(\log(1/\eta))$, we get by Chernoff's inequality,
\[
 \Pr\left[\left|\hat e(y) - KDE_X(y)\right| < \alpha + O(1)\cdot\left(\alpha SR^2 + \frac{S^{1.5}R^2\log(1/\eta)}{\alpha\epsilon|X|}\right)\right] \geq 1-\eta ,
\]
as desired.
\qed

\section{Additional Omitted Analysis}

\subsection{Fast Gauss Transform (\Cref{sec:fgt})}\label{sec:fgtappendix}
For context, we start by deriving the Hermite expansion of the Gaussian kernel. Let $x,y,z\in\R^d$. We may write,
\begin{align*}
  e^{-\norm{y-x}_2^2} &= \prod_{j=1}^d e^{-(y_j-x_j)^2} \\
  &= \prod_{j=1}^d \xi(y_j-x_j) \\
  &= \prod_{j=1}^d\left(\sum_{r_j=1}^\infty\frac{(z_j-x_j)^{r_j}}{r_j!}\cdot\xi^{(r_j)}(y_j-z_j)\right) \\
  &= \prod_{j=1}^d\left(\sum_{r_j=1}^\infty\frac{(x_j-z_j)^{r_j}}{r_j!}\cdot h_{r_j}(y_j-z_j)\right) \\
  &= \sum_{r_1=1}^\infty\ldots\sum_{r_d=1}^\infty\prod_{j=1}^d\frac{(x_j-z_j)^{r_j}}{r_j!}\cdot h_{r_j}(y_j-z_j) ,
\end{align*}
where we recall from \Cref{sec:fgt} that $\xi$ denotes the univariate function $\xi(\gamma)=e^{-\gamma^2}$, that $\xi^{(r)}$ denotes its $r$th derivative, and that $h_{r}$ denotes the Hermite function of order $r$. 
With this notation, the third equality above is by replacing each $\xi(y_j-x_j)$ with its Taylor expansion about $y_j-z_j$. The fourth equality is by recalling that $h_r=(-1)^r\xi^{(r)}$. The fifth equality is by rewriting the product of sums as the sum of products. 

\cite{greengard1991fast} show that truncating each of the $d$ sums after $\rho=O(\log(1/\alpha))$ terms leads to an additive error of at most $\alpha$. Thus,
\begin{equation}\label{eq:hermite}
    \forall z\in\R^d, \;\;\;\; \left| e^{-\frac12\norm{x-y}_2^2} - \sum_{r_1=1}^\rho \ldots \sum_{r_d=1}^\rho \prod_{j=1}^d(x_j-z_j)^{r_j}\cdot\frac{1}{r_j!}\cdot h_{r_j}(y_j-z_j) \right| \leq \alpha .
\end{equation}

We now prove \cref{prp:fgt} with the pair of functions $f,g$ as defined in \Cref{sec:fgt}. 

\paragraph{Proof of \cref{prp:fgt}.}
The LSQ family is supported on the single pair of functions $(f,g)$. 
Note that by the premise $d=O(\log(1/\alpha))$, we may choose $\rho$ that satisfies $\rho\geq d$. 

We start by showing the $\alpha$-approximate LSQ property, which here means that $|f(x)^Tg(y)-e^{-\norm{x-y}_2^2}|\leq\alpha$ for every $x,y\in\R^d$. 
Let $x,y\in\R^d$.
Let $H_x$ be the grid cell that contains $x$. 
Recall that $z^{H_x}\in\R^d$ denotes its center point. 
Note that $f(x)$ is non-zero only in those entries $f_{H,r}(x)$ for which $H=H^x$.
We consider two cases:
\begin{itemize}
  \item If $\norm{y-z^{H_x}}_2\leq\sqrt\rho$, then by the definition of $f$ and $g$ we have
\[ f(x)^Tg(y) = \sum_{r_1=1}^\rho\ldots\sum_{r_d=1}^\rho \prod_{j=1}^d(x_j-z_j^{H_x})^{r_j}\cdot\frac{1}{r_j!}\cdot h_{r_j}(y_j-z_j^{H_x}), \]
hence by \cref{eq:hermite}, $|f(x)^Tg(y)-e^{-\norm{x-y}_2^2}|\leq\alpha$. 
  \item If $\norm{y-z^{H_x}}_2>\sqrt\rho$, then $f(x)^Tg(y)=0$, since there are no entries where both $f(x)$ and $g(y)$ are non-zero. Thus, in this case it suffices to show that $e^{-\norm{x-y}_2^2}\leq\alpha$. Recall that $H_x$ is a hypercube with side-length $1$ centered at $z^{H_x}$ and contains $x$, hence $\norm{x-z^{H_x}}\leq\tfrac12\sqrt{d}\leq\tfrac12\sqrt{\rho}$. Therefore, by the triangle inequality,
\[
  \norm{x-y}_2^2 
  \geq (\norm{y-z^{H_x}}_2 - \norm{x-z^{H_x}}_2)^2
  \geq (\sqrt\rho - \tfrac12\sqrt\rho)^2 = \tfrac14\rho, 
\]
and thus $e^{-\norm{x-y}_2^2}\leq e^{-\rho/4}\leq\alpha$, which holds provided we choose $\rho=O(\log(1/\alpha))$ with an appropriate hidden constant.
\end{itemize}
In both cases we have $|f(x)^Tg(y)-e^{-\norm{x-y}_2^2}|\leq\alpha$, so $\alpha$-approximate LSQability holds. 

Next, we bound the parameters $(Q,R,S)$ of this LSQ pair. 
\begin{itemize}
  \item Quantization $Q$: Each coordinate is indexed by a pair $H\in\mathcal G_\Phi$ and $r\in\{0,\ldots,\rho\}^d$. We recall that $\mathcal G_\Phi$ is the set of grid cells with side-length $1$ that  intersect a ball of radius $\Phi$, hence $|\mathcal G_\Phi|=O(1+\frac{\Phi}{\sqrt{d}})^d$ by a standard volume argument. The number of choices for $r$ is $(\rho+1)^d$, thus $Q=O((1+\frac{\Phi}{\sqrt{d}})\cdot\rho)^d$. 
  \item Range $R$: Observe that $f(x)$ is zero in all coordinates $(H,r_1,\ldots,r_d)$ except those where $H$ is the (unique) grid cell $H_x$ that contains $x$. Since $H_x$ has side-length $1$ and its center point is $z^{H_x}$, we have $\forall_j|x_j-z_j^{H_x}|\leq0.5$, and therefore the magnitude of $f(x)$ at each non-zero coordinate can be bounded as $\left|\prod_{j=1}^d\left(x_j-z_j^{H_x}\right)^{r_j}\right|\leq1$. 
  
  For $g(y)$, we use the following bound from \cite{greengard1991fast}, which is a consequence of Cramer's inequality for Hermite functions: for every $r_1,\ldots,r_d$ and $y\in\R^d$,
\[ \left| \prod_{j=1}^d\frac{1}{r_j!}\cdot h_{r_j}\left(y_j\right) \right| \leq e^{-\norm{y}_2^2}\prod_{j=1}^d\frac{1.09\cdot(\sqrt2)^{r_j}}{\sqrt{r_j!}} . \]
It is not hard to verify that the term $1.09\cdot(\sqrt{2})^r/\sqrt{r!}$ is maximized over non-negative integers $r$ at $r=1$ and is bounded by $1.09\cdot\sqrt{2}<1.6$, hence the right-hand size is upper bounded by $1.6^d$.
\item Sparsity $S$: Again, $f(x)$ is non-zero only at coordinates $f_{H,r}(x)$ such that $H=H_x$, of which there are only $(\rho+1)^d$ (the number of choices for $r\in\{0,\ldots,\rho\}^d$).

As for $g(y)$, it is non-zero only in coordinates $g_{H,r}(x)$ where $H$ is one of the grid cells of $\mathcal G$ that satisfies $\norm{y-z^H}_2\leq\sqrt\rho$. Since the grid has side-length $1$, the number of cells at distance at most $\sqrt\rho$ from any given point $y$ is at most $O(1+\sqrt{\rho/d})^d$, again by a standard volume argument. Accounting also for the $(\rho+1)^d$ possible choices for $r$, the number of non-zero coordinates $g_{H,r}(x)$ is at most $O(1+\sqrt{\rho/d})^d\cdot(\rho+1)^d \leq \rho^{O(d)}$.
\end{itemize}

Finally, we bound the evaluation times $T_f$ and $T_g$ of $f$ and $g$ respectively.
\begin{itemize}
  \item $T_f$: Every non-zero entry of $f(x)$ is the product of $d$ terms, which takes $O(d)$ time to compute. As shown above, $f(x)$ has $(\rho+1)^d$ non-zero entries, thus its total evaluation time of is thus $O(d)\cdot(\rho+1)^d$.
  \item $T_g$: Let $i\geq0$ be an integer. The hermite function $h_i(\gamma)$ is equal to $e^{-\gamma^2} P_i(\gamma)$ for every $\gamma\in\R$, where $P_i$ is the (``physicist's'') Hermite polynomial of degree $i$. Fix a grid cell $H\in\mathcal G_\Phi$. Since $P_i(\gamma)$ and thus $h_i(\gamma)$ can be evaluated in time $O(i)$ for every $i$ and $\gamma$, all values $\{h_i(y_j-z_j^H):i=0,\ldots,\rho\}$ can be computed in time $O(\rho^2)$. With these at hand, for every $r\in\{0,\ldots,\rho\}^d$ and our fixed $H$ we can compute $g_{H,r}(y)$ in time $O(d)$, by multiplying the appropriate pre-computed values. The total evaluation time for a fixed $H$ is thus $O(\rho^2+d)$. As shown above, the number of cells $H$ whose corresponding entries in $g(y)$ are non-zero is $O(1+\sqrt{\rho/d})^d$, leading to a total computation time of $O(1+\sqrt{\rho/d})^d\cdot(\rho^2+d) \leq \rho^{O(d)}$.
\end{itemize}

Recalling that $d\leq\rho=O(\log(1/\alpha))$, the proof is complete. \qed

\paragraph{Refined LSQ for sharper implementation.}
In \Cref{sec:lsq}, for the purpose of asymptotic analysis, we defined LSQ with a uniform bound $R$ on the range of all coordinates in $f$ and $g$. Nonetheless, the coordinates can have different ranges, as the above proof shows for FGT. While it does not change the asymptotic bounds, it can have practical importance in implementation.

Concretely, let $f,g:\R^d\rightarrow\R^Q$.
Suppose we have $S,R^g,R^f_1,\ldots,R^f_Q\geq0$ such that for every $x,y\in\R^d$:
\begin{itemize}
  \item $g(y)$ has at most $S$ non-zero coordinates;
  \item Each coordinate of $g(y)$ is in $[-R^g,R^g]$;
  \item For $i=1,\ldots,Q$, coordinate $i$ of $f(x)$ is in $[-R^f_i,R^f_i]$.
\end{itemize}
The LSQ mechanism in \Cref{alg:main} adds a sample from $\mathrm{Laplace}((\epsilon|X|)^{-1}IRS)$ to each coordinate, to ensure $\epsilon$-DP via the Laplace mechanism. In the refined form of LSQ stated above, since $f$ has sensitivity $\sum_{i=1}^QR^f_i$, it suffices to add a sample from $\mathrm{Laplace}((\epsilon|X|)^{-1}I\sum_{i=1}^QR^f_i)$ to each coordinate to ensure $\epsilon$-DP. 

In the case of FGT, the above proof of \cref{prp:fgt} shows that if a coordinate of $f_{H,r}$ is indexed by a pair $H\in\mathcal G_\Phi$ and $r\in\{0,\ldots,\rho\}^d$, then $f_{H,r}(x)=0$ if $x\notin H$, and otherwise,
\[ |f_{H,r}(x)|\leq\left|\prod_{j=1}^d(x_j-z_j^H)^{r_j}\right| \leq \prod_{j=1}^d\frac1{2^{r_j}} = \frac{1}{2^{\sum_{j=1}^dr_j}} . \]
Therefore,
\[
  \sum_{i=1}^QR^f_i = \sum_{r_1=0}^\rho\ldots\sum_{r_d=0}^\rho\frac{1}{2^{\sum_{j=1}^dr_j}} = \left(\sum_{r=0}^\rho\frac1{2^r}\right)^d = \left(2\left(1-2^{-(\rho-1)}\right)\right)^d .
\]

Asymptotically, this makes no difference to the analysis: by retracing the proof of \cref{lmm:utility_smallq} with this refined LSQ, we get that the error term $O((\epsilon|X|)^{-1}\log(1/\eta)\cdot S^{1.5}R^2)$ from \cref{lmm:utility_smallq} becomes $O((\epsilon|X|)^{-1}\log(1/\eta)\cdot \sqrt{S}\cdot R^g\cdot\sum_{i=1}^QR^f_i)$. Since $R^g=1.6^d$ and $S=\rho^{O(d)}$ in \cref{prp:fgt}, the resulting error is the same in both cases up to hidden constants. However, in practice, adding noise of magnitude only $\left(2\left(1-2^{-(\rho-1)}\right)\right)^d$ instead of $\rho^{O(d)}$ to each coordinate noticeably improves the empirical performance of FGT, while retaining its theoretical guarantees. 

\subsection{Locality Sensitive Hashing (\Cref{sec:lsh})}\label{sec:lshappendix}
\paragraph{Proof of \cref{prp:lsh2lsq}.}
The proof is by composing a usual pairwise independent hash function over the LSH function. Let $\mathcal U$ be a universal family of hash functions from $\{1,\ldots,B\}$ to $\mathcal\{1,\ldots,B'\}$, where $B$ is the number of buckets in the range of $\mathcal H$, and $B'>0$ is an integer of our choice. We recall that, as per the definition of universal hashing, $\mathcal U$ satisfies $\Pr_{u\sim\mathcal U}[u(b)=u(b')]\leq1/B'$ for every $b,b'$. 

We define an LSQ family $\mathcal Q$ as follows: to sample from it, we draw $h\sim\mathcal H$ and $u\sim\mathcal U$, and for every $x\in\R^d$ we let $f_{h,u}(x)\in\{0,1\}^{B'}$ be the indicator vector for $u(h(x))$. We return $(f_{h,u},f_{h,u})$ as the sampled pair from $\mathcal Q$. 

A union bound over the collision probabilities of $h$ and $u$ yields that for all $x,y\in\R^d$,
\[ \Pr_{u,h}[u(h(x))=u(h(y))] \leq \Pr_{h}[h(x)=h(y)] + \tfrac1{B'} . \]
Consequently, if $\mathcal H$ is an $\alpha$-approximate LSH family for $k$, then $\mathcal Q$ is an $(\alpha+\tfrac1{B'})$-approximate $(B',1,1)$-LSQ family for $k$. \Cref{prp:lsh2lsq} follows by choosing $B'=\lceil1/\alpha\rceil$.
\qed 

\section{Expanded Discussion on Related Work}\label{sec:linearappendix}

\subsection{Generic Linear Queries}\label{sec:gen_lin_queries}
For completeness of the discussion of prior work from \Cref{sec:prior}, we expand on some aspects of SmallDB and PMW for Gaussian DP-KDE in the function release model. 
These mechanisms are designed to answer generic linear queries. 
Let $\mathcal X$ denote the universe in which the elements of the dataset $X$ are contained. 
The goal of a DP linear query is to estimate the quantity $\phi(X):=\frac{1}{|X|}\sum_{x\in X}\phi(x)$, where $\phi:\mathcal X\rightarrow[0,1]$ is a query function chosen by the client.
In the case of KDE, we have $\mathcal X=\R^d$, and each query point $y\in\R^d$ corresponds to the query function $\phi_y(x)=k(x,y)$. 
Since SmallDB and PMW require $\mathcal X$ to be finite, we next discuss discretization.

\textbf{Discretization for Gaussian KDE.}
If all points are assumed to be contained in ball of radius $\Phi\geq1$ in $\R^d$, then for the purpose of approximation of Gaussian KDE (\cref{def:addapx}), one can round every point coordinate to its nearest integer multiple of $\alpha/(4\Phi\sqrt{d})$. Thus, we can without loss of generality assume that $\mathcal X$ contains only those points in the ball that have such coordinates, of which there are $O(\Phi^2/\alpha)^d$ by a standard volume argument. 

To see why this suffices for Gaussian KDE, let $x,y\in\R^d$, and let $\bar x$ be the result of rounding $x$. Then,
\[
 e^{-\norm{y-\bar x}_2^2} = e^{ -\norm{(y-x) - (x-\bar x)}_2^2}
  = e^{-\norm{y-x}_2^2} \cdot e^{-\norm{x-\bar x}_2^2} \cdot e^{2(y-x)^T(x-\bar x)}.
\]
Since $\norm{x-\bar x}_2\leq \alpha/(4\Phi)$,
\[ 1 \geq e^{-\norm{x-\bar x}_2^2} \geq e^{-(\alpha/(4\Phi))^2} \geq e^{-\alpha} , \]
and, by Cauchy-Schwartz and the fact that $\norm{y-x}_2\leq2\Phi$,
\[
  |2(y-x)^T(x-\bar x)| \leq 2\norm{y-x}_2 \norm{x-\bar x}_2 \leq 2\cdot 2\Phi\cdot\frac{\alpha}{4\Phi} = \alpha,
\]
which implies 
\[
  e^{-\alpha}\leq e^{2(y-x)^T(x-\bar x)} \leq e^\alpha . 
\]
Noting that $1\leq e^\alpha \leq 1+2\alpha$ and $1-\alpha\leq e^{-\alpha}\leq 1$ for all $\alpha\in(0,1)$, we plug these back above and get,
\[ (1-\alpha)^2 e^{-\norm{y-x}_2^2} \leq e^{-\norm{y-\bar x}_2^2} \leq (1+2\alpha)e^{-\norm{y-x}_2^2}, \]
thus $|e^{-\norm{y-\bar x}_2^2}-e^{-\norm{y-x}_2^2}|\leq 2\alpha\cdot e^{-\norm{y-x}_2^2}\leq2\alpha$. 
Therefore rounding up to this precision introduces an additive error of only $O(\alpha)$ to every kernel evaluation and hence to every KDE evaluation, and we can scale $\alpha$ down by an appropriate constant.

\textbf{SmallDB.}
The mechanism works as follows: Let $X$ be the curator dataset, and let $Q$ be a set of client queries. Suppose we know of $s(\alpha,Q)\geq0$ such that there exists a dataset $Z$ of size $s(\alpha,Q)$ that satisfies $|\phi(Z)-\phi(X)|\leq\alpha$ for all $\phi\in Q$ simultaneously. SmallDB selects a dataset $\widetilde Z$ of size $s(\alpha,Q)$ using the DP exponential mechanism, and, in the query release model, releases the answers $\{\phi(\widetilde Z):\phi\in Q\}$ to the client queries $Q$. 
When the goal is to release $\epsilon$-DP accurate answers to all queries in $Q$ simultaneously with constant probability (say $0.9$), SmallDB has sample complexity $O(s(\alpha,Q)\cdot\log(|\mathcal X|)/(\epsilon\alpha))$.

(The exponential mechanism entails iterating over all possible datasets of size $s(\alpha,Q)$---that is, all $|\mathcal X|^{s(\alpha,Q)}$ subsets of $\mathcal X$ of that size---and computing their utility with respect to $Q$, which leads to the inefficient running time of SmallDB.)

By standard concentration (Hoeffding's inequality), it is well-known that $s(\alpha,Q)=O(\log(|Q|)/\alpha^2)$ for every $Q$ and $\alpha$, yielding a sample complexity of $O(\log(|Q|)\log(|\mathcal X|))/(\epsilon\alpha^3))$ for generic linear queries. 
In the transformation from query release to function release for Gaussian DP-KDE, we set $Q=\mathcal X$. By discretization we have $|\mathcal X|=O(\Phi^2/\alpha)^d$, hence the above sample complexity becomes $O(d^2\log^2(\Phi/\alpha)/(\epsilon\alpha^3))$.  However, it can be improved somewhat further, due to the existence of coresets for Gaussian KDE. An $\alpha$-coreset for $X$ is a dataset $Z$ such that $|KDE_X(y)-KDE_Z(y)|\leq\alpha$ for all $y\in\R^d$ simultaneously. It is known that every dataset has an $\alpha$-coreset for Gaussian KDE of size
\[ C_{d,\alpha} = O(\min\{\alpha^{-1}\sqrt{d\log(1/\alpha)},\alpha^{-2}\}) , \] 
see \cite{lopez2015towards,lacoste2015sequential,phillips2020near,karnin2019discrepancy}.
Therefore, $C_{d,\alpha}$ is an upper bound on $s(\alpha,Q)$ for every $Q$ and $\alpha$. This yields the SmallDB sample complexity bound listed in \Cref{tbl:main}.

The curator running time, which as mentioned above depends on enumerating over all datasets of size $s(\alpha,Q)$, is similarly improved. 
The curator output is the synthetic dataset $\widetilde Z$ released by the exponential mechanism, and it contains $C_{d,\alpha}$ points in $\R^d$, hence its size is $O(d\cdot C_{d,\alpha})$ words. The client can estimate $KDE_X(y)$ on this output by computing $KDE_{\widetilde Z}(y)$, which takes time $O(d\cdot C_{d,\alpha})$.

\textbf{PMW.} The mechanism has sample complexity $\tilde O(\log(|Q|)\log(|\mathcal X|))/(\epsilon\alpha^3))$ for generic linear queries. It is similar to that of SmallDB up to log factors, but stems from a different analysis (that we do not revisit here) which is not immediately improved by the existence of coresets. In the DP-KDE function release case we have, as above, $|Q|=|\mathcal X|=O(\Phi^2/\alpha)^d$, leading to the sample complexity listed in \Cref{tbl:main}. 

(We remark that PMW, unlike SmallDB, allows for adaptive queries in the query release model. Since we transform both mechanisms to the function release model for DP-KDE, this distinction between them does not apply in our setting.)

Like SmallDB, the output of PMW (in the function release model) is a synthetic private dataset $\widetilde Z$ on which the KDE of every query point can be directly evaluated. Initially $\widetilde Z$ can be as large as $\mathcal X$, but it too can be replaced by a coreset of itself, increasing the additive error of every query by at most $\alpha$. The coresets bounds listed above are constructive (in particular, a uniformly random sample of $O(1/\alpha^2)$ from $\widetilde Z$ yields an $\alpha$-coreset for it with constant probability \cite{lopez2015towards}), and since the released coreset would be computed from $\tilde Z$ which is already $\epsilon$-DP, the coreset too would be $\epsilon$-DP by immunity of differential privacy to post-processing. Consequently, like SmallDB, the curator output size and client running time of PMW are both $O(d\cdot C_{d,\alpha})$.

\paragraph{From query release to function release: uniform convergence and running time.}
As alluded to above, a na\"ive way to transform a query release mechanism into a function release mechanism is to invoke it with all possible queries, of which (by the above discretization argument) we have $O(\Phi^2/\alpha)^d$. This was used above to determine the sample complexity bounds for SmallDB and PMW. In fact, by uniform convergence results from learning theory, invoking these mechanisms with a small random sample of queries (instead of all possible queries) suffices to turn them into function release mechanisms. The reason is that the functions these mechanisms release admit a short description (a small synthetic dataset in the case of SmallDB, or a short transcript that describes the synthetic dataset in the case of PMW), and therefore the released functions can be ``learned'' on a small sample of queries and still generalize (in the learning theory sense) to all queries. We omit further details. This argument does not change the sample complexity of these mechanisms, but it somewhat improves the curator running time (albeit it remains at least exponential in $d$, as listed in \Cref{tbl:main}).

\subsection{Adaptive Queries}\label{app:adaptive}
In \Cref{sec:prior} we mentioned that SmallDB and PMW, when used in the function release model, have the property that with a fixed probability of say $0.9$, they release a function\footnote{In the case of SmallDB and PMW, the released function in fact takes the form of a synthetic dataset.} which returns the correct answer up to an additive error of at most $\alpha$ for all queries simultaneously (assuming all points are contained in a ball of radius $\Phi$). This is a stronger guranatee than $(\alpha,\eta)$-approximation. In particular, it allows to use the released function for adaptive queries. 

We can achieve the same stronger guarantee for our mechanisms (and similarly for the Bernstein mechanism), by setting $\eta$ sufficiently small so as to allow for a union bound over all queries (namely, by the above discretization bound, $\eta=\Theta(\alpha/\Phi^2)^d$). We get the following corollaries of \cref{thm:gauhighdim,thm:gaulowdim} respectively. 

\begin{corollary}[high dimensions]\label{thm:gauhighdim_adaptive}
%Let $\epsilon>0$ and $\alpha,\eta\in(0,1)$. 
There is an $\epsilon$-DP function release mechanism for Gaussian KDE on datasets in $\R^d$ of size $n\geq O(d\log(\Phi/\alpha)/(\epsilon\alpha^2))$ and that are contained in a ball of radius $\Phi$, such that with probability $0.9$, the released function has additive error at most $\alpha$ on every query simultaneously. Furthermore:
\begin{CompactItemize} 
  \item The curator runs in time $O(nd^2\log(\Phi/\alpha)/\alpha^2)$.
  \item The output size is $O(d^2\log(\Phi/\alpha)/\alpha^2)$.
  \item The client runs in time $O(d^2\log(\Phi/\alpha)/\alpha^2)$. 
\end{CompactItemize}
\end{corollary}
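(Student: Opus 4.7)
The plan is to derive this corollary from \Cref{thm:gauhighdim} by a union bound over a sufficiently fine discretization of the query space, combined with the Lipschitz-style estimate for the Gaussian kernel established in \Cref{sec:gen_lin_queries}. Specifically, since all points (including queries, which we may also restrict to the ball of radius $\Phi$ without loss of generality) lie in a ball of radius $\Phi$, we can round any query $y\in\R^d$ to its nearest point $\bar y$ in the lattice of resolution $\alpha/(4\Phi\sqrt{d})$ inside the ball; by the computation in \Cref{sec:gen_lin_queries}, this rounding perturbs $e^{-\|x-y\|_2^2}$ by at most $O(\alpha)$ uniformly in $x$, and hence perturbs $KDE_X(y)$ by at most $O(\alpha)$. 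The discretized query set $\mathcal Y$ has cardinality $|\mathcal Y|=O(\Phi^2/\alpha)^d$ by a standard volume argument.

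Next, I instantiate \Cref{thm:gauhighdim} with failure probability $\eta=0.1/|\mathcal Y|=\Theta((\alpha/\Phi^2)^d)$, so that $\log(1/\eta)=\Theta(d\log(\Phi/\alpha))$. By \Cref{thm:gauhighdim} applied at this $\eta$, the released function $\hat e$ satisfies $|\hat e(\bar y)-KDE_X(\bar y)|\leq \alpha$ with probability $1-\eta$ for each fixed $\bar y$. Taking a union bound over the $|\mathcal Y|$ discretized queries, with probability at least $1-|\mathcal Y|\cdot\eta=0.9$, the bound holds simultaneously for every $\bar y\in\mathcal Y$. Combining with the rounding error bound above and the triangle inequality, $|\hat e(\bar y)-KDE_X(y)|=O(\alpha)$ for every continuous query $y$ (where $\bar y$ is its rounding), and after rescaling $\alpha$ by a constant we obtain the claimed uniform error $\alpha$. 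Note that the client can perform the rounding of $y$ to $\bar y$ locally before evaluation, so no change to the mechanism is needed beyond the choice of $\eta$.

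Finally, the efficiency bounds follow by substituting $\log(1/\eta)=\Theta(d\log(\Phi/\alpha))$ into the bounds of \Cref{thm:gauhighdim}: the sample complexity becomes $n\geq O(d\log(\Phi/\alpha)/(\epsilon\alpha^2))$, the curator runs in time $O(nd\cdot d\log(\Phi/\alpha)/\alpha^2)=O(nd^2\log(\Phi/\alpha)/\alpha^2)$, and the output size and client running time both become $O(d^2\log(\Phi/\alpha)/\alpha^2)$, as claimed.

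The main subtlety, rather than obstacle, is ensuring that the discretization step is valid for Gaussian KDE (as opposed to a generic bounded function), since it relies on the quantitative stability of $e^{-\|x-y\|_2^2}$ under small perturbations of $y$; this is exactly what the computation in \Cref{sec:gen_lin_queries} provides. The rest is a routine union bound and parameter substitution.
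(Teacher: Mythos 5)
Your proposal is correct and follows essentially the same route as the paper: the paper obtains \Cref{thm:gauhighdim_adaptive} from \Cref{thm:gauhighdim} by invoking the discretization bound of \Cref{sec:gen_lin_queries} (giving $O(\Phi^2/\alpha)^d$ effective queries), setting $\eta=\Theta(\alpha/\Phi^2)^d$ so that $\log(1/\eta)=\Theta(d\log(\Phi/\alpha))$, and union bounding over the discretized query set before substituting into the theorem's bounds. Your added remark that the client may round queries locally, and your explicit use of the Gaussian kernel's stability under rounding, are just the details the paper leaves implicit.
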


\begin{corollary}[low dimensions]\label{thm:gaulowdim_adaptive}
%Let $\epsilon,\Phi>0$ and $\alpha,\eta\in(0,1)$. 
There is an $\epsilon$-DP function release mechanism for Gaussian KDE on datasets in $\R^d$ of size $n\geq\log(1/\eta) \cdot (\log(1/\alpha))^{O(d)}/(\epsilon\alpha)$ and that are contained in a ball of radius $\Phi$, such that with probability $0.9$, the released function has additive error at most $\alpha$ on every query simultaneously. Furthermore:
\begin{CompactItemize} 
  \item The curator runs in time $(nd+(\frac{\Phi}{\sqrt{d}})^d) \cdot O(\log(1/\alpha))^{O(d)}\cdot d\log(\Phi/\alpha)$.
  \item The output size is $O((1+\frac{\Phi}{\sqrt{d}})(\log(1/\alpha)))^d\cdot d\log(\Phi/\alpha)$.
  \item The client runs in time $(\log(1/\alpha))^{O(d)}\cdot d\log(\Phi/\alpha)$. 
\end{CompactItemize}
\end{corollary}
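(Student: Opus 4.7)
The plan is to derive this corollary as a direct consequence of \Cref{thm:gaulowdim} by tuning its failure probability $\eta$ so that a union bound yields simultaneous accuracy across all (effectively) possible queries, and then invoke the discretization argument from \Cref{sec:gen_lin_queries} to pass from a discrete query set to all of $\R^d$.

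\textbf{Step 1: Discretize the query space.} Since all data and relevant query points lie in the ball $\mathcal B_\Phi$ of radius $\Phi$, round each coordinate of any $y \in \mathcal B_\Phi$ to the nearest integer multiple of $\alpha/(4\Phi\sqrt{d})$, obtaining $\bar y$ in a finite grid $\mathcal X \subset \mathcal B_\Phi$. The calculation in \Cref{sec:gen_lin_queries} shows $|e^{-\norm{\bar y-x}_2^2} - e^{-\norm{y-x}_2^2}| \leq O(\alpha)$ uniformly in $x$, hence $|KDE_X(\bar y) - KDE_X(y)| \leq O(\alpha)$ for every $y \in \mathcal B_\Phi$. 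A standard volume count gives $|\mathcal X| = (\Phi/\alpha)^{O(d)}$, hence $\log|\mathcal X| = O(d\log(\Phi/\alpha))$.

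\textbf{Step 2: Apply \Cref{thm:gaulowdim} with a tiny $\eta$.} Instantiate the LSQ-FGT mechanism from \Cref{thm:gaulowdim} with failure parameter $\eta_0 := 0.1/|\mathcal X| = (\alpha/\Phi)^{\Theta(d)}$, so that $\log(1/\eta_0) = \Theta(d\log(\Phi/\alpha))$. Privacy is inherited: the mechanism is $\epsilon$-DP regardless of the choice of $\eta_0$. For each $\bar y \in \mathcal X$, the released function $\hat e$ satisfies $|\hat e(\bar y) - KDE_X(\bar y)| \leq \alpha$ with probability at least $1-\eta_0$, and a union bound over $|\mathcal X|$ points makes this hold simultaneously for all $\bar y \in \mathcal X$ with probability at least $0.9$.

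\textbf{Step 3: Lift from grid to all of $\mathcal B_\Phi$.} Define the released function on any $y \in \mathcal B_\Phi$ by first rounding $y$ to its nearest grid point $\bar y \in \mathcal X$ and then returning $\hat e(\bar y)$ (this rounding is oblivious to the dataset, preserving $\epsilon$-DP by post-processing). By the triangle inequality and Step 1, on the $0.9$-probability event from Step 2,
\[
|\hat e(\bar y) - KDE_X(y)| \leq |\hat e(\bar y) - KDE_X(\bar y)| + |KDE_X(\bar y) - KDE_X(y)| \leq \alpha + O(\alpha)
\]
for every $y \in \mathcal B_\Phi$, and we rescale $\alpha$ by an appropriate constant.

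\textbf{Step 4: Derive the stated resource bounds.} The sample complexity, curator running time, output size, and client running time are all obtained by substituting $\log(1/\eta_0) = O(d\log(\Phi/\alpha))$ into the corresponding bounds of \Cref{thm:gaulowdim}; this exactly matches the expressions in the corollary. The client's extra preprocessing step (rounding $y$ to $\bar y$) takes only $O(d)$ time, absorbed by the stated client time. The main technical care is in the lifting step—making sure that evaluating $\hat e$ only at discretized points (rather than at the raw query) is both sufficient for accuracy (handled by the Lipschitz-type bound of Step 1) and compatible with post-processing for privacy (the grid is data-independent). Beyond this bookkeeping, the proof is essentially a union-bound corollary of \Cref{thm:gaulowdim}.
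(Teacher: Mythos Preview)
Your proof is correct and follows exactly the approach the paper uses: set $\eta=\Theta(\alpha/\Phi^2)^d$ so that $\log(1/\eta)=\Theta(d\log(\Phi/\alpha))$, apply the discretization argument from \Cref{sec:gen_lin_queries}, and take a union bound over the discretized query set before substituting into the bounds of \Cref{thm:gaulowdim}. Your write-up is more detailed than the paper's one-line justification (in particular, you spell out the query-rounding post-processing step explicitly), but the content is the same.
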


Note that the dependence on $\Phi$ remains polylogarithmic, and for the first mechanism, the dependence on the dimension remains polynomial.

\subsection{$(\epsilon,\delta)$-DP and Query Release}\label{sec:apxdp}
When $(\epsilon,\delta)$-DP with $\delta>0$ (a.k.a.~approximate DP) is allowed, the most notable prior result on Gaussian DP-KDE is due to \cite{hall2013differential}, which we call the HRW mechanism. Their mechanism is time-efficient in the query release model, albeit not in the function release model. To describe it, we define the query release model as follows. First, the client sends the curator $q$ query points, $y_1,\ldots,y_q\in\R^d$. In response the curator, who holds a dataset $X$, releases a sequence of answers $A=(a_1,\ldots,a_q)$. We require that \emph{(i)} $A$ is differentially private w.r.t.~$X$, and \emph{(ii)} with probability (say) $0.99$, it holds that $\max_{i=1,\ldots,q}|a_i-KDE_X(y_i)|\leq\alpha$.\footnote{This is the \emph{batch} query release model. In the \emph{online} query release model, the client may send the curator additional queries after seeing the answers to previous ones. The results we describe in this section extend to the online variant as well.}

Note that in the query release model, no ``curse of dimensionality'' immediately arises at all: the curator can simply compute the true KDE values of all queries in time $O(dnq)$, and release them after adding appropriate privacy-preserving noise.\footnote{Note that this would not have been possible in the function release model, where the curator has no access to the queries, and no party has to access to both the dataset and the queries simultaneously, thus the true KDE values cannot be computed at all---unless the curator enumerates over all possible queries in advance, before receiving any specific queries from the client.} However, such na\"ive mechanisms lead to an undesirably large sample complexity (or equivalently, undesirably large error $\alpha$), and improving the sample complexity while avoiding exponential dependence on $d$ turns out to be challenging.
This is manifested in the following discussion, whose quantitative results are summarized in \Cref{tbl:queryrelease}.

\begin{table*}
\caption{\small $\epsilon$-DP and $(\epsilon,\delta)$-DP KDE query release mechanisms for the Gaussian kernel, that receive $q$ queries and approximate each KDE up to additive error $\alpha$. 
(*) SmallDB, PMW and LSQ-FGT assume that all points lie in a ball of radius $\Phi$. ($\ddagger$) Recall that $O(\Phi^2/\alpha)^d$ is an upper bound on the number of possible points (cf.~discretization in \Cref{sec:gen_lin_queries}), hence on the number of queries $q$, hence the $d\log(\Phi/\alpha)$ term in the sample complexity of SmallDB is at least $\Omega(\log q)$.}
{\renewcommand{\arraystretch}{2}
\begin{center}
\begin{tabular}{llcccl}
\toprule
\sc Mechanism & \sc Pure DP? & \sc Sample complexity & \sc Runtime dependence on $d$ & \\
\midrule
Laplace & Yes & $O(\frac{q\log q}{\epsilon \cdot \alpha})$ & Linear & \\
SmallDB  & Yes & $O\left(\frac{\min\{\sqrt{d\log(1/\alpha)},1/\alpha\} \cdot d\log(\Phi/\alpha)}{\epsilon \cdot \alpha^2}\right)$ & Exponential &  \scriptsize{(*), ($\ddagger$)} \\
PMW & Yes & $\tilde O\left( \frac{\log (q) \cdot d\log(\Phi/\alpha)}{\epsilon \cdot \alpha^3}\right)$ & Exponential & \scriptsize{(*)} \\
LSQ-RFF &  Yes & $O(\frac{\log q}{\epsilon \cdot \alpha^2})$ & Linear & \\
LSQ-FGT & Yes & $\frac{\log q}{\epsilon \cdot \alpha} \cdot (\log(1/\alpha))^{O(d)}$ & Exponential & \scriptsize{(*)} \\
\midrule
Gaussian & No & $O(\frac{\sqrt{q\log q\log(1/\delta)}}{\epsilon \cdot \alpha})$ & Linear & \\
PMW & No & $\tilde O\left( \frac{\log q \sqrt{d\log(\Phi/\alpha)\log(1/\delta)}}{\epsilon \cdot \alpha^2}\right)$ & Exponential & \scriptsize{(*)} \\
HRW & No & $O(\frac{\sqrt{\log q \log(1/\delta)}}{\epsilon \cdot \alpha})$ & Linear & \\
\bottomrule
\end{tabular}
\end{center}}
\label{tbl:queryrelease}
\end{table*}

\paragraph{Query release with pure DP.}
For context, let us start with DP-KDE in the query release model under pure DP, that is, where the released sequence of answers $A$ must be $\epsilon$-DP w.r.t.~$X$.
As alluded to above, the curator can invoke the vanilla Laplace mechanism: compute the true KDE values of the $q$ queries, and add noise sampled independently from $\mathrm{Laplace}(q/(\epsilon n))$ to each. It is not hard to verify that $A$ has $\ell_1$-sensitivity $q/n$, hence the mechanism is $\epsilon$-DP. The running time is $O(dnq)$. 
The resulting sample complexity is $O(q\log(q)/(\epsilon \alpha))$. % (the extra $\log(q)$ term is for a union bound over the deviation of each of the $q$ Laplace noise samples).
While the dependence on $d,\alpha,\epsilon$ is desirable, the dependence on $q$ in the sample complexity impedes the usability of this mechanism if the number of queries is large.

Instead of the Laplace mechanism, one could use SmallDB or PMW, whose sample complexity has better dependence on $q$ in some regimes, albeit their running time is (at least) exponential in $d$. 
LSQ-RFF achieves a sample complexity of $O(\log(q)/(\epsilon\alpha^2))$ and running time linear in $d$, subsuming SmallDB and PMW on both counts.\footnote{Of course, LSQ-RFF is specialized for KDE queries, while SmallDB and PMW apply to general linear queries.} Comparing its sample complexity to the Laplace mechanism, the dependence on $q$ is exponentially better, while the dependence on $\alpha$ is quadratically worse.  LSQ-FGT has sample complexity $O(\log(q)\cdot(\log(1/\alpha))^d/(\epsilon\alpha))$ and running time exponential in $d$, improving over the above mentioned results only when $d$ is small.

%Instead, one could use SmallDB or PMW. Their sample complexity and running time in the query release model are as listed in \Cref{tbl:main}, except that the sample complexity of PMW becomes $\tilde O\left(d\log(\Phi/\alpha)\cdot\log(q)/(\epsilon \alpha^3)\right)$. While the dependence on $q$ is (in some regimes) improved compared to the vanilla Laplace mechanism, the dependence on $\alpha$ is degraded polynomially (to at least $1/\alpha^2$), and the running time becomes exponential in $d$. LSQ-RFF achieves a better sample complexity of $O(\log(q)/(\epsilon\alpha^2))$,\footnote{Note that the sample complexity of LSQ-RFF is better than that of SmallDB in any parameter regime, as we recall from \Cref{sec:gen_lin_queries} that $O(\Phi/\alpha)^2$ is an upper bound on the number of possible queries $q$, thus $\log(q)\leq O(d\log(\Phi/\alpha))$.} and has running time linear in $d$, albeit its sample complexity dependence on $\alpha$ is still quadratic. LSQ-FGT has sample complexity $O(\log(q)\cdot(\log(1/\alpha))^d/(\epsilon\alpha))$ and running time exponential in $d$, improving over the above mentioned results only when $d$ is small. 
 
\paragraph{Query release with approximate DP: the HRW mechanism.}
Now suppose approximate DP is allowed---that is, the curator is allowed to release an answer sequence $A$ which is $(\epsilon,\delta)$-DP w.r.t.~X. 
The natural analog of the vanilla Laplace mechanism from the pure DP case is the vanilla Gaussian mechanism (see \cite{dwork2014algorithmic}): the curator computes the true KDE values of all queries, and adds independent Gaussian noise $N(0,2q\log(1.25/\delta)/(\epsilon n)^2)$ to each.
It is not hard to verify that $A$ has $\ell_2$-sensitivity $\sqrt{q}/n$, hence the mechanism is $(\epsilon,\delta)$-DP.
The running time is $O(dnq)$. 
The resulting sample complexity is $O(\sqrt{q\log(q)\cdot\log(1/\delta)}/(\epsilon \alpha))$. 
While the dependence on $q$ is quadratically better than the pure-DP Laplace mechanism, it is still undesirably large. Again, one could use SmallDB or PMW, but they are subsumed by the pure-DP LSQ-RFF mechanism, even when approximate DP is allowed.\footnote{PMW has an $(\epsilon,\delta)$-DP variant with better bounds than its pure-DP variant. SmallDB has no $(\epsilon,\delta)$-DP variant. See \Cref{tbl:queryrelease}.}

%to (say) logarithmic. Here too, one could resort to approximate DP variants of SmallDB or PMW. The sample complexity of each of them is improved by a factor of $1/\alpha$ in the approximate DP case compared to the pure DP case, at the expense of an additional term that depends on $\delta$ (see \cite{blum2013learning,hardt2010multiplicative}. Their running time remains the same, and in particular, it is exponential in $d$. 

%Improving over these, 
%\cite{hall2013differential} presented the HRW mechanism, which is $(\epsilon,\delta)$-DP, runs in time $O(dq(n+q))$, and achieves sample complexity $O(\sqrt{\log(q)\cdot\log(1/\delta)}/(\epsilon \alpha))$. It operates similarly to the Gaussian mechanism, except that the noise samples added to different answers are correlated rather than independent. They showed that by correlating the noise by an appropriate Gaussian process, it suffices that the marginal noise added to each answer is distributed like $N(0,2\log(2/\delta)/(\epsilon n)^2)$ in order to ensure the mechanism is $(\epsilon,\delta)$-DP; note that the noise magnitude does not grow with $q$ at all. This leads to the aforementioned sample complexity bound, which improves over all previously mentioned results if approximate DP with sufficiently large $\delta$ (say a small constant $\delta=\Omega(1)$) is allowed. 

\cite{hall2013differential} presented the HRW mechanism, which is $(\epsilon,\delta)$-DP, runs in time $O(dq(n+q))$, and achieves sample complexity $O(\sqrt{\log(q)\cdot\log(1/\delta)}/(\epsilon \alpha))$. It operates similarly to the Gaussian mechanism, except that the noise samples added to different answers are not independent, but correlated via an appropriate Gaussian process, allowing for much less noise per query. 
Namely, the mechanism returns $a_i=KDE_X(y_i)+Z_i$, where $Z_i\sim N(0,2\log(2/\delta)/(\epsilon n)^2)$ and $\mathrm{Cov}(Z_i,Z_j)=k(y_i,y_j)=e^{-\norm{y_i-y_j}_2^2}$. 
They prove that the mechanism is $(\epsilon,\delta)$-DP for arbitrarily many queries, even though 
the noise magnitude per query does not grow with $q$ at all. (The extra $\sqrt{\log q}$ term in the sample complexity is from a standard bound on the maximum of this finite Gaussian process, ensuring that all $q$ queries are answered accurately simultaneously.) 
The HRW sample complexity is better than all previously mentioned results if approximate DP with sufficiently large $\delta$ (say a small constant $\delta=\Omega(1)$) is allowed. 

\paragraph{Query release vs.~function release.}
The HRW mechanism runs in time linear in $d$ in the query release model, but in order to use it for function release, the curator must release answers to all possible queries, which entails running time exponential in $d$. Thus, in the function release model, to our knowledge, the LSQ-RFF mechanism, despite being pure-DP, is currently the only DP-KDE mechanism for the Gaussian kernel that achieves $(\alpha,\eta)$-approximation with running time linear in $d$, even if approximate DP is allowed.

\subsection{Overview of LSHable Kernels}\label{sec:lshable}
As mentioned in the introduction, the Laplacian kernel $k(x,y)=e^{-\norm{x-y}_1}$ is likely the most popular LSHable kernel over $\R^d$. For completeness, in this section we give an overview of other kernels known to be LSHable. 

\cite{rahimi2007random} introduced a family of LSHable kernels (although they did not use this terminology) in their Random Binning Features construction. They start by showing that the \emph{hat kernel} over $x,y\in\R$, $\hat k_\sigma(x,y)=\max\{0,1-|x-y|/\sigma\}$, is LSHable. 
They then show this implies the LSHability of shift-invariant kernels over $\R$ that can be written as convex combinations of such hat kernels on a compact subsets of $\R\times \R$ (this includes the one-dimensional Laplacian kernel $k(x,y)=e^{-|x-y|}$), and of kernels over $\R^d$ that can be written as the product of one-dimensional LSHable kernels over the coordinates (this includes the $d$-dimensional Laplacian kernel $k(x,y)=e^{-\norm{x-y}_1}=\prod_{i=1}^de^{-|x_i-y_i|}$). They note that this family does not include the Gaussian kernel. 

\cite{andoni2009nearest} discussed additional LSHable kernels over $\R^d$: the \emph{exponential} kernel $k(x,y)=e^{-\norm{x-y}_2}$, whose LSHability follows from that of the Laplacian kernel essentially by an (efficient and approximate) isometric embedding of $\ell_2$ into $\ell_1$;\footnote{Note that the exponential kernel is different from the Laplacian kernel in that the norm in the exponent is $\ell_2$ and not $\ell_1$, and is different from the Gaussian kernel in that the norm is not squared.} the \emph{geodesic} kernel over the unit sphere, $k(x,y)=1-\pi^{-1}\theta(x,y)$, where $\theta(x,y)$ denotes the angle between $x$ and $y$; and the Erfc kernel $k(x,y)=\tfrac{\mathrm{erfc}(\norm{x-y}_2)}{2-\mathrm{erfc}(\norm{x-y}_2)}$, where $\mathrm{erfc}(z)=\tfrac{2}{\sqrt\pi}\int_{z}^\infty e^{-t^2}\mathrm{d}t$ is the complementary Gauss error function. 
Regarding the lack of LSHability results for the Gaussian kernel, they suggest using the Erfc kernel as a proxy (naming it a ``near-Gaussian kernel''), showing it approximates the Gaussian kernel at every up point up to an additive error of $0.16$. 
Unfortunately, this error is far too large for most KDE applications. Furthermore, the LSH family associated with the Erfc kernel has running time that depends exponentially on the additive error $\alpha$ (where $\alpha$ is the approximation error for the Erfc kernel, leading to an error of $0.16+\alpha$ for the Gaussian kernel), making it infeasible when $\alpha$ is small. 

The lack of available LSHability results for the Gaussian and Cauchy kernel is also discussed in \cite{backurs2018efficient,siminelakis2019rehashing}, who develop alternative methods for (non-private) approximation of these kernels where normally LSHability would be used. 

Finally, apart from $\R^d$, some LSHability results are available for kernels that measure similarity over finite spaces. \cite{andoni2009nearest} observe that the Jaccard kernel is LSHable, while \cite{chierichetti2015lsh} discuss transformations that preserve the LSHability of such kernels.

\section{Extensions to Other Kernels}\label{sec:otherkernels}

In this section we discuss the applicability of our results beyond the Gaussian kernel. 
The key distinction to draw here is between the \emph{sample complexity} of the DP-KDE mechanism (i.e., the tradeoff between the privacy parameter $\epsilon$ and the additive error parameters $\alpha,\eta$), for which we can make general statements for some families of kernels, to the \emph{computational efficiency} of the mechanism (i.e. the running times of the curator and the client, and the curator output size), which would generally depend on the specific properties of each kernel. 

\subsection{LSQ with RFF}\label{sec:otherkernels_rff}
\cite{rahimi2007random} showed that every positive definite shift-invariant kernel (abbreviated henceforth as a PDSI kernel) admits a family of random Fourier features. More precisely, for every such kernel $k$ defined over $\R^d$, there exists a distribution $\mathcal D^{RFF}_k$ over $\R^d$ such that
\[ \forall x,y\in\R^d \;\; , \;\; k(x,y) = \E_{\omega\sim\mathcal D^{RFF}_k,\beta\sim\mathrm{Uniform}[0,2\pi)}[\sqrt{2}\cos(\omega^Tx+\beta) \cdot \sqrt{2}\cos(\omega^Ty+\beta)] . \]
This implies that every PDSI kernel is $(1,\sqrt{2},1)$-LSQable. Therefore, from \Cref{lmm:privacy,lmm:utility_largeq} we get the following result.

\begin{theorem}\label{thm:lsqrff}
For every PDSI kernel over $\R^d$, there is an $\epsilon$-DP function release mechanism for $(\alpha,\eta)$-approximation of its KDE, on datasets of size at least $n\geq O(\log(1/\eta)/(\epsilon\alpha^2))$.
\end{theorem}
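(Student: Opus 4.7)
The plan is to mirror the proof of \Cref{thm:gauhighdim}, abstracting away from the specific form of the RFF distribution. The only kernel-specific ingredient in that proof was that the Gaussian kernel admits random Fourier features drawn from $N(0, I_d)$; the remainder of the argument relied solely on the exact $(1, \sqrt{2}, 1)$-LSQability that these features provide, and on the generic guarantees of \Cref{lmm:privacy,lmm:utility_largeq}.

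First, I would invoke the classical Rahimi--Recht result, a consequence of Bochner's theorem: for every PDSI kernel $k$ over $\R^d$ there is a distribution $\mathcal D_k^{RFF}$ over $\R^d$ such that, sampling $\omega \sim \mathcal D_k^{RFF}$ and $\beta \sim \mathrm{Uniform}[0, 2\pi)$ and setting $z_{\omega,\beta}(x) = \sqrt{2}\cos(\omega^T x + \beta)$, one has $\E[z_{\omega,\beta}(x)\cdot z_{\omega,\beta}(y)] = k(x,y)$ for all $x,y \in \R^d$. Let $\mathcal Q$ be the distribution that samples $(\omega,\beta)$ in this way and returns the pair $(z_{\omega,\beta}, z_{\omega,\beta})$. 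Since the output dimension is $Q=1$, every coordinate lies in $[-\sqrt{2}, \sqrt{2}]$, and sparsity $S=1$ is automatic, $\mathcal Q$ is an \emph{exact} $(1, \sqrt{2}, 1)$-LSQ family for $k$.

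Next, I would instantiate \Cref{alg:main} with this $\mathcal Q$, privacy level $\epsilon$, and the parameter choices $J = \Theta(\log(1/\eta))$ and $I = \Theta(\log(1/\eta)/\alpha^2)$ prescribed by \Cref{lmm:utility_largeq}. Privacy is immediate from \Cref{lmm:privacy}. For accuracy, I would apply \Cref{lmm:utility_largeq} with LSQ-approximation parameter equal to $0$ (since $\mathcal Q$ is exact) and with $R = \sqrt 2$, $S = 1$, obtaining
\[
\left|\hat e(y) - KDE_X(y)\right| \;<\; O\!\left(\alpha + \frac{\log(1/\eta)}{\alpha\epsilon|X|}\right)
\]
with probability $1-\eta$. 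The hypothesis $n \geq O(\log(1/\eta)/(\epsilon\alpha^2))$ makes the second term at most $O(\alpha)$, and rescaling $\alpha$ by an absolute constant yields the claimed $(\alpha,\eta)$-approximation.

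I do not anticipate any real obstacle: the theorem is essentially a kernel-agnostic repackaging of the LSQ-RFF argument, and the sample-complexity conclusion does not reference the cost of sampling from $\mathcal D_k^{RFF}$ nor the cost of evaluating $z_{\omega,\beta}$. The only subtlety worth flagging is precisely why the theorem states only a sample-complexity bound and not running-time bounds as in \Cref{thm:gauhighdim}: for a general PDSI kernel, the curator and client running times and the output size inherit an additional dependence on $T_{\mathcal Q}$, $T_f$, $T_g$, $L_{\mathcal Q}$ in the notation of \Cref{lmm:efficiency}, which is kernel-dependent and must be treated case by case.
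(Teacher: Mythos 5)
Your proposal is correct and follows the paper's own argument essentially verbatim: the paper likewise derives exact $(1,\sqrt2,1)$-LSQability of any PDSI kernel from the Rahimi--Recht random Fourier features and then invokes \Cref{lmm:privacy} and \Cref{lmm:utility_largeq} with $S=1$, $R=\sqrt2$ and the stated lower bound on $n$. Your closing remark about why only sample complexity is stated also matches the paper, which defers the kernel-dependent running-time analysis to a separate efficiency proposition in terms of $T_{\mathcal Q},T_f,T_g,L_{\mathcal Q}$.
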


These are the same privacy, utility and sample complexity guarantees as we get for the Gaussian kernel in \Cref{thm:gauhighdim}. However, the computational efficiency (and more specifically in the case, the curator running time) depends on the computational properties of $D^{RFF}_k$ for each specific kernel $k$. Namely, it hinges on whether one can sample $\omega$ from $\mathcal D^{RFF}_k$ efficiently. Formally, by \Cref{lmm:efficiency}, we get:
\begin{proposition}\label{prp:lsqrff_efficiency}
Let $k$ be a PDSI kernel. 
Let $T_k^{RFF}$ be the time complexity of drawing a sample $\omega$ from $\mathcal D^{RFF}_k$. Then, the LSQ-RFF DP-KDE mechanism from \Cref{thm:lsqrff} satisfies the following:
\begin{itemize}
  \item The curator runs in time $O((nd+T_k^{RFF})\log(1/\eta)/\alpha^2)$.
  \item The curator output size is $O(d\log(1/\eta)/\alpha^2)$.
  \item The client runs in time $O(d\log(1/\eta)/\alpha^2)$.
\end{itemize}
\end{proposition}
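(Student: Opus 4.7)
The plan is to derive this as a direct instantiation of the general efficiency lemma (\Cref{lmm:efficiency}) applied to the LSQ family induced by Rahimi--Recht random Fourier features. From the setup preceding \Cref{thm:lsqrff}, every PDSI kernel $k$ over $\R^d$ admits the LSQ representation $(f_{\omega,\beta}, g_{\omega,\beta}) = (z_{\omega,\beta}, z_{\omega,\beta})$ with $z_{\omega,\beta}(x) = \sqrt 2 \cos(\omega^T x + \beta)$, where $\omega \sim \mathcal D_k^{RFF}$ and $\beta \sim \mathrm{Uniform}[0,2\pi)$; this is an exact $(1, \sqrt 2, 1)$-LSQ family, so in the notation of \Cref{lmm:efficiency} we have $Q = 1$ and $S = 1$.

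The second step is to read off the four quantities $T_{\mathcal Q}, T_f, T_g, L_{\mathcal Q}$ for this family. Sampling $(f,g) \sim \mathcal Q$ amounts to drawing $\omega$ from $\mathcal D_k^{RFF}$ in time $T_k^{RFF}$ and $\beta$ in $O(1)$ time, so $T_{\mathcal Q} = O(T_k^{RFF})$. Evaluating $z_{\omega,\beta}$ at any point of $\R^d$ reduces to computing one inner product $\omega^T x$ and one scalar cosine, so $T_f = T_g = O(d)$. A sampled pair is fully specified by the $d+1$ machine words of $(\omega, \beta)$, giving $L_{\mathcal Q} = O(d)$.

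The final step is to plug these quantities into \Cref{lmm:efficiency} together with the setting of $I$ dictated by \Cref{lmm:utility_largeq}, namely $I = \Theta(\log(1/\eta)/\alpha^2)$. The three resulting bounds follow immediately:
\begin{itemize}
\item Curator time: $O(I(T_{\mathcal Q} + |X| T_f + Q)) = O\bigl((T_k^{RFF} + nd + 1) \cdot \log(1/\eta)/\alpha^2\bigr) = O\bigl((nd + T_k^{RFF}) \log(1/\eta)/\alpha^2\bigr)$.
\item Output size: $O(I(L_{\mathcal Q} + Q)) = O(d \log(1/\eta)/\alpha^2)$.
\item Client time: $O(I(T_g + S)) = O(d \log(1/\eta)/\alpha^2)$.
\end{itemize}

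There is no real obstacle here: the proposition is essentially a bookkeeping substitution, and the only thing to be careful about is correctly isolating the $T_k^{RFF}$ contribution (which only enters the curator time, since the client never samples $\omega$ and the description size is governed by the resulting $\omega \in \R^d$ rather than by the cost of producing it). The argument is identical in structure to the efficiency part of the proof of \Cref{thm:gauhighdim}, with the only change being that the $O(d)$ cost of sampling $\omega \sim N(0, I_d)$ for the Gaussian kernel is replaced by the generic $T_k^{RFF}$.
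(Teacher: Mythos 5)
Your proposal is correct and follows essentially the same route as the paper: identify the RFF pair $(z_{\omega,\beta},z_{\omega,\beta})$ as a $(1,\sqrt2,1)$-LSQ family, read off $T_{\mathcal Q}=T_k^{RFF}$, $T_f=T_g=O(d)$, $L_{\mathcal Q}=O(d)$, and substitute into \cref{lmm:efficiency} with $I=\Theta(\log(1/\eta)/\alpha^2)$ from \cref{lmm:utility_largeq}. No gaps.
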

\begin{proof}
In the notation of \Cref{lmm:efficiency}, we have $T_{\mathcal Q}=T_k^{RFF}$. Furthermore, $L_{\mathcal Q}=d+1$ since this is the number of machine words needed to describe a pair $\omega,\beta$ (regardless of the time it took to sample $\omega$), and $T_f=T_g=O(d)$ since computing $\sqrt{2}\cos(\omega^Tx+\beta)$ given $x,\omega,\beta$ takes time $O(d)$. We plug these into \Cref{lmm:efficiency} together with $I=O(\log(1/\eta)/\alpha^2)$, the setting of $I$ used in \Cref{lmm:utility_largeq} to obtain \Cref{thm:lsqrff}, and the proposition follows.
\end{proof}

Let us give some examples of $\mathcal D^{RFF}_k$ and $T_k^{RFF}$ for specific kernels, and observe how they affect the efficiency of the LSQ-RFF mechanism.
\begin{itemize}
  \item \emph{Gaussian, Laplacian and Cauchy kernels:} For these three kernels, mentioned in \Cref{sec:kde}, \cite{rahimi2007random} derived the corresponding RFF distributions (we list them here with bandwidth $\sigma=1$):
  \begin{itemize}
    \item For the Gaussian kernel $k(x,y)=\exp(-\norm{x-y}_2^2)$, $\mathcal D^{RFF}_k$ is the $d$-dimensional Gaussian distribution $\sqrt{2}\cdot N(0,I_d)$. 
    \item For the Laplacian kernel $k(x,y)=\exp(-\norm{x-y}_1)$, $\mathcal D^{RFF}_k$ is the $d$-dimensional Cauchy distribution, whose density at $\omega\in\R^d$ is $\prod_{j=1}^d(\pi(1+\omega_j^2))^{-1}$.
    \item For the Cauchy kernel $k(x,y)=\prod_{j=1}^d2/(1+(x_j-y_j)^2)$, $\mathcal D^{RFF}_k$ is the $d$-dimensional Laplace distribution $\mathrm{Laplace}(0, I_d)$.
  \end{itemize}
  Each of these distributions is a $d$-dimensional product distribution where each coordinate can be sampled in time $O(1)$, hence $T_k^{RFF}=O(d)$. Therefore, for these kernels, we get the same DP-KDE results as stated for the Gaussian kernel in \Cref{thm:gauhighdim}.
  \item \emph{Exponential $\ell_p^p$ kernels:} Let $p\in[1,2]$. Consider the kernel $k(x,y)=\exp(-\norm{x-y}_p^p)$. This can be seen as a generalization of the Gaussian and Laplacian kernels (which correspond to $p=2$ and $p=1$ respectively). For this kernel, it can be checked that $\mathcal D^{RFF}_k$ is the $d$-dimensional product distribution whose coordinates are i.i.d.~samples from the $p$-stable distribution, and furthermore, each coordinate can be sampled in time $O(1)$. See \cite{indyk2006stable} for the definition of the $p$-stable distribution and for how to efficiently sample from it. Therefore, for these kernels too we have $T_k^{RFF}=O(d)$, and we get the same DP-KDE result as in \Cref{thm:gauhighdim}.
  \item \emph{Exponential $\ell_p$ kernels:} Again let $p\in[1,2]$, and consider the kernel $k(x,y)=\exp(-\norm{x-y}_p)$. Note that, in contrast to the previous case, the $\ell_p$-norm in the exponent is not raised to the power $p$. The $p=1$ case again coincides with the Laplacian kernel, while the $p=2$ case coincides with the exponential kernel mentioned in \Cref{sec:lshable}. These kernels are PDSI, hence \Cref{thm:lsqrff,prp:lsqrff_efficiency} hold for them. However, we do not immediately see how to efficiently sample from their RFF distribution $\mathcal D^{RFF}_k$ (even though it may be possible), and are therefore unable to determine $T_k^{RFF}$ and bound the curator running time of their LSQ-RFF DP-KDE mechanism. 
\end{itemize}

\subsection{LSQ with FGT}
The Fast Gauss Transform is rather specialized to the Gaussian kernel. Nonetheless, it can be extended to certain kernels with sufficiently similar properties, like those discussed in \cite{alman2020algorithms}, section 9.3. For those kernels, we get the same DP-KDE results as we get for the Gaussian kernel in \Cref{thm:gaulowdim}.

\subsection{LSQ with LSH}
With LSH, the situation is similar to LSQ-RFF: for every LSHable kernel we can get a DP-KDE mechanism with the same privacy and utility guarantees as \Cref{thm:lsqrff}, but the computational efficiency depends on the properties of the LSH family associated with that specific kernel. More precisely, we have the following result, which we recall follows already from the prior work of \cite{coleman2020one}.

\begin{theorem}\label{thm:lsqlsh}
For every $\alpha$-approximate LSHable kernel over $\R^d$, there is an $\epsilon$-DP function release mechanism for $(\alpha,\eta)$-approximation of its KDE, on datasets of size at least $n\geq O(\log(1/\eta)/(\epsilon\alpha^2))$.
\end{theorem}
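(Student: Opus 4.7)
The plan is to instantiate the LSQ mechanism of Algorithm \ref{alg:main} with an LSQ family derived from the given LSH family via \Cref{prp:lsh2lsq}. First I would apply \Cref{prp:lsh2lsq} to the $\alpha$-approximate LSHable kernel $k$ to obtain a $2\alpha$-approximate $(\lceil 1/\alpha \rceil, 1, 1)$-LSQ family $\mathcal Q$ for $k$. Then I would run \Cref{alg:main} on this family with parameters $I = \Theta(\log(1/\eta)/\alpha^2)$ and $J = \Theta(\log(1/\eta))$, as prescribed by \Cref{lmm:utility_largeq}.

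Privacy is immediate: \Cref{lmm:privacy} yields $\epsilon$-DP for every LSQ family, including the one built above. For utility, I would apply \Cref{lmm:utility_largeq} to this $2\alpha$-approximate $(Q,R,S)$-LSQ family with $R = S = 1$, which gives
\[
  |\hat e(y) - KDE_X(y)| < 2\alpha + O\!\left(\alpha + \frac{\log(1/\eta)}{\alpha\,\epsilon\,|X|}\right)
\]
with probability at least $1-\eta$. By the hypothesis $n = |X| \geq O(\log(1/\eta)/(\epsilon\alpha^2))$ with an appropriate constant, the noise term is $O(\alpha)$, so the total error is $O(\alpha)$; rescaling $\alpha$ by a constant then yields the claimed $(\alpha,\eta)$-approximation and the claimed sample complexity.

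There is no substantive technical obstacle, since the LSQ framework has absorbed all the work; the proof is merely a bookkeeping exercise. The only minor points to track are the constant-factor losses from two sources---the factor of $2$ in the LSH-to-LSQ conversion (\Cref{prp:lsh2lsq}), and the extra additive $\alpha$ inside the utility guarantee (\Cref{lmm:utility_largeq})---both of which are absorbed by a final rescaling of $\alpha$. I would also note that the theorem deliberately omits computational efficiency: the curator/client running times and output size produced by \Cref{lmm:efficiency} would depend on the kernel-specific parameters $T_{\mathcal Q}, T_f, T_g, L_{\mathcal Q}$ of the underlying LSH family (cf.\ the Laplacian instantiation in \Cref{thm:laplaciankde}), which is why the statement only asserts privacy, utility and sample complexity in general.
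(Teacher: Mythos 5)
Your proposal is correct and matches the paper's proof exactly: the paper likewise derives the theorem by combining \Cref{prp:lsh2lsq} (giving a $2\alpha$-approximate $(\lceil 1/\alpha\rceil,1,1)$-LSQ family) with \Cref{lmm:privacy} and \Cref{lmm:utility_largeq}, with the constant-factor losses absorbed by rescaling $\alpha$. Your additional bookkeeping (explicit parameter settings, the error expression with $R=S=1$, and the remark on why efficiency is omitted) is simply a more detailed write-up of the same argument.
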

\begin{proof}
By \Cref{prp:lsh2lsq}, $k$ is $2\alpha$-approximate $(\lceil1/\alpha\rceil,1,1)$-LSQable, hence the theorem follows from \Cref{lmm:privacy,lmm:utility_largeq}.
\end{proof}
These are the same privacy, utility and sample complexity guarantees as we get for the Gaussian kernel in \Cref{thm:lsqrff} (however, note that PDSI kernels and LSHable kernels are distinct classes of kernels). 
The computational efficiency of the LSH-based mechanism depends on the computational properties of the LSH family as follows.
\begin{proposition}\label{prp:lsqrff_efficiency}
Let $k$ be an $\alpha$-approximate LSHable kernel over $\R^d$.
Let $\mathcal H$ be the associated LSH family. 
Let $B$ be range size (i.e., number of hash buckets) of the hash functions in $\mathcal H$. Let $T_{\mathcal H}$ be the time to sample $h\sim \mathcal H$, let $T_h$ be the time to evaluate $h(x)$ given $h\in\mathcal H$ and $x\in\R^D$, and let $L_{\mathcal H}$ be the description size of $h\in\mathcal H$. 
Then, the LSQ-LSH DP-KDE mechanism from \Cref{thm:lsqlsh} satisfies the following:
\begin{itemize}
  \item The curator runs in time $O((nT_h + T_{\mathcal H})\log(1/\eta)/\alpha^2)$.
  \item The curator output size is $O((L_{\mathcal H} + \min\{B,\log B + 1/\alpha\})\cdot\log(1/\eta)/\alpha^2)$.
  \item The client runs in time $O(T_h\log(1/\eta)/\alpha^2)$.
\end{itemize}
\end{proposition}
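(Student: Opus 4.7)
The plan is to invoke \Cref{lmm:efficiency} on two distinct LSH-based LSQ families and to keep the better of the two bounds in each coordinate. Recall that \Cref{thm:lsqlsh} is obtained via \Cref{lmm:utility_largeq} with $I=\Theta(\log(1/\eta)/\alpha^2)$ and $J=\Theta(\log(1/\eta))$, so this is the $I$ we carry into the efficiency calculation. The two LSQ families are already constructed in the paper: the basic family $\mathcal Q_1$ from \Cref{prp:lsh2lsq_b}, with $f_h(x)=g_h(x)$ equal to the indicator of bucket $h(x)$ in $\{0,1\}^B$; and the refined family $\mathcal Q_2$ from \Cref{prp:lsh2lsq} (proved in \Cref{sec:lshappendix}), obtained by composing $h$ with a universal hash $u:[B]\to[\lceil 1/\alpha\rceil]$. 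The doubling in approximation error incurred by $\mathcal Q_2$ is absorbed by a constant rescaling of $\alpha$, so either family is admissible, and we are free to pick the one giving the tighter bound.

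Next, I would tabulate the parameters of \Cref{lmm:efficiency} for each family. For $\mathcal Q_1$: $T_{\mathcal Q}=T_{\mathcal H}$, $L_{\mathcal Q}=L_{\mathcal H}$, $T_f=T_g=T_h$, $Q=B$, and $S=R=1$. For $\mathcal Q_2$: sampling and evaluating a standard pairwise-independent hash $u$ adds $O(1)$ sampling and evaluation time and $O(\log B)$ description size, so $T_{\mathcal Q}=T_{\mathcal H}+O(1)$, $L_{\mathcal Q}=L_{\mathcal H}+O(\log B)$, $T_f=T_g=T_h+O(1)$, $Q=\lceil 1/\alpha\rceil$, and $S=R=1$. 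Plugging either set into \Cref{lmm:efficiency} and taking the coordinate-wise minimum gives curator time $I\cdot(T_{\mathcal H}+nT_h+\min\{B,1/\alpha\})$, output size $I\cdot(L_{\mathcal H}+\min\{B,\log B+1/\alpha\})$, and client time $I\cdot(T_h+S)=I\cdot T_h$. Substituting $I=\Theta(\log(1/\eta)/\alpha^2)$ reproduces the claimed output-size and client-time bounds verbatim.

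The only curator-time statement requires a small additional observation: the $\min\{B,1/\alpha\}$ summand---corresponding to initializing the $Q$-dimensional vector $F_i$ and adding a Laplace sample to each entry---is absorbed by $nT_h$ under the premise $n\geq\Omega(\log(1/\eta)/(\epsilon\alpha^2))$ that comes with \Cref{thm:lsqlsh}. Indeed $T_h\geq 1$, so $nT_h\geq n\geq 1/\alpha$ in every meaningful regime (e.g.\ whenever $\log(1/\eta)\geq\epsilon\alpha$), which lets us drop the last summand and arrive at the stated expression. This absorption is the only non-mechanical step in the argument; I expect it to be the minor obstacle, while everything else is a direct substitution into \Cref{lmm:efficiency} followed by a coordinate-wise minimum over the two LSQ families.
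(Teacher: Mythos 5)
Your proposal is correct and follows essentially the same route as the paper: analyze both LSQ constructions (\Cref{prp:lsh2lsq_b} and \Cref{prp:lsh2lsq}), plug their parameters into \Cref{lmm:efficiency} with $I=\Theta(\log(1/\eta)/\alpha^2)$, take the better bound coordinate-wise, and absorb the $\min\{B,1/\alpha\}$ term in the curator time via $nT_h\geq n\geq\Omega(1/\alpha)$ from the sample-complexity premise. Your explicit remarks about the constant rescaling of $\alpha$ for the $2\alpha$-approximate family and the regime condition behind $n\geq 1/\alpha$ are details the paper leaves implicit, but they do not change the argument.
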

\begin{proof}
Let $I=O(\log(1/\eta)/\alpha^2)$, noting this is the setting of $I$ used in \Cref{lmm:utility_largeq} to obtain \Cref{thm:lsqlsh}.

Recall that we have two options to transform the LSH family into an LSQ family: either by  \Cref{prp:lsh2lsq_b} or by \Cref{prp:lsh2lsq}. We analyze both cases. 
If we use \Cref{prp:lsh2lsq_b}, then $k$ is $(B,1,1)$-LSQable, and in the notation of \Cref{lmm:efficiency} we have $T_{\mathcal Q}=T_{\mathcal H}$, $T_f=T_g=T_h$, and $L_{\mathcal Q}=L_{\mathcal H}$. Applying \Cref{lmm:efficiency}, the curator running time is $O(I(nT_h + T_{\mathcal H}+B))$, the curator output size is $O(I(L_{\mathcal H} + B))$, and the client running in time $O(I\cdot T_h)$.

Alternatively, if we use \Cref{prp:lsh2lsq}, then $k$ is $(\lceil1/\alpha\rceil,1,1)$-LSQable. The proof of \Cref{prp:lsh2lsq} (cf.~\Cref{{sec:lshappendix}}) obtains the LSQ family by composing over $\mathcal H$ a universal hash family $\mathcal U$ that hashes a domain of size $B$ into $\lceil1/\alpha\rceil$ hash buckets. There are well-known choices for $\mathcal U$ (e.g., \cite{carter1977universal}) with sampling and evaluation times $O(1)$ and description size $O(\log B)$. Hence, for the composition of $\mathcal U$ over $\mathcal H$, we have in the notation of \Cref{lmm:efficiency} $T_{\mathcal Q}=T_{\mathcal H}+O(1)$, $T_f=T_g=T_h+O(1)$, and $L_{\mathcal Q}=L_{\mathcal H}+O(\log B)$. Applying \Cref{lmm:efficiency}, the curator running time is $O(I(nT_h + T_{\mathcal H}+1/\alpha))$, the curator output size is $O(I(L_{\mathcal H} + \log B + 1/\alpha))$, and the client running in time $O(I\cdot T_h)$.

Putting these together, the curator running time is $O(I(nT_h + T_{\mathcal H}+\min\{B,1/\alpha\}))$, the curator output size is $O(I(L_{\mathcal H} + \min\{B,\log B + 1/\alpha\}))$, and the client running in time $O(I\cdot T_h)$. Note that the approximation guarantee in \Cref{thm:lsqlsh} requires $n\geq O(\log(1/\eta)/(\epsilon\alpha^2))$, hence $nT_h \geq n \geq O(1/\alpha)$, and hence the curator running time becomes $O(I(nT_h + T_{\mathcal H}))$. These are the bounds claimed in the proposition.
\end{proof}

Here too, let us give some examples of how different LSH families affect the computational efficiency of the DP-KDE mechanism.
\begin{itemize}
  \item \emph{Laplacian, exponential and geodesic kernels:} as already mentioned in \Cref{sec:lsh}, the Laplacian kernel admits an LSH family that satisfies $T_{\mathcal Q},T_f,T_g,L_{\mathcal Q}=O(d)$ in the notation of \Cref{lmm:efficiency}. Therefore, we get an efficient DP-KDE mechanism for it, as stated in \Cref{thm:laplaciankde}. The exponential kernel and the geodesic kernel, mentioned as LSHable in \Cref{sec:lshable}, also have LSH families with similar (though perhaps slightly different) efficiency properties, given in \cite{andoni2009nearest}.
  \item \emph{Erfc kernel:} In \Cref{sec:lshable} we defined the Erfc kernel, and mentioned that \cite{andoni2009nearest} showed it is $\alpha$-approximate LSHable, albeit with an LSH family that takes time exponential in $\alpha$ to sample from. Therefore, for this kernel we get a DP-KDE mechanism with the privacy, utility and sample complexity stated in \Cref{thm:lsqlsh}, but with running time exponential in $\alpha$.
  \item \emph{Exponential $\ell_p$ kernels:} Let us revisit the family of kernels $k(x,y)=\exp(-\norm{x-y}_p)$ with $p\in[1,2]$. We discussed these kernels in the context of LSQ-RFF, and showed that while we have DP-KDE mechanisms for them, we do not know them to be computationally efficient. This result also follows by LSHability. The reason is that $\ell_p$ is known to embed isometrically into $\ell_1$ \cite{johnson1982embedding}. This implies that the kernel $k(x,y)=\exp(-\norm{x-y}_p)$ with any $p\in[1,2]$ is LSHable, by first applying an isometric embedding of the $\ell_p$ distances into $\ell_1$, and then using the LSHability of the Laplacian kernel. However, except in the $p=2$ case, it is not known how to compute an (approximately) isometric embedding of $\ell_p$ into $\ell_1$ efficiently. Therefore, while for these kernels we can get DP-KDE mechanisms from \Cref{thm:lsqlsh}, we are unable to bound their computational efficiency.
\end{itemize}

\section{Additional Experiments and Implementation Details}\label{sec:experiments_appendix}

\subsection{Mechanism Implementation}\label{sec:mechanism_implementation}
In this section we provide details on how we instantiate the LSQ mechanism from \Cref{alg:main} into the LSQ-RFF and LSQ-FGT mechanisms included in our code and used in our experiments, and on how these mechanisms are parameterized.

The efficiency/utility trade-off of the LSQ mechanism in \Cref{alg:main} is governed by the input parameters $I,J$, which are non-negative integers such that $J$ is a divisor of $I$. (Observe that the computational efficiency bounds in \Cref{lmm:efficiency} grow linearly with $I$.) Their role is simply to determine the number of repetitions in a standard median-of-means (MoM) averaging scheme, to induce the desired probabilistic concentration. The mechanism performs a total of $I$ independent repetitions, and uses them to return the median of $J$ terms, where each term is the average of $I'=I/J$ repetitions. As usual with MoM, $I'$ governs the additive error $\alpha$ that we consider ``successful'', while $J$ governs the probability $\eta$ of failing to achieve that successful additive error.

From a typical theoretical perspective, one would like to select the desired utility parameters $\alpha$ and $\eta$, and ensure that the mechanism rigorously satisfies $(\alpha,\eta)$-approximation. To this end, \Cref{lmm:utility_smallq,lmm:utility_largeq} specify the setting of $I$ and $J$ that formally guarantees $(\alpha,\eta)$-approximation and leads to our theoretical results, \Cref{thm:gauhighdim,thm:gaulowdim}. 

For our experiments, however, we would like to directly control the computational cost of our mechanisms, and measure their empirical utility as we vary the computational cost. 
To this end, we parameterize each of our two implemented mechanisms---LSQ-RFF and LSQ-FGT---by a single parameter that governs their computational efficiency, as follows. 
In both mechanisms, for simplicity, we use $J=1$, which means we do not perform a median operation at all. One can always increase $J$ and return the median over $J$ independent repetitions in order to boost the success probability of each individual query, at the expense of degrading $\epsilon$ (by a factor of $J$) for releasing more information in those additional repetitions.

In LSQ-RFF, we parameterize the mechanism by the number of random Fourier features the mechanism uses, which (under the setting $J=1$) coincides with the overall number of repetitions, $I$, in \Cref{alg:main}.

In LSQ-FGT, there is the added complication that the LSQ family itself has variable computational cost. In order to define the FGT, the user selects an integer parameter $\rho\geq 1$, which determines the properties of the LSQ family as follows:
\begin{proposition}
Let $\rho\geq1$ be an integer.
%Let $\alpha>0$ be smaller than a sufficiently small constant, and suppose $d=O(\log(1/\alpha))$. 
The Gaussian kernel over points contained in a Euclidean ball of radius $\Phi$ in $\R^d$ admits an $e^{-O(\rho)}$-approximate $((1+\frac{\Phi}{\sqrt{d}})\cdot\rho)^d, O(1)^d, \rho^{O(d)})$-LSQ family, supported on a single pair of functions $(f,g)$. 
Furthermore, the evaluation times of $f$ on $x\in\R^d$ and of $g$ on $y\in\R^d$ are both $(d\cdot\rho)^{O(d)}$. 
\end{proposition}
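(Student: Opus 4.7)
The statement is a reparameterization of Proposition \ref{prp:fgt}: rather than fixing a target error $\alpha$ and deriving the truncation depth $\rho = \Theta(\log(1/\alpha))$, we treat $\rho$ as an input parameter and derive the resulting error as $e^{-O(\rho)}$. The plan is therefore to reuse verbatim the construction of the single pair $(f,g)$ from Section \ref{sec:fgt} --- grid-cell indicators combined with truncated Hermite/monomial coefficients --- and to mirror the proof of Proposition \ref{prp:fgt} without the auxiliary hypothesis $d = O(\log(1/\alpha))$.

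First, I would establish the approximation guarantee by invoking the standard truncation bound of \cite{greengard1991fast} for the Hermite expansion: truncating each of the $d$ infinite sums after $\rho$ terms introduces additive error $e^{-\Omega(\rho)}$ on each pair $(x,y)$ for which the grid-cell center $z^{H_x}$ is within distance $\sqrt{\rho}$ of $y$. When this distance bound fails, $f(x)^{\!\top}\!g(y)$ is identically $0$ by construction, and one shows via the triangle inequality combined with $\|x-z^{H_x}\|_2 \le \tfrac12\sqrt{d}$ that $\|x-y\|_2^2$ is $\Omega(\rho)$, so the true kernel value $e^{-\|x-y\|^2}$ is already within the claimed $e^{-O(\rho)}$ error. (If $d$ is very large relative to $\rho$, one enlarges the threshold $\sqrt{\rho}$ by $O(\sqrt{d})$; this only changes the volume constants, not the asymptotic sparsity.)

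Next, I would bound $(Q, R, S)$ exactly as in Proposition \ref{prp:fgt}. The quantization $Q = |\mathcal{G}_\Phi| \cdot (\rho+1)^d = ((1+\tfrac{\Phi}{\sqrt{d}})\rho)^{O(d)}$ follows from a standard volume argument on grid cells intersecting a ball of radius $\Phi$. The range bound $R = O(1)^d$ comes from $|x_j - z_j^{H_x}| \le \tfrac12$ for the $f$ side and Cramer's inequality for Hermite functions (giving the $1.6^d$ factor) for the $g$ side. The sparsity bound $S = \rho^{O(d)}$ follows by multiplying the $(\rho+1)^d$ choices of the multi-index $r$ by the $O(1+\sqrt{\rho/d})^d$ grid cells $H$ within distance $\sqrt{\rho}$ of $y$ (again by a volume argument).

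Finally, the evaluation-time bound is obtained by showing each nonzero coordinate of $f(x)$ or $g(y)$ is computable in $O(d)$ time after an $O(\rho^2)$ one-time precomputation of the one-dimensional Hermite values $\{h_i(y_j - z_j^H)\}$; multiplying by the sparsity bound gives $(d\rho)^{O(d)}$. The main difference from Proposition \ref{prp:fgt} is the absence of the assumption $d \le \rho$, which means the evaluation time is stated as $(d\rho)^{O(d)}$ rather than $\rho^{O(d)}$ --- this is a purely cosmetic change since the two coincide when $d = O(\log(1/\alpha))$. The only nontrivial obstacle is verifying that the Hermite truncation error remains $e^{-O(\rho)}$ uniformly across dimensions, which is handled by the rapid factorial-type decay of Hermite coefficients together with the product structure of the expansion; no step of the argument relies essentially on the relationship between $d$ and $\rho$.
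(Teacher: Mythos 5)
Your proposal matches the paper's own treatment: the paper proves this proposition simply by declaring it a restatement of \cref{prp:fgt} parameterized by $\rho$ instead of $\alpha$, following from the same argument in \cref{sec:fgtappendix}, which is exactly the reparameterization-and-retrace you carry out. Your extra remarks on handling $d$ large relative to $\rho$ (enlarging the $\sqrt{\rho}$ threshold, tracking the $(d\rho)^{O(d)}$ evaluation time) go slightly beyond what the paper writes but do not change the approach.
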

This is just a restatement of \Cref{prp:fgt}, parameterized by $\rho$ instead of $\alpha$ (and it follows from the same proof in \Cref{sec:fgtappendix}). 
Note that as $\rho$ increases, the parameters $Q$ and $S$ of the $(Q,R,S)$-LSQ family grow with it, which increases the computational cost of the LSQ mechanism according to \Cref{lmm:efficiency}. 
The description of LSQ-FGT in \Cref{sec:fgt} sets $\rho=O(\log(1/\alpha))$ in order to prove \Cref{thm:gaulowdim}, but in practice, when $\alpha$ is not chosen in advance but measured empirically, the user needs to set $\rho$ directly. In our implementation of LSQ-FGT, we set the number of repetitions to $I=1$, and use $\rho$ as the parameter that governs the efficiency/utility trade-off.

\subsection{Experimental Details}\label{sec:expdetails}

\paragraph{Preprocessing.} All datasets are available online (download URLs are included in the bibliographic entries).
\begin{itemize}
  \item Covertype \cite{blackard1999comparative}: No preprocessing.
  \item GloVe \cite{pennington2014glove}: We use the 1M points, 100 dimensions version of the dataset. No preprocessing.
  \item Diabetes \cite{strack2014impact}: we select the ``age'' and ``time in hospital'' columns. ``time in hospital'' is between $1$ and $14$ (days). ``age'' is given as a decade-long bracket (e.g., $[40-50)$) and we replace it with its midpoint (e.g., $45$), and then divide it by $10$ to equate the numerical range of both coordinates.\footnote{This is equivalent to choosing the bandwidth as a non-scalar diagonal matrix, namely $\begin{pmatrix}
1 & 0\\
0 & 0.1
\end{pmatrix}$. Recall that the bandwidth is, in general, a $d\times d$ positive definite matrix $\Sigma$, with which the Gaussian kernel is defined as $k(x,y)=e^{-(x-y)^T\Sigma(x-y)}$.}
  \item NYC Taxi \cite{new-york-city-taxi-fare-prediction}: We select the ``pickup longitude'' and ``pickup latitude'' columns. We filter out points with ``pickup longitude'' $\notin(-74.1, -73.15)$ or ``pickup latitude'' $\notin(40.5, 40.9)$ to eliminate corrupted records (these coordinate ranges are the general geographical vicinity of NYC). We use $100,000$ of the unfiltered points.
\end{itemize}

\paragraph{Bandwidth selection.}
For each dataset we tune the bandwidth according to the guidelines in prior work \cite{jaakkola1999using,backurs2019space}. 
The values are specified in \Cref{tbl:bandwidth}. 
The bandwidth values are tuned are such that mean KDE values are on the order of $10^{-2}$ and their standard deviation is also on the order of $10^{-2}$, yielding a meaningful and non-generate KDE distribution with a range of target values. Note that the performance of the NoisySample baseline in \Cref{sec:experiments} (which returns the noisy mean of a sample of query points as the KDE estimate for any query point) corresponds to the standard deviation of KDE values in \Cref{tbl:bandwidth}. 

\begin{table*}[ht]
\caption{Bandwidth values used in experiments.}
{\renewcommand{\arraystretch}{1}
\begin{center}
\begin{tabular}{lccc}
\toprule
  Dataset &  Bandwidth $\sigma$ &  Est.~mean query KDE &  Est.~standard deviation of query KDE \\
\midrule
Covertype & $500$ & $0.02$ & $0.01$ \\
GloVe & $3.33$ & $0.01$ & $0.01$ \\
Diabetes & $1$ & $0.06$ & $0.03$ \\
NYC Taxi & $0.01$ & $0.08$ & $0.03$ \\
\bottomrule
\end{tabular}
\end{center}}
\label{tbl:bandwidth}
\end{table*}

\paragraph{Mechanism parameter selection.}
As discussed in \Cref{sec:error_progression}, DP-KDE mechanisms have an optimal parameter setting for a given combination of error $\alpha$ and privacy $\epsilon$. 
In our experiments this applies to LSQ-RFF (the parameter is the number of Fourier features), LSQ-FGT (the parameter is $\rho$, where $\rho^d$ is the number of terms in truncated Hermite expansion) and Bernstein (the parameter is denote by $k$ in \cite{alda2017bernstein}, where $(k+1)^d$ is the number of points in the lattice used to construct the Bernstein polynomial approximator, see below). 
In the error vs.~privacy experiments in \Cref{sec:error_privacy}, we evaluate each mechanism at its optimal parameter for that specific value of $\epsilon$. 
Due to the existence of the error divergence point (cf.~\Cref{sec:error_progression}), the optimal parameter setting for each algorithm exists and can be found by a finite parameter search.

For completeness, let us describe the Bernstein mechanism is somewhat more detail. It is parameterized by an integer $k\geq1$. The mechanism constructs a uniform lattice with $(k+1)^d$ nodes over the unit hypercube $[0,1]^d$. It evaluates the KDE function at each point on the lattice, adds privacy-preserving Laplace noise to these evaluations, and then uses them to construct a Bernstein polynomial approximation of this discretized and privatized version of the true KDE function. As $k$ increases, the mechanism's running time increases too, due to evaluating the KDE on each of the $(k+1)^d$ lattice points. 
Nonetheless, as shown for LSQ-RFF and LSQ-FGT in \cref{sec:error_progression}, increasing $k$ does not necessarily lead to a smaller error---rather, the error begins to diverge at a certain setting of $k$, which depends on the desried privacy parameter $\epsilon$. This happens for the same reason discussed in \cref{sec:error_progression}: as $k$ increases, the non-private approximation error of the Bernstein polynomial approximator decays (see Theorem 5 in \cite{alda2017bernstein} for the decay rate, which depends on the smoothness of the KDE function), while the magnitude of the Laplace noise increases like $(k+1)^d/(\epsilon n)$. Therefore, to achieve the optimal error for this mechanism, $k$ needs to be chosen according to the available dataset size $n$ and the desired privacy level $\epsilon$. 

\subsection{Additional Accuracy Results}
\label{sec:scatter_plots}
A more visual way to study the privacy-error trade-off of the various  DP-KDE mechanisms is by directly comparing the ground-truth KDE values on a held-out test set to the KDE values estimated by the private mechanisms for different values of $\epsilon$.
\Cref{fig:scatter_all} shows the performance of LSQ-RFF under varying privacy budgets for the high-dimensional Covertype and GloVe datasets. 
Ideally, the estimated values would all lie close to the $y=x$ line, but degradation is inevitable as $\epsilon$ decreases. 
Additionally, \Cref{fig:scatter_all} compares the performance LSQ-RFF, LSQ-FGT, and the Bernstein mechanism on the low-dimensional Taxi and Diabetes datasets under the same set of privacy budgets. In~\Cref{sec:error_privacy}, we noted that the NYC Taxi dataset poses a challenge for the Bernstein mechanism because of the dependence of sample complexity on $\alpha^{-\Theta(d/\sigma^2)}$. This difficulty manifests itself already in the non-private case, and (as expected) the mechanism output quality degrades further once noise is introduced to preserve privacy.

%A more visual way to study the privacy-error trade-off of the various  DP-KDE mechanisms is by directly comparing the ground-truth KDE values on a held-out test set to the KDE values estimated by the private mechanisms for different values of $\epsilon$.
%\Cref{fig:scatter_highd} shows the performance of LSQ-RFF under varying privacy budgets for the high-dimensional Covertype and GloVe datasets. 
%Ideally, the estimated values would all lie close to the $y=x$ line, but degradation is inevitable as $\epsilon$ decreases. 
%Additionally, in~\Cref{fig:scatter_lowd} we compare the performance LSQ-RFF, LSQ-FGT, and the Bernstein mechanism on the low-dimensional Taxi and Diabetes datasets under the same set of privacy budgets. In~\Cref{sec:error_privacy}, we noted that the NYC Taxi dataset poses a challenge for the Bernstein mechanism because of the dependence of sample complexity on $\alpha^{-\Theta(d/\sigma^2)}$. This difficulty manifests itself already in the non-private case, and (as expected) the mechanism output quality degrades further once noise is introduced to preserve privacy.

\subsection{Additional Running Time Results}\label{sec:appruntimes}
In \Cref{fig:runtime} in \Cref{sec:experiments} we plotted the error vs.~curator running time plots for all for our datasets, with $\epsilon=0.05$. 
\Cref{,fig:runtime002} below displays the same experiment with $\epsilon=0.02$. 
%\Cref{fig:runtime005,fig:runtime002} show the error vs.~curator running time plots for all for datasets, with $\epsilon=0.05$ in \Cref{fig:runtime005} and with $\epsilon=0.02$ in \Cref{fig:runtime002}.

\subsection{Heatmaps}
\label{sec:heatmaps}
A common use for Kernel Density Estimation for two-dimensional datasets is in the generation of heatmaps showing where the bulk of the samples reside. For both the NYC Taxi and the Diabetes dataset, we use LSQ-RFF and LSQ-FGT to generate differentially private heatmaps for a number of different privacy budgets.  The parameters of these algorithms (number of features for RFF, $\rho$ for FGT) are selected to match the optimal values found earlier in~\Cref{sec:error_progression}. 
Results are in \Cref{fig:heatmaps}.  In all cases, while the heatmap gets increasingly distorted as the privacy budget shrinks, certain aggregate characteristics such as the general shape of the data manifold and the approximate location of its mode remain largely preserved.

\begin{figure*}[ht]
\centering
\includegraphics[width=0.48\columnwidth]{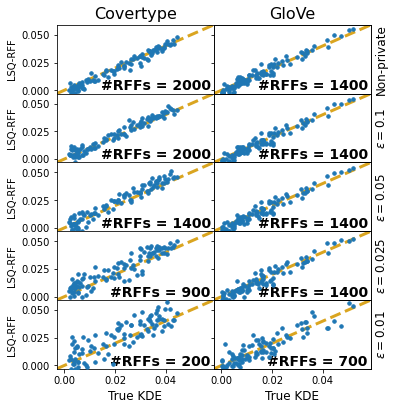}
\begin{center}
    {Covertype and GloVe datasets}
\end{center}
\vspace{20pt}
%\end{figure}
%
%\begin{figure*}[ht]
%\stepcounter{figure}
\centering
\begin{minipage}[t]{0.48\columnwidth}
\begin{minipage}[b]{\columnwidth}
    \includegraphics[width=\linewidth]{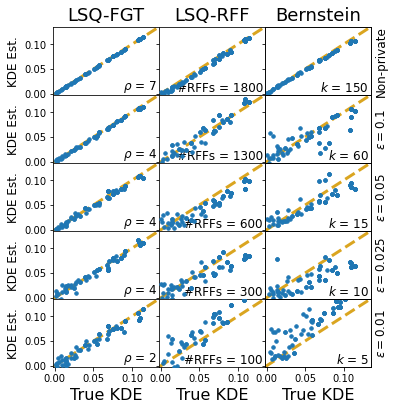}
\begin{center}
    {Diabetes dataset}
\end{center}
\end{minipage}
%\Scaption{Taxi Dataset}
%\label{sfig:testa}
\end{minipage}\hfill
\begin{minipage}[t]{0.48\columnwidth}
\begin{minipage}[b]{\columnwidth}
    \includegraphics[width=\linewidth]{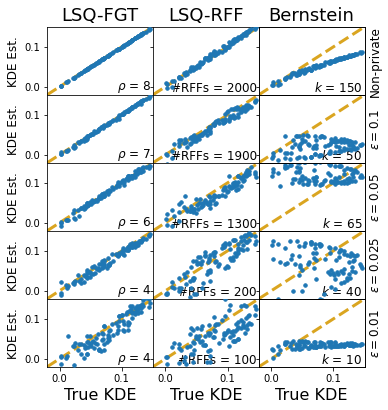}
\begin{center}
    {NYC Taxi dataset}
\end{center}
\end{minipage}
%\Scaption{Diabetes Dataset}
%\label{sfig:testb}
\end{minipage}
\caption{Ground truth vs.~private estimates}
\label{fig:scatter_all}
\end{figure*}

\begin{figure*}[ht]
%\stepcounter{figure}
\centering
\begin{minipage}[t]{\columnwidth}
\begin{minipage}[b]{\columnwidth}
\includegraphics[width=0.45\linewidth]{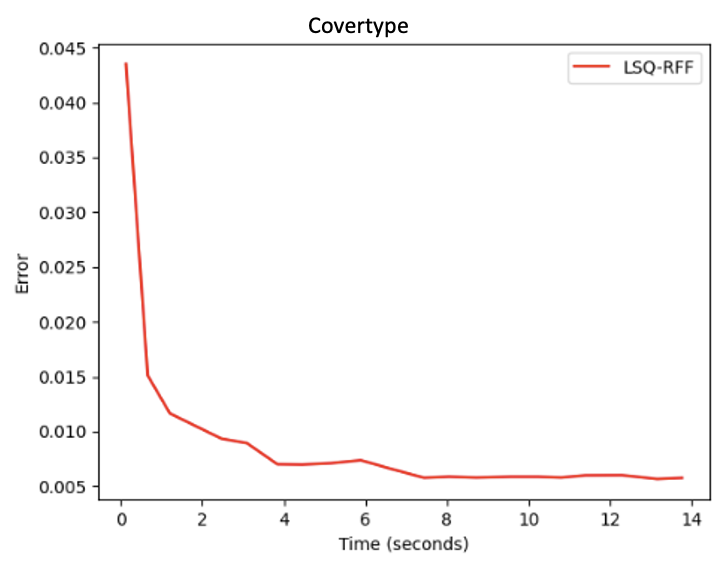}%
\includegraphics[width=0.45\linewidth]{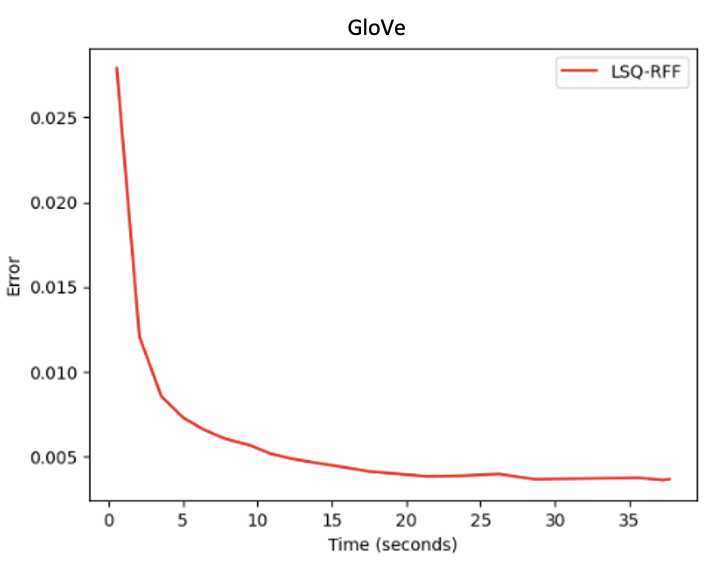}
\end{minipage}
%\Scaption{this is the caption for the right subfigure, spanning two lines}
%\label{sfig:testb}
\end{minipage}
\begin{minipage}[t]{\columnwidth}
\begin{minipage}[b]{\columnwidth}
\includegraphics[width=0.45\linewidth]{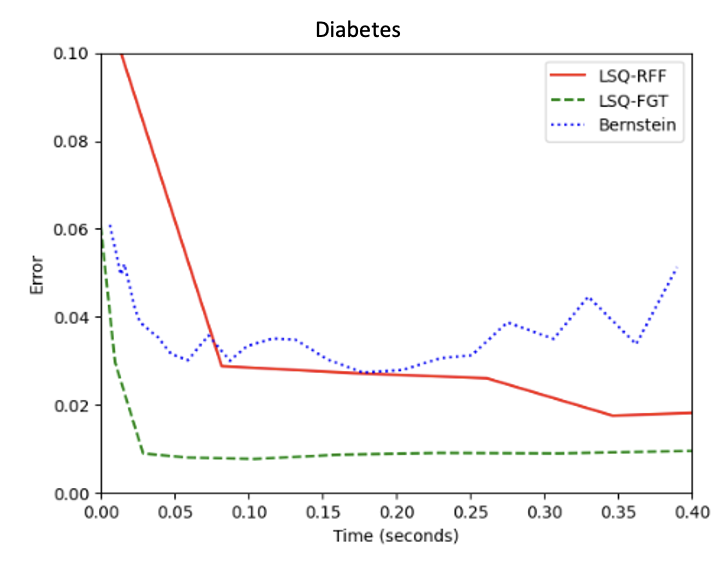}%
\includegraphics[width=0.45\linewidth]{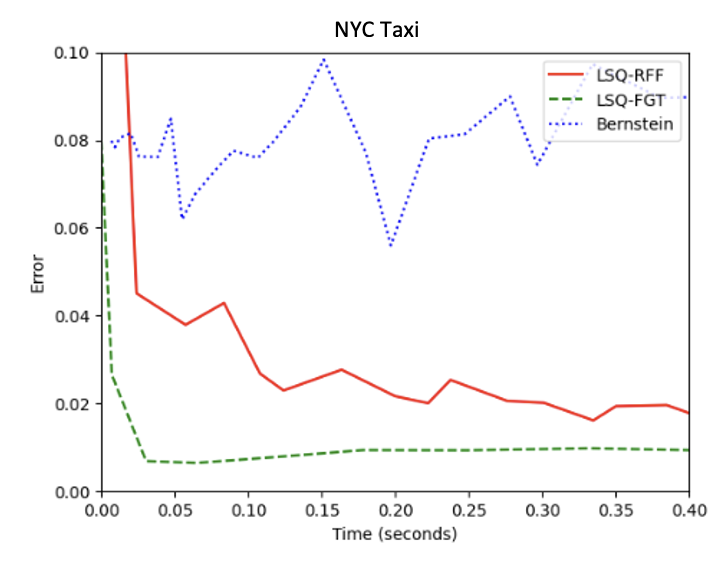}
\end{minipage}
%\Scaption{this is the caption for the right subfigure, spanning two lines}
\end{minipage}%
\vspace{-10pt}
\caption{Error vs. curator running times with $\epsilon=0.02$}
\label{fig:runtime002}
\end{figure*}

\begin{figure*}[ht]
%\stepcounter{figure}
\centering
\begin{minipage}[t]{0.48\linewidth}
\begin{minipage}[b]{\linewidth}
    \includegraphics[width=\linewidth]{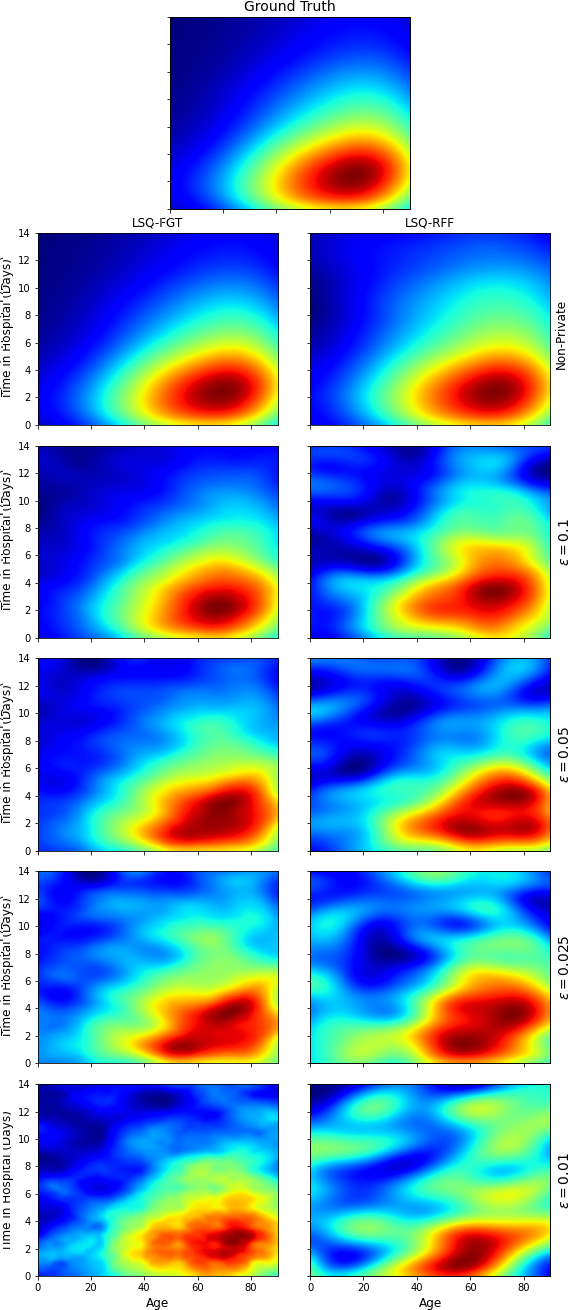}
\begin{center}
    {Diabetes dataset}
\end{center}
\end{minipage}
%\Scaption{Taxi Dataset}
%\label{sfig:testa}
\end{minipage}\hfill
\begin{minipage}[t]{0.48\linewidth}
\begin{minipage}[b]{\linewidth}
\includegraphics[width=\linewidth]{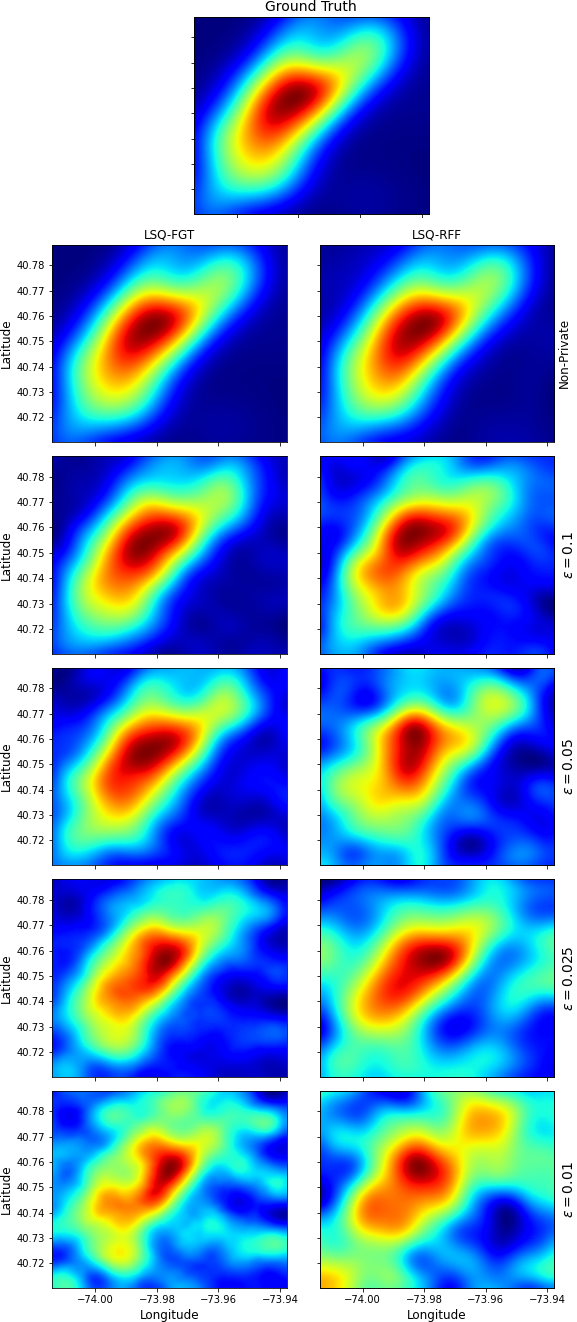}
\begin{center}
    {NYC Taxi dataset}
\end{center}
\end{minipage}
%\Scaption{Diabetes Dataset}
%\label{sfig:testb}
\end{minipage}
\caption{Impact of privacy budget on the appearance of heatmap plots}
\label{fig:heatmaps}
\end{figure*}

\end{document}